%
\documentclass[runningheads]{llncs}
\usepackage[T1]{fontenc}
%
\usepackage{graphicx}
%
\usepackage{color}
%


\usepackage[font=small,labelfont=bf]{caption}
\usepackage{amsmath,amssymb,amsfonts}
\usepackage{graphicx}
\usepackage{textcomp}
\usepackage[binary-units=true]{siunitx}
\usepackage{comment}
\usepackage{amsmath}
\usepackage{hyperref}
\usepackage{etoolbox}
\usepackage{color,colortbl}
\usepackage{mathtools}
\usepackage{enumerate}
\usepackage{algorithm}
\usepackage[noend]{algpseudocode}
\usepackage{booktabs}
\usepackage[normalem]{ulem}
\usepackage{pgf}
\usepackage{pgfplotstable}
\usepackage{tikz,pgfplots}
\usepackage{thm-restate}
\usetikzlibrary{trees,decorations,arrows,arrows.meta,automata,shadows,positioning,plotmarks,backgrounds,shapes,shapes.misc}
\usetikzlibrary{calc,matrix,fit,petri,decorations.pathmorphing,patterns}
\usetikzlibrary{decorations.pathreplacing,decorations.markings,shapes.geometric,calc}
\usetikzlibrary{tikzmark}
\usepackage{stmaryrd}
\usepackage{xspace}
\usepackage{graphicx}
\usepackage{float}
\usepackage[utf8]{inputenc} 
  \usepackage{csquotes} 
\usepackage{multirow}
\usepackage{array}
\usepackage{dsfont}
 \usepackage{array}
\usepackage{xifthen}
\usepackage{relsize}
\usepackage{xfrac}
\usepackage{wrapfig}
\usepackage{rotating}
\usepackage{paralist}

\usepackage{longtable}
\usepackage{caption}
\usepackage[position=b]{subcaption}

\usepackage{expl3}[2012-07-08]
\ExplSyntaxOn
\cs_new_eq:NN \fpeval \fp_eval:n
\ExplSyntaxOff

\colorlet{darkgreen}{green!80!black}
\colorlet{darkred}{red!80!black}
\usetikzlibrary{arrows, automata, shapes}
\tikzset{auto, >= stealth}
\tikzset{every edge/.append style={thick, shorten >= 1pt}}
\tikzset{initial/.style={draw, thick, <-, shorten <=1pt}}
\tikzset{player0/.style = {draw, thick, shape=circle, minimum size=5mm}}
\tikzset{player1/.style = {draw, thick, shape=rectangle, minimum size=5mm}}
\tikzset{bplayer0/.style = {draw, thick, shape=ellipse, minimum size=5mm,text width=1.1cm}}
\tikzset{bplayer1/.style = {draw, thick, shape=rectangle, minimum size=5mm,text width=1.6cm}}
\newcommand\pos{1.4}
\newcommand\hpos{2.5}
\newcommand\ypos{1.85}
\newcommand*\circled[1]{~\tikz[baseline=(char.base)]{\node[shape=circle,draw, text width=3mm,align=center, minimum size = 2mm, inner sep = 0pt, fill = black] (char) {\scriptsize \textcolor{white}{\textbf{#1}}}}}



\usepackage{marginnote}
\usepackage{schemata}


\usepackage[
colorinlistoftodos, 
textwidth=\marginparwidth, 
textsize=scriptsize, 
]{todonotes}






\newcommand{\putbracket}[1]{\ifx&#1&%
	
	\else
	(#1)
	\fi}

\newcommand{\pre}[2]{\textsf{pre}\ensuremath{_{#1}(#2)}}
\newcommand{\tpre}[2]{\textsf{tpre}\ensuremath{_{#1}(#2)}}
\newcommand{\cpre}[3]{\textsf{cpre}\ensuremath{^{#3}_{#1}(#2)}}
\newcommand{\cprea}[2]{\textsf{cpre}\ensuremath{^a_{#1}(#2)}}
\newcommand{\cpreo}[2]{\textsf{cpre}\ensuremath{^1_{#1}(#2)}}
\newcommand{\cprez}[2]{\textsf{cpre}\ensuremath{^0_{#1}(#2)}}
\newcommand{\attr}[3]{\textsf{attr}\ensuremath{^{#3}_{#1}(#2)}}
\newcommand{\attra}[2]{\textsf{attr}\ensuremath{^a_{#1}(#2)}}

\newcommand{\attrz}[2]{\textsf{attr}\ensuremath{^0_{#1}(#2)}}
 



\newcommand{\set}[1]{\left\lbrace #1\right\rbrace}
\newcommand{\Set}[1]{\set{#1}}
\newcommand{\tup}[1]{\left( #1\right)}

\newcommand{\inodd}{\in_{\mathrm{odd}}}
\newcommand{\ineven}{\in_{\mathrm{even}}}


\newcommand{\lang}{\mathcal{L}}


\newcommand{\p}[1]{\mathit{Player}~#1}
\newcommand{\pz}{\p{0}}
\newcommand{\po}{\p{1}}
\newcommand{\game}{\mathcal{G}}
\newcommand{\vertex}{V}
\newcommand{\vertexz}{\vertex^0}
\newcommand{\vertexo}{\vertex^1}
\newcommand{\vertexi}{\vertex^i}

\newcommand{\gamegraph}{G}
\newcommand{\spec}{\Phi}
\newcommand{\edge}{E}

\newcommand{\stratz}{\pi^0}
\newcommand{\strato}{\pi^1}
\newcommand{\strati}{\pi^i}

\newcommand{\play}{\rho}
\newcommand{\playprefix}{\mathtt{p}}

\makeatletter
\newsavebox{\@brx}
\newcommand{\llangle}[1][]{\savebox{\@brx}{\(\m@th{#1\langle}\)}%
  \mathopen{\copy\@brx\kern-0.5\wd\@brx\usebox{\@brx}}}
\newcommand{\rrangle}[1][]{\savebox{\@brx}{\(\m@th{#1\rangle}\)}%
  \mathclose{\copy\@brx\kern-0.5\wd\@brx\usebox{\@brx}}}

\makeatother
\newcommand{\team}[1]{\llangle #1\rrangle}
\newcommand{\src}{\mathit{src}}

\newcommand{\front}{\mathit{front}}

\renewcommand{\game}{\ensuremath{\mathcal{G}}}

\newcommand{\N}{\mathbb{N}}

\newcommand{\bigO}{\mathcal{O}}

\newcommand{\tqed}{\hfill{ $\triangleleft$ }}

\newcommand{\mucal}{$ \mu $-calculus\xspace}
\newcommand{\buchi}{\ifmmode B\ddot{u}chi \else B\"uchi \fi}
\newcommand{\cobuchi}{\ifmmode co\text{-}B\ddot{u}chi \else co-B\"uchi \fi}

\newcommand{\assump}{\ensuremath{\Psi}}
\newcommand{\assumpsafe}{\ensuremath{\assump_{\textsc{unsafe}}}}

\newcommand{\assumpgrlive}{\ensuremath{\assump_{\textsc{live}}}}
\newcommand{\assumpcondlive}{\ensuremath{\assump_{\textsc{cond}}}}
\newcommand{\assumpdep}{\ensuremath{\assump_{\textsc{colive}}}}

\newcommand{\livegroup}{H^\ell}
\newcommand{\livegroupSingle}{H_i}
\newcommand{\livegroupSingleN}{H}
\newcommand{\colivegroup}{D}
\newcommand{\safegroup}{S}
\newcommand{\condlivegroup}{\mathcal{H}^\ell}

\newcommand{\computeLive}{\textsc{LiveA}}
\newcommand{\computeCoLive}{\textsc{CoLiveA}}
\newcommand{\computeSafe}{\textsc{UnsafeA}}
\newcommand{\parityAssump}{\textsc{ParityAssumption}}

\newcommand{\paritygame}{\mathit{Parity}}
\newcommand{\buchigame}{\mathit{B\ddot{u}chi}}
\newcommand{\cobuchigame}{\mathit{co\text{-}B\ddot{u}chi}}
\newcommand{\safetygame}{\mathit{Safety}}

\newcommand{\solveParity}{\textsc{Parity}}
\newcommand{\solveBuchi}{\textsc{B\"uchi}}
\newcommand{\solveCobuchi}{\textsc{CoB\"uchi}}
\newcommand{\TsolveBuchi}{\textsc{TB\"uchi}}
\newcommand{\TsolveCobuchi}{\textsc{TCoB\"uchi}}
\newcommand{\solveSafety}{\textsc{Safety}}

\newcommand{\timeout}{Timeout}
\newcommand{\name}[1]{#1}
\newcommand{\tool}{\textsc{SImPA}\xspace}
\newcommand{\aname}{APA\xspace}
\newcommand{\anames}{APAs\xspace}
\newcommand{\krishtool}{\mathrm{GIST}}
\newcommand{\sepcomma}[1]{\tablenum[group-separator={,},table-format=9.0]{#1}}
\newcommand{\secformat}[1]{\tablenum[table-format=3.2]{#1}}


\begin{document}
\title{Computing Adequately Permissive Assumptions for Synthesis}
%
 \author{Ashwani Anand\inst{1} \and Kaushik~Mallik\inst{2} \and Satya Prakash Nayak\inst{1} \and \\ Anne-Kathrin Schmuck\inst{1}}

 \authorrunning{A. Anand, K. Mallik, S. P. Nayak, and A. Schmuck}
 %
 \institute{Max Planck Institute for Software Systems, Kaiserslautern, Germany\\
 \email{\{ashwani,sanayak,akschmuck\}@mpi-sws.org}\\
 \url{} \and
 Institute of Science and Technology Austria, Klosterneuburg, Austria\\
 \email{kaushik.mallik@ist.ac.at}\\
 \url{}}
%
\maketitle              
\begin{abstract}
We solve the problem of automatically computing a new class of environment assumptions in two-player turn-based finite graph games which characterize an ``adequate cooperation'' needed from the environment to allow the system player to win.
Given an $\omega$-regular winning condition $\Phi$ for the system player, we compute an $\omega$-regular assumption $\Psi$ for the environment player, such that (i) every environment strategy compliant with $\Psi$ allows the system to fulfill $\Phi$ (sufficiency), (ii) $\Psi$ can be fulfilled by the environment for every strategy of the system (implementability), and (iii) $\Psi$ does not prevent any cooperative strategy choice (permissiveness).

For parity games, which are canonical representations of $\omega$-regular games, we present a polynomial-time algorithm for the symbolic computation of \emph{adequately permissive assumptions} and show that our algorithm runs faster and produces better assumptions than existing approaches---both theoretically and empirically. To the best of our knowledge, for \emph{$\omega$-regular} games, we provide the first algorithm to compute sufficient and implementable environment assumptions that are also \emph{permissive}.

\keywords{Synthesis \and
Two-player Games \and
Parity Games \and
Buchi Games \and
co-Buchi Games \and
Symbolic Algorithms.}
\end{abstract}
%
%
%

\section{Introduction}\label{section:intro}

Two-player $\omega$-regular games on finite graphs are the core algorithmic components in many important problems of computer science and cyber-physical system design. 
Examples include the synthesis of programs which react to environment inputs, modal $\mu$-calculus model checking, correct-by-design controller synthesis for cyber-physical systems, and supervisory control of autonomous systems. 

These problems can be ultimately reduced to an abstract two-player game between an \emph{environment player} and a \emph{system player}, 
respectively capturing the external unpredictable influences and the system under design, while the game captures the non-trivial interplay between these two parts.
A \emph{solution of the game} is a set of decisions the system player needs to make to satisfy a given $\omega$-regular temporal property over the states of the game,
 which is then used to design the sought system or its controller.

Traditionally, two-player games over graphs are solved in a zero-sum fashion, i.e., assuming that the environment will behave arbitrarily and possibly adversarially. 
Although this approach results in robust system designs, it usually makes the environment too powerful to allow an implementation for the system to exist.  However in reality, many of the outlined application areas actually account for some cooperation of system components, especially if they are co-designed. In this scenario it is useful to understand how the environment (i.e., other processes) needs to cooperate to allow for an implementation to exist. This can be formalized by environment assumptions, which are $\omega$-regular temporal properties that restrict the moves of the environment player in a synthesis game. 
Such assumptions can then be used as additional specifications in other components' synthesis problems to enforce the necessary cooperation (possibly in addition to other local requirements) or can be used to verify existing implementations. 

For the reasons outlined above, the automatic computation of assumptions has received significant attention in the reactive synthesis community.
It has been used in two-player games~\cite{ChatterjeeHenzingerJobstmann:EnvironmentAssumptionsforSynthesis,cavezza2020minimal}, both in the context of monolithic system design~\cite{chatterjee2010obliging,majumdar2019environmentally} as well as distributed system design~\cite{majumdar2020assume,finkbeiner2022information}.

All these works emphasize two desired properties of assumptions. They should be (i) \emph{sufficient}, i.e., enable the system player to win if the environment obeys its assumption and (ii) \emph{implementable}, i.e., prevent the system player to falsify the assumption and thereby vacuously win the game by not even respecting the original specification. 
In this paper, we claim that there is an important third property---termed \emph{permissiveness}---which is needed when computed assumptions are used for distributed synthesis. An assumption is \emph{permissive} if it retains all cooperatively winning plays in the game. This notion is crucial in the setting of distributed synthesis, as here assumptions are generated \emph{before} the implementation of every component is fixed. Therefore, assumptions need to retain \emph{all} feasible ways of cooperation to allow for a distributed implementation to be discovered in a decentralized manner.

While the class of assumptions considered in this paper is motivated by their use for distributed synthesis, this paper focuses only on their formalization and computation, i.e., given a two-player game over a finite graph and an $\omega$-regular winning condition $\Phi$ for the system player, we automatically compute an \emph{adequately permissive $\omega$-regular assumption} $\Psi$ for the environment player that formalizes the above intuition by being (i) sufficient, (ii) implementable, and (iii) permissive. 
The main observation that we exploit is that such \emph{adequately permissive assumptions} (\aname for short) can be constructed from three simple templates which can be directly extracted from a cooperative synthesis game leading to a polynomial-time algorithm for their computation.  
By observing page constrains, we postpone the very interesting but largely orthogonal problem of contract-based distributed synthesis using \anames to future work.

To appreciate the simplicity of the assumption templates we use, consider the 
game graphs depicted in Fig.~\ref{fig:illustration_intro} where the system and the environment player control the circle and square vertices, respectively. Given the specification $\Phi=\Diamond\Box \{p\}$ (which requires the play to eventually only see vertex $p$), the system player can win the game in Fig.~\ref{fig:illustration_intro} (a) by requiring the environment to fully disable edge $e_1$. This introduces the first template type---a \emph{safety template}---on $e_1$. On the other hand, the game in  Fig.~\ref{fig:illustration_intro} (b) only requires that $e_1$ is taken finitely often. This is captured by our second template type---a \emph{co-liveness template}---on $e_1$. Finally, consider the game in Fig.~\ref{fig:illustration_intro} (c) with the specification $\Phi=\Box\Diamond \{p\}$, i.e.\ vertex $p$ should be seen infinitely often. Here, the system player wins if whenever the source vertices of edges $e_1$ and $e_2$ are seen infinitely often, also one of these edges is taken infinitely often. This is captured by our third template type---a \emph{live group template}---on the edge-group $\{e_1, e_2\}$.

\begin{figure}[t]
	\centering
		\begin{subfigure}{.20\textwidth}
		  \centering
		  \begin{tikzpicture}
		  \node[] (duma) at (-0.3,1.4) {\textbf{(a)}};
		  		\node[player1] (0) at (0, 0) {$p$};
		  		\node[player0] (1) at (\fpeval{\pos}, 0) {$q$};
		  		
		  		\path[->] (0) edge[loop above] () edge node{$e_1$} (1);
		  		\path[->] (1) edge[loop above] ();
		  \end{tikzpicture}
		\end{subfigure}
				\hspace*{0.05\textwidth}
	\begin{subfigure}{.20\textwidth}
		  \centering
		  \begin{tikzpicture}
		  \node[] (dumc) at (-0.3,1.4) {\textbf{(b)}};
		  \node[] (dumd) at (2.3,1.4) {\textbf{(c)}};
		  		\node[player1] (0) at (0, 0) {$p$};
		  		\node[player0] (1) at (\fpeval{\pos}, 0) {$q$};
		  		
		  		\path[->] (0) edge[loop above] () edge[bend left = 20] node{$e_1$} (1);
		  		\path[->] (1) edge[bend left = 20] (0);
		  \end{tikzpicture}
		\end{subfigure}
		\hspace*{0.05\textwidth}
		\begin{subfigure}{.25\textwidth}
		  \centering
		  \begin{tikzpicture}
		  		\node[player0] (0) at (0, 0) {$p$};
		  		\node[player1] (1) at (-\fpeval{0.75*\pos}, -\pos) {$q$};
				\node[player1] (2) at (\fpeval{0.75*\pos}, -\pos) {$r$};
						  		
		  		\path[->] (0) edge[bend left =20] (1) edge[bend right=20] (2);
		  		\path[->] (1) edge[loop left] () edge[bend left =20] node{$e_1$} (0) edge[bend left=20] (2);
				\path[->] (2) 
									edge[loop right] () 
									edge[bend right =20] node[right]{$e_2$} (0) edge[bend left=20] (1);
		  \end{tikzpicture}
		\end{subfigure}
		\caption{Game graphs with environment (squares) and system (circles) vertices.}
		\label{fig:illustration_intro}
		\vspace{-0.5cm}
	\end{figure}
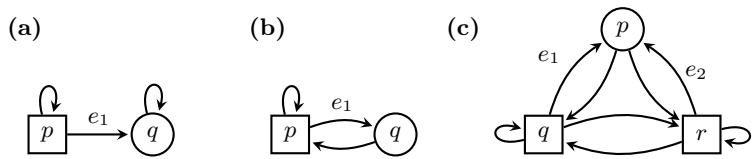
	
\smallskip
\noindent\textbf{Contribution.}
The main contribution of this paper is to show that \anames can always be composed from the three outlined assumption templates and can be computed in polynomial time.
 
Using a set of benchmark examples taken from SYNTCOMP~\cite{syntcomp} and a prototype implementation of our algorithm in our new tool $\tool$, we empirically show that our algorithm is both faster and produces more desirable solutions than existing approaches.
In addition, we apply $\tool$ to the well known 2-client arbiter synthesis benchmark from~\cite{NirGR1}, which is known to only allow for an implementation of the arbiter if the clients' moves are suitably restricted. We show that applying $\tool$ to the unconstrained arbiter synthesis problem yields assumptions on the clients which are less restrictive but conceptually similar to the ones typically used in the literature. 

\smallskip
\noindent\textbf{Related Work.}
The problem of automatically computing environment assumptions for synthesis was already addressed by Chatterjee et al.~\cite{ChatterjeeHenzingerJobstmann:EnvironmentAssumptionsforSynthesis}. However, their class of assumptions does in general not allow to construct \emph{permissive} assumptions. Further, computing their assumptions is an NP-hard problem,
while our algorithm computes \anames in $\mathcal{O}(n^4)$-time for a parity game with $n$ vertices.
The difference in the complexity arises because Chatterjee et al. require minimality of the assumptions. On the other hand, we trade minimality for permissiveness which allows us to utilize cooperative games, which are easier to solve.

When considering cooperative solutions of non-zerosum games, related works either fix strategies for both players~\cite{ChatterjeeHenzinger_2007,RationalSynthesis_2010}, assume a particularly rational behavior of the environment~\cite{BrenguierRaskinSankur_2017} or restrict themselves to safety assumptions~\cite{majumdar2020assume}. In contrast, we do not make any assumption on how the environment chooses its strategy. 
Finally, in the context of specification-repair in zerosum games multiple automated methods for repairing environment models exist, e.g.~\cite{schmelter2017toward,gaaloul2021combining,gaaloul2020mining,maoz2019symbolic,ChatterjeeHenzingerJobstmann:EnvironmentAssumptionsforSynthesis}.
Unfortunately, all of these methods fail to provide permissive repairs.
A recent work by Cavezza et al.~\cite{cavezza2020minimal} computes a minimally restrictive set of assumptions but only for GR(1) specifications, which are a strict subclass of the problem considered in our work.
To the best of our knowledge, we propose the first fully automated algorithm for computing \emph{permissive} assumptions for general $\omega$-regular games.

\section{Preliminaries}\label{section:prelims}

\noindent\textbf{Notation.}
We use $\mathbb{N}$ to denote the set of natural numbers including zero.
Given two natural numbers $a,b\in\mathbb{N}$ with $a<b$, we use $[a;b]$ to denote the set $\set{n\in\mathbb{N} \mid a\leq n\leq b}$.
For any given set $[a;b]$, we write $i\ineven [a;b]$ and $i\inodd [a;b]$ as short hand for $i\in [a;b]\cap \set{0,2,4,\ldots}$ and $i\in [a;b]\cap \set{1,3,5,\ldots}$ respectively.
Given two sets $A$ and $B$, a relation $R\subseteq A\times B$, and an element $a\in A$, we write $R(a)$ to denote the set $\set{b\in B\mid (a,b)\in R}$.

\smallskip
\noindent\textbf{Languages.}
Let $\Sigma$ be a finite alphabet.
The notations $\Sigma^*$ and $\Sigma^\omega$ denote the set of finite and infinite words over $\Sigma$, respectively, and $\Sigma^\infty$ is equal to $\Sigma^*\cup \Sigma^\omega$.
For any word $w\in \Sigma^\infty$, $w_i$ denotes the $i$-th symbol in $w$.
Given two words $u\in \Sigma^*$ and $v\in \Sigma^\infty$, the concatenation of $u$ and $v$ is written as the word $uv$.

\smallskip
\noindent\textbf{Game graphs.}
A \emph{game graph} is a tuple $\gamegraph= \tup{V,V^0,V^1,E}$ where $(V,E)$ is a finite directed graph with \emph{vertices} $ V $ and \emph{edges} $ E $, and 
$ \vertexz, \vertexo\subseteq V$ form a partition of $V$ (i.e.\ $\vertexz\cap\vertexo = \emptyset$ and $\vertexz\cup\vertexo = V$). Without loss of generality, we assume that for every $v\in V$ there exists $v'\in V$ s.t.\ $(v,v')\in E$. For the purpose of this paper, the \emph{system} and the \emph{environment} players will be denoted by $ \p{0} $ and $ \p{1} $, respectively.
A \emph{play} originating at a vertex $v_0$ is a finite or infinite sequence of vertices $\rho=v_0v_1\ldots \in V^\infty$. 
A  \emph{play prefix} $\playprefix = v_0v_1\cdots v_k$ is a finite play.

\smallskip
\noindent\textbf{Winning conditions.}
Given a game graph $\gamegraph$, we consider winning conditions specified using a formula $\Phi$ in \emph{linear temporal logic} (LTL) over the vertex set $V$, that is, we consider LTL formulas whose atomic propositions are sets of vertices $V$. 
In this case the set of desired infinite plays is given by the semantics of $\spec$ over $\gamegraph$, which is an $\omega$-regular language $\lang(\spec)\subseteq V^\omega$. 
Every game graph with an arbitrary $\omega$-regular set of desired infinite plays can be reduced to a game graph (possibly with an extended set of vertices) with an LTL winning condition, as above. 
The standard definitions of $\omega$-regular languages and LTL are omitted for brevity and can be found in standard textbooks~\cite{baier2008principles}.

\smallskip
\noindent\textbf{Games and strategies.}
A \emph{two-player (turn-based) game} is a pair $\game=\tup{\gamegraph,\spec}$ where $G $ is a game graph and 
$ \spec $ is a \emph{winning condition} over $\gamegraph$.
A strategy of $\p{i},~i\in\{0,1\}$, is a partial function $\strati\colon \vertex^*\vertexi\to \vertex$ such that for every $\playprefix v \in \vertex^*\vertexi$ for which $\pi$ is defined, it holds that $\strati(\playprefix v)\in \edge(v)$.
Given a strategy $\strati$, we say that the play $\play=v_0v_1\ldots$ is \emph{compliant} with $\strati$ if $v_{k-1}\in \vertexi$ implies $v_{k} = \strati(v_0\ldots v_{k-1})$ for all $k\in dom(\play)$ and,  $\play$ is a finite play $\playprefix$ only if $\strati(\playprefix)$ is undefined.
We refer to a play compliant with $\strati$ and a play compliant with both $\stratz$ and $\strato$ as a \emph{$ \strati $-play} and a \emph{$ \stratz\strato $-play}, respectively. 
We collect all plays compliant with $\strati$, and compliant with both $\stratz$ and $\strato$ in the sets $\lang(\strati)$ and $\lang(\stratz\strato)$, respectively. 

\smallskip
\noindent\textbf{Winning.}
Given a game $\game=(\gamegraph,\spec)$, a strategy $\strati$ is (surely) \emph{winning for $\p{i}$} if $\lang(\strati)\subseteq\lang(\spec)$, i.e., a $\pz$ strategy $\stratz$ is winning if \emph{for every} $\po$ strategy $\strato$ it holds that $\lang(\stratz\strato)\subseteq\lang(\spec)$. 
Similarly, a fixed strategy profile $(\stratz,\strato)$ is \emph{cooperatively winning} if $\lang(\stratz\strato)\subseteq\lang(\spec)$. We say that a vertex $v\in V$ is  \emph{winning for $\p{i}$} (resp. \emph{cooperatively winning}) if there exists a winning strategy $\strati$  (resp. a cooperatively winning strategy profile $(\stratz,\strato)$) s.t.\ $\strati(v)$ is defined. We collect all winning vertices of $\p{i}$ in the \emph{$\p{i}$ winning region} $\team{i}\spec\subseteq V$ and all cooperatively winning vertices in the \emph{cooperative winning region} $\team{0,1}\spec$. We note that $\team{i}\spec\subseteq\team{0,1}\spec$ for both $i\in\{0,1\}$. 

\section{Adequately Permissive Assumptions for Synthesis}\label{section:maximally permissive assumptions}
Given a two-player game $\game$, the goal of this paper is to compute assumptions on $\po$ (i.e., the environment), such that both players cooperate \emph{just enough} to fulfill $\spec$ while retaining all possible cooperative strategy choices.
Towards a formalization of this intuition, we define winning under assumptions. 

\begin{definition}\label{def:winundera}
 Let $\game=((V,\vertexz,\vertexo,E),\spec)$ be a game and $\assump$ be an LTL formula over $V$. Then a $\pz$ strategy $\stratz$ is winning in $\game$ under assumption $\assump$, if \emph{for every} $\po$ strategy $\strato$ s.t.\ $\lang(\strato)\subseteq\lang(\assump)$ it holds that $\lang(\stratz\strato)\subseteq\lang(\spec)$. We denote by $\team{0}_\assump\spec$ the set of vertices from which such a $\pz$ strategy exists.
\end{definition}

We see that the assumption $\assump$ introduced in  Def.~\ref{def:winundera} \emph{weakens} the strategy choices of the environment player ($\po$). 
We call assumptions \emph{sufficient} if this weakening is strong enough to allow  $\pz$ to win from every vertex in the cooperative winning region. 

\begin{definition}\label{def:sufficient}
 An assumption $\assump$ is \emph{sufficient} for $(\gamegraph,\spec)$ if $\team{0}_\assump\spec \supseteq \team{0,1}\spec$.
\end{definition}

Unfortunately, sufficient assumptions can be abused to change the given synthesis problem in an unintended way. 
Consider for instance the game in Fig.~\ref{fig:moreexamples} (left) with $\spec=\Box\lozenge\{v_0\}$ and $\assump = \square\lozenge e_1$. Here, there is no strategy $\strato$ for $\po$ such that $\lang(\strato)\subseteq\lang(\assump)$ as the system can always falsify the assumption by simply not choosing $e_1$ infinitely often in $v_1$. Therefore, any $\pz$ strategy is winning under assumption even if $\spec$ is violated.
The assumption $\assump$, however, is trivially sufficient, as
$\team{0}_\assump\spec=V$.
In order to prevent sufficient assumptions to be falsifiable and thereby enabling vacuous winning, we define the notion of \emph{implementability}, which ensures that $\assump$ solely restricts $\po$ moves.

\begin{definition}\label{def:implementable}
  An assumption $\assump$ is \emph{implementable} for $(\gamegraph,\spec)$ if $\team{1}\assump = V$.
\end{definition}

An assumption which is sufficient and implementable ensures that the cooperative winning region of the original game coincides with the winning region under that assumption, i.e., $\team{0}_\assump\spec =\team{0,1}\spec$. 
However, it does not yet ensure that all cooperative strategy choices of both players are retained, which is ensured by the notion of \emph{permissiveness}. 

\begin{definition}\label{def:permissive}
  An assumption $\assump$ is \emph{permissive} for $(\gamegraph,\spec)$ if $\lang(\spec)\subseteq \lang(\assump)$.
\end{definition}

This notion of permissiveness is motivated by the intended use of assumptions for compositional synthesis. In the simplest scenario of two interacting processes, two synthesis tasks---one for each process---are considered in parallel. Here, generated assumptions in one synthesis task are used as additional specifications in the other synthesis problem. Therefore,  permissiveness is crucial to not \enquote{skip} over possible cooperative solutions---each synthesis task needs to keep all allowed strategy choices for both players intact to allow for compositional reasoning. This scenario is illustrated in the following example to motivate the considered class of assumptions. Formalizing assumption-based compositional synthesis in general is however out of the scope of this paper. 

\begin{example}\label{example:permissive}
Consider the (non-zerosum) two-player game in Fig.~\ref{fig:moreexamples} (middle) with two different specifications for both players, namely $\Phi_0=\lozenge\Box\{v_1,v_2\}$ and $\Phi_1=\lozenge\Box\{v_1\}$. Now consider two candidate assumptions $\assump_0 = \lozenge\Box \neg e_1$ and $\assump_0' = (\Box\lozenge v_1 \implies\Box\lozenge e_2)$ on $\po$.
Notice that both assumptions are sufficient and implementable for $(\gamegraph, \spec_0)$. 
However, $\assump_0'$ does not allow the play $\{v_1\}^\omega$  and hence is not permissive whereas $\assump_0$ is permissive for $(\gamegraph, \spec_0)$. As a consequence, there is no way $\po$ can satisfy both her objective $\spec_1$ and the assumption $\assump_0'$ even if $\pz$ cooperates, since $\lang(\spec_1) \cap \lang(\assump_0') = \emptyset$. 
 However, under the assumption $\assump_0$ on $\po$ and assumption $\assump_1 =  \lozenge\Box \neg e_3$ on $\pz$ (which is sufficient and implementable for $(\gamegraph, \spec_1)$ if we interchange the vertices of the players), they can satisfy both their own objectives and the assumptions on themselves. Therefore, they can collectively satisfy both their objectives.
\end{example}

\begin{figure}[t]
\vspace{-0.5cm}
		  \centering
		  \begin{tikzpicture}
		  		\node[player0] (0) at (0, 0) {$v_0$};
		  		\node[player1] (1) at (-\fpeval{\pos}, 0) {$v_1$};
				\node[player0] (2) at (\fpeval{\pos}, 0) {$v_2$};
						  		
		  		\path[->] (0) edge (2);
		  		\path[->] (1) edge[loop left] () edge node{$e_1$} (0);
				\path[->] (2) edge[loop right] ()  edge[bend left = 40] (1);
		  \end{tikzpicture}
\hspace{0.2cm}
 \begin{tikzpicture}
	  		\node[player1] (0) at (0, 0) {$v_0$};
	  		\node[player1] (1) at (\fpeval{\pos}, 0) {$v_1$};
			\node[player0] (2) at (2*\fpeval{\pos}, 0) {$v_2$};
					  		
	  		\path[->] (0) edge[bend left = 20] (1);
	  		\path[->] (1) edge[loop above] () edge[bend left = 20] node{$e_1$} (0) edge[bend left = 20] node{$e_2$} (2);
			\path[->] (2) edge[loop above] node{$e_3$} ()  edge[bend left = 20] (1);
 \end{tikzpicture}
 \hspace{0.2cm}
	\begin{tikzpicture}
		\node[player0, double] (0) at (0, 0) {$v_0$};
		\node[player0] (1) at (-\fpeval{\pos}, 0) {$v_1$};
		\node[player1] (2) at (\fpeval{\pos}, 0) {$v_2$};
		
		\path[->] (0) edge[bend right=20] (2) ;
		\path[->] (1) edge (0);
		\path[->] (2)  edge[bend left = 40] (1) edge[bend right=20] node[above]{$e_1$} (0);
	\end{tikzpicture}
\caption{Two-player games with $\po$ (squares) and $\pz$ (circles) vertices.}\label{fig:moreexamples}
\vspace{-0.5cm}
\end{figure}
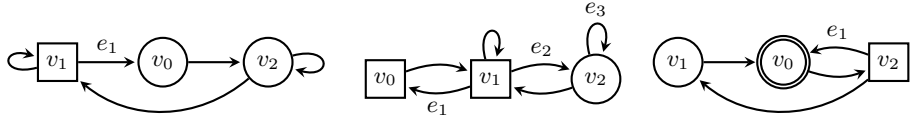

\begin{remark}
We remark that for Ex.~\ref{example:permissive}, the algorithm in~\cite{chatterjee2010gist} outputs $\assump_0'$ as the desired assumption for game $(\gamegraph, \spec_0)$ and their used assumption formalism is not rich enough to capture assumption $\assump_0$. This shows that the assumption type we are interested in is not computable by the algorithm from~\cite{chatterjee2010gist}.
\end{remark}

\begin{definition}
 An assumption $\assump$ is called \emph{adequately permissive} (an APA for short) for $(\gamegraph,\spec)$ if it is sufficient, implementable and permissive.
\end{definition}

\subsection{Discussion on Definition~\ref{def:winundera}}

We first note some simple but interesting consequences of Def.~\ref{def:winundera}. First, we have \emph{anti-monotonicity}, i.e, if assumption $ \assump_1 $ is stronger than assumption $ \assump_2 $ (in terms of play inclusion), and $ \stratz $ is winning under $ \assump_2 $, then it is also winning under $ \assump_1 $. As a direct consequence of this observation, we also have \emph{conjunctivity}, i.e., if $ \stratz $ is winning under $ \assump_1 $ and $ \stratz $ is winning under $ \assump_2 $, then $ \stratz $ is winning under $ \assump_1\wedge\assump_2 $. Interestingly, however, Def.~\ref{def:winundera} does not allow for \emph{disjunctivity}, i.e., if $ \stratz $ is winning under $ \assump_1 $ and $ \stratz $ is winning under $ \assump_2 $, then it need \emph{not} be winning under $ \assump_1\vee\assump_2 $. This last observation is illustrated by the following example. 
	\begin{example}\label{example:disjunction}
	Consider the game graph in Fig.~\ref{fig:Definition1disjunction} with the specification $ \spec = \lozenge\square \{a\} $ (which requires the play to eventually only see vertex $a$). Then consider the assumptions $ \assump_1 = \neg e_0 \mathcal{U} X e_1 $ (when edge $e_0$ is taken for the first time, the next edge should be $e_1$) and $ \assump_2=e_0 \mathcal{U} X e_2 $ (when edge $e_0$ is taken for the first time, the next edge should be $e_2$). Notice that there is only one $ \po $ strategy $\strato$, i.e., the one that never uses edge $e_0$, satisfying either assumption. So, any play compliant with $\strato$ eventually only visits vertex~$a$, and hence, is winning. Therefore, any $ \pz $ strategy is winning under either assumption. In particular, consider the strategy $ \stratz $ that only uses edge $e_1$. Then $\stratz$ is winning under assumption $\assump_i$ for each $i$. However, $\stratz$ is not winning under $\assump:=\assump_1\vee\assump_2 \equiv \mathtt{true}$. To see this, note that assumption $\assump$ can be satisfied by any $ \po $ strategy, in particular, the strategy $\widetilde{\pi}^1$ that always uses $e_0$ from state $a$. It is easy to see that the combination of $\widetilde{\pi}^1$ with $\stratz$ yields the play $(abc)^\omega$ that satisfies $\assump$ but not $\spec$. Hence, $\stratz$ is not winning under assumption $\assump:=\assump_1\vee\assump_2$.

	\end{example}
			\begin{figure}
			\centering
			\begin{subfigure}{.25\textwidth}
				\begin{tikzpicture}
					\node[player1] (0) at (-1.5, 0) {$a$};
					\node[player0] (1) at (0,0) {$b$};
					\node[player1] (2) at (1.5,0) {$c$};
					
					\path[->] (0) edge[loop left] () edge[bend left = 20] node{$ e_0 $} (1);
					\path[->] (1) edge[bend left=20] node{$e_2$} (0) edge node{$ e_1 $} (2);
					\path[->] (2) edge[bend left=50] (0);
				\end{tikzpicture}
			\end{subfigure}
			\caption{Example game graph illustrating non-disjunctivity of winning under assumption, as explained in Ex.~\ref{example:disjunction}.}
			\label{fig:Definition1disjunction}
			\vspace{-0.5cm}
		\end{figure}
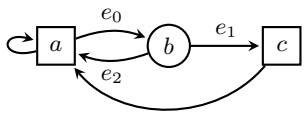

In addition, we want to remark that Def.~\ref{def:winundera} slightly differs from the typical linear-time synthesis setting, where winning under assumption would be naturally defined in terms of plays instead of strategies. We therefore want to briefly discuss this setting and give some intuition why it coincides with our definition of winning for the special type of assumptions we compute.
	
	We start by giving an alternative formulation of Def.~\ref{def:winundera} in terms of plays.
	\begin{definition}\label{def:winundera:alt}
		Let $\game=((V,\vertexz,\vertexo,E),\spec)$ be a game and $\assump$ be an LTL formula over $V$. Then a $\pz$ strategy $\stratz$ is winning in $\game$ under assumption $\assump$, if every play $ \play\in \lang(\stratz) $ either fails to satisfy the assumption $ \assump $ or satisfies the specification $ \spec $.
	\end{definition}
	It is easy to observe that a strategy $ \stratz $ that is winning under assumption by Def.~\ref{def:winundera:alt} is also winning under assumption by Def.~\ref{def:winundera}. However, the other direction is not true in general, as shown by the following example.
	\begin{example}\label{example:def1_def6}
		Consider the same game as in  Example~\ref{example:disjunction}, i.e., the game in Fig.~\ref{fig:Definition1disjunction} with specification $ \spec = \lozenge\square \{a\} $. Also, consider the assumption $ \assump_1 = \neg e_0 \mathcal{U} X e_1 $ as in Example~\ref{example:disjunction}. Then by the same arguments as before, the strategy $\stratz$, which only uses edge $e_1$, is winning under $\assump_1$ by Def.~\ref{def:winundera}.  However, note that the play $(abc)^\omega$ is compliant with $\stratz$ and satisfies $\assump_1$ but does not satisfy $\spec$. Hence, $\stratz$ is not winning under $\assump_1$ by Def.~\ref{def:winundera:alt}.
	\end{example}
	
	Interestingly, the class of assumptions we compute in this paper does not allow for examples of the sort presented above. Intuitively, this is due to the fact that these assumptions are \emph{implementable} by $\po$ and realized by a combination of very local templates. These assumptions can therefore be enforced by only restricting the moves of $\po$. This implies that for any play $\rho$ that complies with the assumption and $\stratz$, there does exist a strategy $\strato$ satisfying the assumption, which results in $\rho$. Hence, any strategy $ \stratz $ which is winning under assumption for by Def.~\ref{def:winundera} is also winning under assumption by Def.~\ref{def:winundera:alt}. The complete proof of the equivalence between the two definitions for our class of assumption can be found in Appendix~\ref{appendix:equivOfDefWinUnder} as it requires the results of the next sections.

	We conclude this subsection by noting that the choice of our formulation of `winning under assumption' is inspired by distributed synthesis. Here, the environment agents might be unknown to the system. Our definition allows us to naturally argue that the strategy $ \stratz $ of $ \pz $ (System) is winning for \emph{any} strategy $ \strato $ that $ \po $ (Environment) may choose to satisfy the assumption. 

\section{Computing Adequately Permissive Assumptions (APA)}\label{section:assumption computation}

In this section, we present our algorithm to compute \emph{adequately permissive assumptions} (APA for short) for \emph{parity games}, which are canonical representations of $\omega$-regular games. 
For a gradual exposition of the topic, we first present algorithms for simpler winning conditions, namely safety (Sec.~\ref{sec:assump:safety}), Büchi (Sec.~\ref{section:BuchiGames}), and Co-Büchi (Sec.~\ref{section:coBuchiGames}), which are used as building blocks while presenting the algorithm for parity games (Sec.~\ref{section:parityGames}). 
We first introduce some preliminaries.

\subsection{Preliminaries}\label{sec:assump:prelim}
We use symbolic fixpoint algorithms expressed in the $\mu$-calculus~\cite{Kozen:muCalculus} to compute the winning regions and to generate assumptions in simple post-processing steps. 
\smallskip
\noindent\textbf{Set Transformers.}  Let $ \gamegraph=(V,\vertexz, \vertexo, E) $ be a game graph, $ U\subseteq V $ be a subset of vertices, and $ a\in \{0,1\} $ be the player index. Then we define two types of predecessor operators:
\begin{eqnarray}
	\pre{\gamegraph}{U} =& \{v\in V\mid \exists u\in U . ~(v,u)\in E \}\\
	\cprea{\gamegraph}{U} =& \{v\in V^a\mid v\in \pre{\gamegraph}{U}\}\cup \{v\in V^{1-a}\mid \forall (v,u)\in E.~u\in U  \}\\
	\cpre{\gamegraph}{U}{a,1} =& \cpre{\gamegraph}{U}{a}\cup U\\
	\cpre{\gamegraph}{U}{a,i}=&\cpre{\gamegraph}{\cpre{\gamegraph}{U}{a,i-1}}{a} \cup \cpre{\gamegraph}{U}{a,i-1} \text{ with } i\geq 1
\end{eqnarray}
The predecessor operator $ \textsf{pre}_{\gamegraph}(U) $ computes the set of vertices with at least one successor in $ U $. 
The controllable predecessor operators $ \textsf{cpre}^a_{\gamegraph}(U) $ and $\cpre{\gamegraph}{U}{a,i}$ compute the set of vertices from which $ \p{a} $ can force visiting $ U $ in \emph{at most} \emph{one} and $i$ steps respectively. 
In the following, we introduce the attractor operator $ \textsf{attr}^a_{\gamegraph}(U) $ that computes the set of vertices from which $ \p{a}$ can force at least a single visit to $ U $ in \emph{finitely many but nonzero}\footnote{In existing literature, usually $ U\subseteq\mathsf{attr}^a(U) $, i.e., \ $\attra{}{U}$ contains vertices from which $U$ is visited in zero steps. We exclude $U$ from $\attra{}{U}$ for a minor technical reason.} steps: 
\begin{eqnarray}
	\attra{\gamegraph}{U} =&\big(  \bigcup_{i\geq 1} \textsf{cpre}^{a,i}{(U)} \big)\backslash U
\end{eqnarray}
When clear from the context, we drop the subscript $ \gamegraph $ from these operators.

\smallskip
\noindent\textbf{Fixpoint Algorithms in the $ \mu $-calculus.} 
\mucal~\cite{Kozen:muCalculus} offers a succinct representation of symbolic algorithms (i.e., algorithms manipulating sets of vertices instead of individual vertices) over a game graph $ \gamegraph $. 
The formulas of the $ \mu $-calculus, interpreted over a 2-player game graph $ \gamegraph $, are given by the grammar 
\[ \phi\coloneqq p \mid X \mid \phi\cup\phi \mid \phi\cap\phi \mid \mathit{pre}(\phi) \mid \mu X.\phi \mid \nu X.\phi \]
where $ p $ ranges over subsets of $ V $, $ X $ ranges over a set of formal variables, $ pre $ ranges over monotone set transformers in $ \{\textsf{pre}, \textsf{cpre}^a, \textsf{attr}^a  \} $, and $ \mu $ and $ \nu $ denote, respectively, the least and the greatest fixed point of the functional defined as $ X\mapsto \phi(X) $. 
Since the operations $ \cup, \cap $, and the set transformers $ \mathit{pre} $ are all monotonic, the fixed points are guaranteed to exist, due to the Knaster-Tarski Theorem~\cite{KnasterTarski:TraskiKnasterTheorem}.
We omit the (standard) semantics of formulas (see~\cite{Kozen:muCalculus}).

A \mucal formula evaluates to a set of vertices over $ \gamegraph $, and the set can be computed by induction over the structure of the formula, where the fixed points are evaluated by iteration. The reader may note that $ \textsf{pre} $ and $ \textsf{cpre} $ can be computed in time polynomial in number of vertices, and since the game graph is finite, $ \textsf{attr} $ is also computable in polynomial time.

\subsection{Safety Games}\label{sec:assump:safety}

A safety game is a game $\game=(\gamegraph,\spec)$ with $\spec\coloneqq \square U$
for some $U\subseteq V$, and a play fulfills $\spec$ if it never leaves $U$.
\anames for safety games disallow every $\po$ move that leaves the cooperative winning region in $\gamegraph$ w.r.t.\ $\safetygame(U)$. This is formalized in the following theorem\footnote{All proofs can be found in the appendix.}. 

\begin{restatable}{theorem}{restatesafety}\label{thm:safety assumption}
 Let $\game=(\gamegraph,\square U)$ be a safety game, $Z^*=\nu Y. U\cap \pre{}{Y}$, and $ \safegroup = \Set{(u,v)\in E\mid \left(u\in V^1\cap Z^*\right) \wedge \left(v \notin Z^*\right)}$. Then $Z^*=\team{0,1}\square U$ and
 \footnote{We use $ e=(u,v) $ in LTL formulas as a syntactic sugar for $ u\wedge \bigcirc v $, where $ \bigcirc$ is the LTL \emph{next} operator.
 A set of edges $E' = \set{e_i}_{i\in [0;k]}$, when used as atomic proposition, is a syntactic sugar for $\bigvee_{i\in [0;k]} e_i$. } 
\begin{equation}\label{eq:safety assumption definition}
	\textstyle \assumpsafe(\safegroup) \coloneqq \square \bigwedge_{e\in \safegroup} \neg e,
\end{equation}
is an \aname for the game $\game$. We denote by $\computeSafe(\gamegraph,U)$ the algorithm computing $\safegroup$ as above, which runs in time $ \bigO(n^2) $, where $ n=|V|$. 
\end{restatable}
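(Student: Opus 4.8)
The statement bundles four claims: (1) $Z^* = \team{0,1}\square U$, i.e.\ the greatest-fixpoint formula $\nu Y. U\cap\pre{}{Y}$ computes the cooperative winning region of the safety game; (2) $\assumpsafe(\safegroup)$ is \emph{sufficient}; (3) it is \emph{implementable}; (4) it is \emph{permissive}; plus the $\bigO(n^2)$ running time. I would prove these in that order.

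\emph{Step 1 (the fixpoint computes $\team{0,1}\square U$).} Here the two players cooperate, so the cooperative game is effectively a one-player reachability-complement problem: a vertex is cooperatively winning for $\square U$ iff there is an infinite path staying inside $U$ forever. The standard argument applies. For ``$\supseteq$'': if $v\in Z^*$, then since $Z^* = U\cap\pre{}{Z^*}$, every vertex in $Z^*$ lies in $U$ and has a successor in $Z^*$; following such successors yields an infinite $U$-play, and the players can jointly realize it (whoever owns the current vertex picks the good successor), so $v\in\team{0,1}\square U$. For ``$\subseteq$'': if $(\stratz,\strato)$ is a cooperatively winning strategy profile from $v$, the unique $\stratz\strato$-play $\rho = v_0 v_1\cdots$ stays in $U$ forever; I claim the set $\{v_0,v_1,\dots\}$ (or rather, its closure under the fixpoint operator) witnesses membership — more cleanly, let $R = \{w \mid w$ starts some infinite $U$-play$\}$; then $R\subseteq U$ and $R\subseteq\pre{}{R}$, so $R$ is a post-fixpoint of $Y\mapsto U\cap\pre{}{Y}$, hence $R\subseteq Z^*$ by Knaster--Tarski; and $v\in R$. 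This gives $Z^* = \team{0,1}\square U$.

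\emph{Step 2 (sufficiency: $\team{0}_{\assumpsafe(\safegroup)}\square U\supseteq\team{0,1}\square U = Z^*$).} Fix $v\in Z^*$. I must exhibit a $\pz$ strategy winning under the assumption. The natural choice: on a $\pz$-vertex inside $Z^*$, move to some successor in $Z^*$ (exists by the fixpoint equation); elsewhere, arbitrary. Now take any $\po$-strategy $\strato$ with $\lang(\strato)\subseteq\lang(\assumpsafe(\safegroup))$ and any resulting $\stratz\strato$-play $\rho$ from $v$. I show by induction that $\rho$ stays in $Z^*$: if the current vertex $w\in Z^*$ and $w\in\vertexz$, the chosen successor is in $Z^*$; if $w\in\vertexo$, then since $\strato$ respects the assumption it does not take any edge in $\safegroup$, i.e.\ it does not take an edge leaving $Z^*$ from an environment vertex in $Z^*$ — and by definition of $\safegroup$ those are exactly the environment-owned edges out of $Z^*$, so the successor is again in $Z^*$. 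Since $Z^*\subseteq U$, the play satisfies $\square U$.

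\emph{Step 3 (implementability: $\team{1}\assumpsafe(\safegroup) = V$) and Step 4 (permissiveness).} For implementability I need a $\po$-strategy that, from every vertex, produces only plays avoiding all edges in $\safegroup$; this is just a safety objective for $\po$ on the game graph where the ``bad'' transitions are the $\safegroup$-edges. The key observation making it trivially realizable: every edge in $\safegroup$ originates at a $\po$-vertex, so $\po$ alone can avoid them by simply never picking a $\safegroup$-edge — and this is always possible because such a vertex $u\in\vertexo\cap Z^*$, being in $Z^*$, has at least one successor inside $Z^*$, hence at least one non-$\safegroup$ outgoing edge. So the strategy ``at any $\po$-vertex in $Z^*$, pick a successor in $Z^*$; elsewhere arbitrary'' works from every vertex, giving $\team{1}\assumpsafe(\safegroup) = V$. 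For permissiveness I need $\lang(\square U)\subseteq\lang(\assumpsafe(\safegroup))$: take any play $\rho$ satisfying $\square U$; suppose toward contradiction it takes some $e = (u,v)\in\safegroup$ at step $k$, so $u\in\vertexo\cap Z^*$ but $v\notin Z^*$. Since $v\notin Z^* = \team{0,1}\square U$, no infinite $U$-play starts at $v$, so the suffix $\rho_{k+1}\rho_{k+2}\cdots$ cannot stay in $U$ forever, contradicting $\rho\models\square U$. Hence $\rho$ takes no $\safegroup$-edge, i.e.\ $\rho\models\square\bigwedge_{e\in\safegroup}\neg e$. Finally, the complexity: $\nu Y. U\cap\pre{}{Y}$ converges in at most $|V|$ iterations, each costing $\bigO(|V| + |E|) = \bigO(n^2)$ (with $\pre{}{}$ computed by scanning edges), and extracting $\safegroup$ is a single edge scan, so the total is $\bigO(n^2)$.

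\emph{Main obstacle.} None of the individual steps is deep; the safety case is the warm-up for the paper. The one place to be careful — and the only real ``obstacle'' — is Step 2, matching Definition~\ref{def:winundera} precisely: I must argue about \emph{every} $\po$-strategy $\strato$ with $\lang(\strato)\subseteq\lang(\assumpsafe(\safegroup))$, not construct a convenient one, and the induction has to correctly use that ``$\strato$ compliant with the assumption'' forbids $\safegroup$-edges \emph{on every $\strato$-play}, not merely on the particular $\stratz\strato$-play. The definition of $\safegroup$ as precisely the $\po$-owned edges leaving $Z^*$ is what makes the induction close, and it is worth stating explicitly that $Z^*\subseteq U$ and that $\vertexz$-vertices in $Z^*$ have in-$Z^*$ successors — both immediate from $Z^* = U\cap\pre{}{Z^*}$.
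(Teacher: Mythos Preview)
Your proposal is correct and follows essentially the same approach as the paper: the same $\pz$ strategy (stay inside $Z^*$), the same implementability argument ($\po$ simply avoids $\safegroup$-edges, which is possible since every $\po$-vertex in $Z^*$ has a successor in $Z^*$), the same permissiveness contradiction (an $\safegroup$-edge leads outside $\team{0,1}\square U$, so no continuation can satisfy $\square U$), and the same complexity accounting. Two cosmetic differences: you prove $Z^*=\team{0,1}\square U$ directly via Knaster--Tarski whereas the paper just cites a textbook, and for sufficiency you argue by forward induction that the play stays in $Z^*$, whereas the paper argues by contradiction from the first index leaving $U$---your version is actually cleaner, since the paper's contradiction step tacitly relies on the same invariant (that $v_{i-1}\in Z^*$ and $v_{i-1}\in\vertexo$) without spelling it out.
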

We call the LTL formula in \eqref{eq:safety assumption definition} a \emph{safety template} and assumptions that solely use this template \emph{safety assumptions}.

\subsection{Live Group Assumptions for B\"uchi Games}\label{section:BuchiGames}
\noindent\textbf{Büchi games}.
A B\"uchi game is a game $\game=(\gamegraph,\spec)$ where $\spec=\square\lozenge U$
for some $U\subseteq V$.
Intuitively, a play is winning for a Büchi game if it visits the vertex set $U$ infinitely often. 
We first recall that the cooperative winning region $\team{0,1}\square\lozenge U$ can be computed by a two-nested symbolic fixpoint algorithm~\cite{ChatterjeeHenzingerPiterman:AlgoForBuchi}
	\begin{equation}\label{eq:EnvironmentBuchiFixpoint}
		\solveBuchi(\gamegraph,U):=\nu Y. \mu X.~(U\cap \pre{}{Y})\cup (\pre{}{X}).
	\end{equation}
\smallskip
\noindent\textbf{Live group templates.}
Given the standard algorithm in \eqref{eq:EnvironmentBuchiFixpoint}, the set $X^i$ computed in the  $ i $-th iteration of the fixpoint variable $ X $ in the last iteration of $ Y $ actually carries a lot of information to construct a very useful assumption for the Büchi game $\game$. To see this, recall that $ X^i $ contains all vertices which have an edge to vertices which can reach $ U $ in at most $ i-1 $ steps~\cite[sec. 3.2]{ChatterjeeHenzingerPiterman:AlgoForBuchi}. Hence, for all $\po$ vertices in $X^i\setminus X^{i-1}$ we need to assume that $ \po $ always eventually makes progress towards $U$ by moving to $X^i$. This can be formalized by a so called live group template.

\begin{definition}
 	Let $G=(V,E)$ be a game graph. Then a live group $\livegroupSingleN = \Set{e_j}_{j\geq 0}$ is a set of edges $e_j = (s_j,t_j)$ with source vertices $\src(\livegroupSingleN):=\Set{s_j}_{j\geq 0}$. Given a set of live groups 
 	$\livegroup=\left\{\livegroupSingle\right\}_{i\geq 0}$  we define a live group template as 
\begin{equation}\label{equ:livegroup}
	\assumpgrlive(\livegroup) \coloneqq \bigwedge_{i\geq 0}\square\lozenge src(\livegroupSingle)\implies\square\lozenge \livegroupSingle.
\end{equation}

\end{definition}
The live group template says that if some vertex from the source of a live group is visited infinitely often, then some edge from this group should be taken infinitely often. We will use this template to give the assumptions for \buchi games.
\begin{remark}\label{rem:liveedges}
 We note that Chatterjee et al. ~\cite{ChatterjeeHenzingerJobstmann:EnvironmentAssumptionsforSynthesis} used \emph{live edges} in their environment assumptions. Live edges are singleton live groups and are thereby less expressive. In particular, there are instances of \buchi games, where there is no permissive live edge assumption but there is a permissive live group assumption\footnote{ i.e., assumptions that use live group templates}.
E.g., \ in Fig.~\ref{fig:illustration_intro} (c) the live edge assumption $\square\lozenge e_1 \wedge \square\lozenge e_2$ is sufficient but not permissive, whereas the live group assumption $\square \lozenge src(\livegroupSingleN)\implies \square\lozenge \livegroupSingleN$ with $\livegroupSingleN = \{e_1,e_2\}$ is an \aname.
\end{remark}

In the context of the fixpoint computation of \eqref{eq:EnvironmentBuchiFixpoint}, we can construct live groups $\livegroup=\left\{\livegroupSingle\right\}_{i\geq 0}$ where each $\livegroupSingle$ contains all edges of $\po$ which originate in $X^i\setminus X^{i-1}$ and end in $X^{i-1}$. Then the live group assumption in \eqref{equ:livegroup} precisely captures the intuition that, in order to visit $U$ infinitely often, $\po$ should take edges in $\livegroupSingle$ infinitely often if vertices in $\src(\livegroupSingle)$ are seen infinitely often.
Unfortunately, it turns out that this live group assumption is not \emph{permissive}. The reason is that it restricts $ \po $ also on those vertices from which he will anyway go towards $ U $. 
For example, consider the game in Fig.~\ref{fig:moreexamples} (right). Here defining live groups through computations of \eqref{eq:FP:buechi}, will mark $ {e_1} $ as a live group, but then $ (v_2v_1v_0)^{\omega} $ will be in $ \lang(\Phi) $ but not in the language of the assumption. Here the permissive assumption would be $ \assump=\textsc{true} $.

\smallskip
\noindent\textbf{Accelerated fixpoint computation.}
In order to compute a permissive live group assumption, we use a slightly modified fixpoint algorithm which computes the same set $Z^*$ but allows us to extract \emph{permissive} assumptions directly from the fixpoint computations. 
Towards this goal, we introduce the \emph{together predecessor operator}.
\begin{equation}
	\tpre{\gamegraph}{U}= \attrz{\gamegraph}{U} \cup \cpreo{\gamegraph}{\attrz{\gamegraph}{U}\cup U}. 
\end{equation}
Intuitively, $\mathsf{tpre}$ adds all vertices from which $ \pz $ does not need any cooperation to reach $U$ in every iteration of the fixpoint computation.
The interesting observation we make is that substituting the inner pre operator in \eqref{eq:EnvironmentBuchiFixpoint} by $\mathsf{tpre}$ does not change the computed set but only accelerates the computation. This is formalized in the next proposition and visualized in Fig. \ref{fig:PreVsTpre}.

\begin{restatable}{proposition}{restate:prop:buechi cooperative winning region}\label{prop:buechi cooperative winning region}
	Let $ \game=\tup{\gamegraph,\square\lozenge U} $ be a $\buchi$ game and 
	\begin{equation}\label{eq:FP:buechi}
	\TsolveBuchi(\gamegraph,U)=\nu Y. \mu X.~(U\cap \pre{}{Y})\cup (\tpre{}{X}).
\end{equation}
Then $ \TsolveBuchi(\gamegraph,U)=\solveBuchi(\gamegraph,U)=\team{0,1}\square \lozenge U$.
\end{restatable}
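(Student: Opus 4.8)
The plan is to show that replacing $\pre{}{X}$ by $\tpre{}{X}$ in the inner fixpoint of \eqref{eq:EnvironmentBuchiFixpoint} changes neither the value of the inner least fixpoint (for any fixed value of the outer variable $Y$) nor, consequently, the value $Z^*$ of the outer greatest fixpoint. Since it is already known (\cite{ChatterjeeHenzingerPiterman:AlgoForBuchi}) that $\solveBuchi(\gamegraph,U)=\team{0,1}\square\lozenge U$, it suffices to prove $\TsolveBuchi(\gamegraph,U)=\solveBuchi(\gamegraph,U)$. I would fix an arbitrary set $Y\subseteq V$ and compare the two inner functionals
$f(X) = (U\cap\pre{}{Y})\cup\pre{}{X}$ and $g(X)=(U\cap\pre{}{Y})\cup\tpre{}{X}$
on the complete lattice $2^V$, aiming to show $\mathrm{lfp}(f)=\mathrm{lfp}(g)$.

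The first inclusion, $\mathrm{lfp}(f)\subseteq\mathrm{lfp}(g)$, follows from monotonicity once I observe $f(X)\subseteq g(X)$ for all $X$: indeed $\pre{}{X}\subseteq\tpre{}{X}$, because by definition $\tpre{}{X}=\attrz{}{X}\cup\cpreo{}{\attrz{}{X}\cup X}$, and any vertex with an edge into $X$ is either a $\pz$-vertex (hence in $\attrz{}{X}$, as $\pz$ can force a move into $X$ in one step --- recall $\attr{}{}{0}$ excludes $X$ itself but includes one-step $\pz$-controllable predecessors) or a $\po$-vertex with an edge into $X\subseteq\attrz{}{X}\cup X$ (hence in $\cpreo{}{\attrz{}{X}\cup X}$). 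A small edge case: a vertex already in $X$ need not be in $\pre{}{X}$ nor in $\tpre{}{X}$, but that is harmless since both functionals are applied monotonically and $X$ is re-added at the next stage anyway; I would phrase the argument at the level of the iterates $X^j$ to sidestep this. For the reverse inclusion $\mathrm{lfp}(g)\subseteq\mathrm{lfp}(f)$, I would argue that $\tpre{}{X}\subseteq\mu W.\, X\cup\pre{}{W}=\attrz{}{X}\cup X$ whenever... no --- more carefully: I claim $\tpre{}{X}\subseteq \mathrm{cpre}^{0,*}$-style closure of $X$ under $\pre{}{}$, i.e. every vertex of $\tpre{}{X}$ can reach $X$ using only $\pre{}{}$-steps. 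This is clear for $\attrz{}{X}$, and a vertex in $\cpreo{}{\attrz{}{X}\cup X}$ has an edge into $\attrz{}{X}\cup X$, hence into the $\pre{}{}$-closure of $X$. Therefore $g(X)\subseteq$ (the $f$-iterated closure of $(U\cap\pre{}{Y})\cup X)$, from which one deduces by induction on the approximants that the iterates of $g$ starting from $\emptyset$ are all contained in $\mathrm{lfp}(f)$, giving $\mathrm{lfp}(g)\subseteq\mathrm{lfp}(f)$.

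Having established $\mathrm{lfp}(f)=\mathrm{lfp}(g)$ for every $Y$, the outer functionals $Y\mapsto\mathrm{lfp}(f)$ and $Y\mapsto\mathrm{lfp}(g)$ are literally the same map on $2^V$, so their greatest fixpoints coincide: $\TsolveBuchi(\gamegraph,U)=\solveBuchi(\gamegraph,U)$. Combining with the cited result $\solveBuchi(\gamegraph,U)=\team{0,1}\square\lozenge U$ finishes the proof.

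The main obstacle I anticipate is the reverse inclusion: one must argue that the $\tpre{}{}$ operator, although it can add many vertices at once (the whole $\pz$-attractor plus one extra $\po$-layer), never reaches "outside" what the slow $\pre{}{}$-iteration would eventually reach within the same outer round. The clean way to do this is the auxiliary lemma that $\tpre{}{X}$ is contained in the set of vertices from which $X$ is reachable in the underlying graph via a path staying in a suitable region --- essentially that $\tpre{}{}$ is sandwiched between $\pre{}{}$ and the unrestricted reachability operator $\bigcup_{i\ge 1}\mathrm{pre}^i$ --- and then noting that one "$\pre{}{Y}$-seeded" least-fixpoint round of $f$ already saturates under plain reachability restricted to the part of $V$ from which $U\cap\pre{}{Y}$ is graph-reachable. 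I would make this precise by showing, by induction on $j$, that the $j$-th iterate $X^j_g$ of $g$ is contained in some iterate $X^{k(j)}_f$ of $f$; the base case is trivial and the inductive step uses the sandwich inequality together with monotonicity of $f$. Once that bookkeeping is in place the proposition follows immediately.
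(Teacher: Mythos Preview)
Your proposal is correct and follows essentially the same approach as the paper: the paper reduces the proposition to the auxiliary lemma $\mu X.\,S\cup\pre{}{X}=\mu X.\,S\cup\tpre{}{X}$ (stated for an arbitrary seed $S$), proves one inclusion via $\pre{}{X}\subseteq\tpre{}{X}$ and the other by showing, by induction on $i$, that each $\mathsf{tpre}$-iterate $X^{\mathsf t}_i$ is contained in some later $\mathsf{pre}$-iterate $X_j$, and then cites the known correctness of $\solveBuchi$. Your argument is the same decomposition carried out inline for the seed $U\cap\pre{}{Y}$, and you are in fact slightly more careful than the paper in flagging the edge case $v\in X$ where $\pre{}{X}\subseteq\tpre{}{X}$ can fail pointwise but is harmless at the level of iterates.
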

Prop.~\ref{prop:buechi cooperative winning region} follows from the correctness proof of \eqref{eq:EnvironmentBuchiFixpoint} by using the observation that for all $ U\subseteq V $ we have $ \mu X. ~U\cup \pre{}{X}=\mu X. ~U\cup \tpre{}{X}$ which is proven in the Appendix, Lem.~\ref{lemma:tpre is accelerated pre}.

\tikzset{every picture/.style={line width=0.75pt}} 
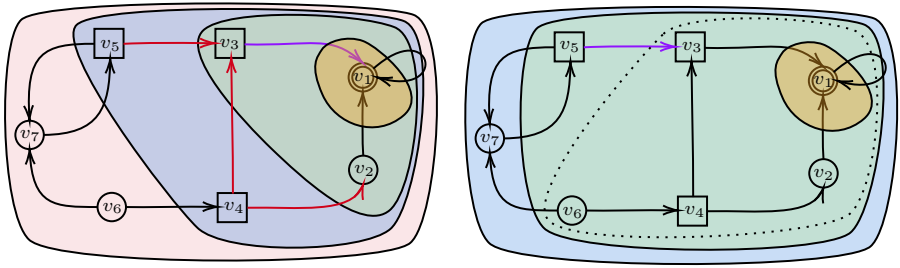
\begin{figure}[t]
	\centering
	\tikzset{every picture/.style={line width=0.75pt}} 
	
	\begin{tikzpicture}[x=0.75pt,y=0.75pt,yscale=-0.7,xscale=0.7]
		
		\draw  [fill={rgb, 255:red, 74; green, 144; blue, 226 }  ,fill opacity=0.3 ] (351.74,24.14) .. controls (368.94,7.74) and (607.34,10.14) .. (634.54,20.54) .. controls (661.74,30.94) and (652.15,167.76) .. (632.14,184.94) .. controls (612.13,202.12) and (378.14,202.54) .. (357.34,182.54) .. controls (336.54,162.54) and (334.54,40.54) .. (351.74,24.14) -- cycle ;
		
		\draw  [fill={rgb, 255:red, 184; green, 233; blue, 134 }  ,fill opacity=0.3 ] (391.74,27.74) .. controls (408.94,11.34) and (600.14,14.54) .. (627.34,24.94) .. controls (654.54,35.34) and (637.75,157.76) .. (617.74,174.94) .. controls (597.73,192.12) and (417.74,191.87) .. (396.94,171.87) .. controls (376.14,151.87) and (374.54,44.14) .. (391.74,27.74) -- cycle ;
		
		\draw   (404.53,30.6) -- (425.53,30.6) -- (425.53,51.6) -- (404.53,51.6) -- cycle ;
		
		\draw   (587.98,132.3) .. controls (587.98,126.64) and (592.57,122.05) .. (598.23,122.05) .. controls (603.89,122.05) and (608.48,126.64) .. (608.48,132.3) .. controls (608.48,137.96) and (603.89,142.55) .. (598.23,142.55) .. controls (592.57,142.55) and (587.98,137.96) .. (587.98,132.3) -- cycle ;
		
		\draw   (587.68,65.55) .. controls (587.68,59.89) and (592.27,55.3) .. (597.93,55.3) .. controls (603.59,55.3) and (608.18,59.89) .. (608.18,65.55) .. controls (608.18,71.21) and (603.59,75.8) .. (597.93,75.8) .. controls (592.27,75.8) and (587.68,71.21) .. (587.68,65.55) -- cycle ;
		
		\draw   (347.58,107.1) .. controls (347.58,101.44) and (352.17,96.85) .. (357.83,96.85) .. controls (363.49,96.85) and (368.08,101.44) .. (368.08,107.1) .. controls (368.08,112.76) and (363.49,117.35) .. (357.83,117.35) .. controls (352.17,117.35) and (347.58,112.76) .. (347.58,107.1) -- cycle ;
		\draw   (406.78,159.1) .. controls (406.78,153.44) and (411.37,148.85) .. (417.03,148.85) .. controls (422.69,148.85) and (427.28,153.44) .. (427.28,159.1) .. controls (427.28,164.76) and (422.69,169.35) .. (417.03,169.35) .. controls (411.37,169.35) and (406.78,164.76) .. (406.78,159.1) -- cycle ;
		\draw   (491.73,30.6) -- (512.73,30.6) -- (512.73,51.6) -- (491.73,51.6) -- cycle ;
		\draw   (493.33,149) -- (514.33,149) -- (514.33,170) -- (493.33,170) -- cycle ;
		\draw    (404.33,41.4) .. controls (367.7,41.2) and (354.6,47.67) .. (357.73,95.39) ;
		\draw [shift={(357.83,96.85)}, rotate = 265.92] [color={rgb, 255:red, 0; green, 0; blue, 0 }  ][line width=0.75]    (10.93,-3.29) .. controls (6.95,-1.4) and (3.31,-0.3) .. (0,0) .. controls (3.31,0.3) and (6.95,1.4) .. (10.93,3.29)   ;
		\draw [color={rgb, 255:red, 144; green, 19; blue, 254 }  ,draw opacity=1 ]   (425.53,41.4) .. controls (442,40.62) and (459.23,40.6) .. (489.65,40.98) ;
		\draw [shift={(491.53,41)}, rotate = 180.73] [color={rgb, 255:red, 144; green, 19; blue, 254 }  ,draw opacity=1 ][line width=0.75]    (10.93,-3.29) .. controls (6.95,-1.4) and (3.31,-0.3) .. (0,0) .. controls (3.31,0.3) and (6.95,1.4) .. (10.93,3.29)   ;
		\draw    (598.23,122.05) .. controls (597.55,109.72) and (597.53,101.8) .. (597.9,77.69) ;
		\draw [shift={(597.93,75.8)}, rotate = 90.9] [color={rgb, 255:red, 0; green, 0; blue, 0 }  ][line width=0.75]    (10.93,-3.29) .. controls (6.95,-1.4) and (3.31,-0.3) .. (0,0) .. controls (3.31,0.3) and (6.95,1.4) .. (10.93,3.29)   ;
		\draw    (512.73,41.8) .. controls (562.13,42.98) and (574.64,34.45) .. (596.58,54.06) ;
		\draw [shift={(597.93,55.3)}, rotate = 222.92] [color={rgb, 255:red, 0; green, 0; blue, 0 }  ][line width=0.75]    (10.93,-3.29) .. controls (6.95,-1.4) and (3.31,-0.3) .. (0,0) .. controls (3.31,0.3) and (6.95,1.4) .. (10.93,3.29)   ;
		\draw    (514.93,159.6) .. controls (551.4,159.02) and (589.19,164.32) .. (597.64,144.15) ;
		\draw [shift={(598.23,142.55)}, rotate = 107.85] [color={rgb, 255:red, 0; green, 0; blue, 0 }  ][line width=0.75]    (10.93,-3.29) .. controls (6.95,-1.4) and (3.31,-0.3) .. (0,0) .. controls (3.31,0.3) and (6.95,1.4) .. (10.93,3.29)   ;
		\draw    (504.33,149.4) .. controls (504.33,112.76) and (503.56,88.53) .. (503.15,53.03) ;
		\draw [shift={(503.13,51.4)}, rotate = 89.37] [color={rgb, 255:red, 0; green, 0; blue, 0 }  ][line width=0.75]    (10.93,-3.29) .. controls (6.95,-1.4) and (3.31,-0.3) .. (0,0) .. controls (3.31,0.3) and (6.95,1.4) .. (10.93,3.29)   ;
		\draw    (406.78,159.1) .. controls (375.61,159) and (360,159) .. (357.92,119.2) ;
		\draw [shift={(357.83,117.35)}, rotate = 87.66] [color={rgb, 255:red, 0; green, 0; blue, 0 }  ][line width=0.75]    (10.93,-3.29) .. controls (6.95,-1.4) and (3.31,-0.3) .. (0,0) .. controls (3.31,0.3) and (6.95,1.4) .. (10.93,3.29)   ;
		\draw    (427.28,159.1) .. controls (441.57,159.39) and (468.92,158.64) .. (491.78,158.97) ;
		\draw [shift={(493.53,159)}, rotate = 180.99] [color={rgb, 255:red, 0; green, 0; blue, 0 }  ][line width=0.75]    (10.93,-3.29) .. controls (6.95,-1.4) and (3.31,-0.3) .. (0,0) .. controls (3.31,0.3) and (6.95,1.4) .. (10.93,3.29)   ;
		\draw    (368.08,107.1) .. controls (408.52,106.61) and (415.33,87.58) .. (415.91,53.37) ;
		\draw [shift={(415.93,51.8)}, rotate = 90.65] [color={rgb, 255:red, 0; green, 0; blue, 0 }  ][line width=0.75]    (10.93,-3.29) .. controls (6.95,-1.4) and (3.31,-0.3) .. (0,0) .. controls (3.31,0.3) and (6.95,1.4) .. (10.93,3.29)   ;
		\draw   (590.33,65.55) .. controls (590.33,61.35) and (593.74,57.95) .. (597.93,57.95) .. controls (602.13,57.95) and (605.53,61.35) .. (605.53,65.55) .. controls (605.53,69.75) and (602.13,73.15) .. (597.93,73.15) .. controls (593.74,73.15) and (590.33,69.75) .. (590.33,65.55) -- cycle ;
		\draw  [fill={rgb, 255:red, 245; green, 166; blue, 35 }  ,fill opacity=0.4 ] (568.54,45.42) .. controls (572.14,39.82) and (586.54,35.02) .. (598.14,39.42) .. controls (609.74,43.82) and (647.34,75.82) .. (627.33,93) .. controls (607.33,110.18) and (578.54,99.82) .. (568.94,78.22) .. controls (559.34,56.62) and (564.94,51.02) .. (568.54,45.42) -- cycle ;
		\draw  [dash pattern={on 0.84pt off 2.51pt}] (482.94,31.74) .. controls (500.14,15.34) and (593.34,18.94) .. (620.54,29.34) .. controls (647.74,39.74) and (637.35,144.16) .. (617.34,161.34) .. controls (597.33,178.52) and (421.34,186.67) .. (400.54,166.67) .. controls (379.74,146.67) and (465.74,48.14) .. (482.94,31.74) -- cycle ;
		\draw    (605.54,59.31) .. controls (648.44,21.03) and (660.63,80.76) .. (609.74,66.02) ;
		\draw [shift={(608.18,65.55)}, rotate = 17.32] [color={rgb, 255:red, 0; green, 0; blue, 0 }  ][line width=0.75]    (10.93,-3.29) .. controls (6.95,-1.4) and (3.31,-0.3) .. (0,0) .. controls (3.31,0.3) and (6.95,1.4) .. (10.93,3.29)   ;
		\draw  [fill={rgb, 255:red, 208; green, 2; blue, 27 }  ,fill opacity=0.1 ] (20.14,21.62) .. controls (37.34,5.22) and (275.74,7.62) .. (302.94,18.02) .. controls (330.14,28.42) and (320.55,165.24) .. (300.54,182.42) .. controls (280.53,199.6) and (46.54,200.02) .. (25.74,180.02) .. controls (4.94,160.02) and (2.94,38.02) .. (20.14,21.62) -- cycle ;
		\draw  [fill={rgb, 255:red, 74; green, 144; blue, 226 }  ,fill opacity=0.3 ] (60.14,25.22) .. controls (77.34,8.82) and (268.54,12.02) .. (295.74,22.42) .. controls (322.94,32.82) and (306.15,155.24) .. (286.14,172.42) .. controls (266.13,189.6) and (167.88,191.47) .. (147.08,171.47) .. controls (126.28,151.47) and (42.94,41.62) .. (60.14,25.22) -- cycle ;
		\draw  [fill={rgb, 255:red, 184; green, 233; blue, 134 }  ,fill opacity=0.3 ] (151.34,29.22) .. controls (168.54,12.82) and (261.74,16.42) .. (288.94,26.82) .. controls (316.14,37.22) and (305.75,141.64) .. (285.74,158.82) .. controls (265.73,176) and (221.48,134.67) .. (200.68,114.67) .. controls (179.88,94.67) and (134.14,45.62) .. (151.34,29.22) -- cycle ;
		\draw   (72.93,28.08) -- (93.93,28.08) -- (93.93,49.08) -- (72.93,49.08) -- cycle ;
		\draw   (256.38,129.78) .. controls (256.38,124.12) and (260.97,119.53) .. (266.63,119.53) .. controls (272.29,119.53) and (276.88,124.12) .. (276.88,129.78) .. controls (276.88,135.44) and (272.29,140.03) .. (266.63,140.03) .. controls (260.97,140.03) and (256.38,135.44) .. (256.38,129.78) -- cycle ;
		\draw   (256.08,63.03) .. controls (256.08,57.37) and (260.67,52.78) .. (266.33,52.78) .. controls (271.99,52.78) and (276.58,57.37) .. (276.58,63.03) .. controls (276.58,68.69) and (271.99,73.28) .. (266.33,73.28) .. controls (260.67,73.28) and (256.08,68.69) .. (256.08,63.03) -- cycle ;
		\draw   (15.98,104.58) .. controls (15.98,98.92) and (20.57,94.33) .. (26.23,94.33) .. controls (31.89,94.33) and (36.48,98.92) .. (36.48,104.58) .. controls (36.48,110.24) and (31.89,114.83) .. (26.23,114.83) .. controls (20.57,114.83) and (15.98,110.24) .. (15.98,104.58) -- cycle ;
		\draw   (75.18,156.58) .. controls (75.18,150.92) and (79.77,146.33) .. (85.43,146.33) .. controls (91.09,146.33) and (95.68,150.92) .. (95.68,156.58) .. controls (95.68,162.24) and (91.09,166.83) .. (85.43,166.83) .. controls (79.77,166.83) and (75.18,162.24) .. (75.18,156.58) -- cycle ;
		\draw   (160.13,28.08) -- (181.13,28.08) -- (181.13,49.08) -- (160.13,49.08) -- cycle ;
		\draw   (161.73,146.48) -- (182.73,146.48) -- (182.73,167.48) -- (161.73,167.48) -- cycle ;
		\draw    (72.73,38.88) .. controls (36.1,38.68) and (23,45.15) .. (26.13,92.87) ;
		\draw [shift={(26.23,94.33)}, rotate = 265.92] [color={rgb, 255:red, 0; green, 0; blue, 0 }  ][line width=0.75]    (10.93,-3.29) .. controls (6.95,-1.4) and (3.31,-0.3) .. (0,0) .. controls (3.31,0.3) and (6.95,1.4) .. (10.93,3.29)   ;
		\draw [color={rgb, 255:red, 208; green, 2; blue, 27 }  ,draw opacity=1 ]   (93.93,38.88) .. controls (110.4,38.1) and (127.63,38.08) .. (158.05,38.46) ;
		\draw [shift={(159.93,38.48)}, rotate = 180.73] [color={rgb, 255:red, 208; green, 2; blue, 27 }  ,draw opacity=1 ][line width=0.75]    (10.93,-3.29) .. controls (6.95,-1.4) and (3.31,-0.3) .. (0,0) .. controls (3.31,0.3) and (6.95,1.4) .. (10.93,3.29)   ;
		\draw [color={rgb, 255:red, 0; green, 0; blue, 0 }  ,draw opacity=1 ]   (266.63,119.53) .. controls (265.95,107.2) and (265.93,99.28) .. (266.3,75.17) ;
		\draw [shift={(266.33,73.28)}, rotate = 90.9] [color={rgb, 255:red, 0; green, 0; blue, 0 }  ,draw opacity=1 ][line width=0.75]    (10.93,-3.29) .. controls (6.95,-1.4) and (3.31,-0.3) .. (0,0) .. controls (3.31,0.3) and (6.95,1.4) .. (10.93,3.29)   ;
		\draw [color={rgb, 255:red, 144; green, 19; blue, 254 }  ,draw opacity=1 ]   (181.13,39.28) .. controls (230.53,40.46) and (243.04,31.93) .. (264.98,51.54) ;
		\draw [shift={(266.33,52.78)}, rotate = 222.92] [color={rgb, 255:red, 144; green, 19; blue, 254 }  ,draw opacity=1 ][line width=0.75]    (10.93,-3.29) .. controls (6.95,-1.4) and (3.31,-0.3) .. (0,0) .. controls (3.31,0.3) and (6.95,1.4) .. (10.93,3.29)   ;
		\draw [color={rgb, 255:red, 208; green, 2; blue, 27 }  ,draw opacity=1 ]   (183.33,157.08) .. controls (219.8,156.49) and (257.59,161.8) .. (266.04,141.63) ;
		\draw [shift={(266.63,140.03)}, rotate = 107.85] [color={rgb, 255:red, 208; green, 2; blue, 27 }  ,draw opacity=1 ][line width=0.75]    (10.93,-3.29) .. controls (6.95,-1.4) and (3.31,-0.3) .. (0,0) .. controls (3.31,0.3) and (6.95,1.4) .. (10.93,3.29)   ;
		\draw [color={rgb, 255:red, 208; green, 2; blue, 27 }  ,draw opacity=1 ]   (172.73,146.88) .. controls (172.73,110.24) and (171.96,86.01) .. (171.55,50.51) ;
		\draw [shift={(171.53,48.88)}, rotate = 89.37] [color={rgb, 255:red, 208; green, 2; blue, 27 }  ,draw opacity=1 ][line width=0.75]    (10.93,-3.29) .. controls (6.95,-1.4) and (3.31,-0.3) .. (0,0) .. controls (3.31,0.3) and (6.95,1.4) .. (10.93,3.29)   ;
		\draw    (75.18,156.58) .. controls (44.01,156.48) and (28.4,156.48) .. (26.32,116.68) ;
		\draw [shift={(26.23,114.83)}, rotate = 87.66] [color={rgb, 255:red, 0; green, 0; blue, 0 }  ][line width=0.75]    (10.93,-3.29) .. controls (6.95,-1.4) and (3.31,-0.3) .. (0,0) .. controls (3.31,0.3) and (6.95,1.4) .. (10.93,3.29)   ;
		\draw [color={rgb, 255:red, 0; green, 0; blue, 0 }  ,draw opacity=1 ]   (95.68,156.58) .. controls (109.97,156.87) and (137.32,156.12) .. (160.18,156.45) ;
		\draw [shift={(161.93,156.48)}, rotate = 180.99] [color={rgb, 255:red, 0; green, 0; blue, 0 }  ,draw opacity=1 ][line width=0.75]    (10.93,-3.29) .. controls (6.95,-1.4) and (3.31,-0.3) .. (0,0) .. controls (3.31,0.3) and (6.95,1.4) .. (10.93,3.29)   ;
		\draw [color={rgb, 255:red, 0; green, 0; blue, 0 }  ,draw opacity=1 ]   (36.48,104.58) .. controls (76.92,104.09) and (83.73,85.06) .. (84.31,50.85) ;
		\draw [shift={(84.33,49.28)}, rotate = 90.65] [color={rgb, 255:red, 0; green, 0; blue, 0 }  ,draw opacity=1 ][line width=0.75]    (10.93,-3.29) .. controls (6.95,-1.4) and (3.31,-0.3) .. (0,0) .. controls (3.31,0.3) and (6.95,1.4) .. (10.93,3.29)   ;
		\draw   (258.73,63.03) .. controls (258.73,58.83) and (262.14,55.43) .. (266.33,55.43) .. controls (270.53,55.43) and (273.93,58.83) .. (273.93,63.03) .. controls (273.93,67.23) and (270.53,70.63) .. (266.33,70.63) .. controls (262.14,70.63) and (258.73,67.23) .. (258.73,63.03) -- cycle ;
		\draw  [fill={rgb, 255:red, 245; green, 166; blue, 35 }  ,fill opacity=0.4 ] (236.94,42.9) .. controls (240.54,37.3) and (254.94,32.5) .. (266.54,36.9) .. controls (278.14,41.3) and (315.74,73.3) .. (295.73,90.48) .. controls (275.73,107.66) and (246.94,97.3) .. (237.34,75.7) .. controls (227.74,54.1) and (233.34,48.5) .. (236.94,42.9) -- cycle ;
		\draw    (273.94,56.79) .. controls (316.84,18.51) and (329.03,78.24) .. (278.14,63.5) ;
		\draw [shift={(276.58,63.03)}, rotate = 17.32] [color={rgb, 255:red, 0; green, 0; blue, 0 }  ][line width=0.75]    (10.93,-3.29) .. controls (6.95,-1.4) and (3.31,-0.3) .. (0,0) .. controls (3.31,0.3) and (6.95,1.4) .. (10.93,3.29)   ;
		
		\draw (589,60) node [anchor=north west][inner sep=0.75pt]  [font=\scriptsize]  {$v_1$};
		\draw (406,34.8) node [anchor=north west][inner sep=0.75pt]  [font=\scriptsize]  {$v_5$};
		\draw (496,154) node [anchor=north west][inner sep=0.75pt]  [font=\scriptsize]  {$v_4$};
		\draw (494,35.2) node [anchor=north west][inner sep=0.75pt]  [font=\scriptsize]  {$v_3$};
		\draw (589,127.2) node [anchor=north west][inner sep=0.75pt]  [font=\scriptsize]  {$v_2$};
		\draw (408,154.4) node [anchor=north west][inner sep=0.75pt]  [font=\scriptsize]  {$v_6$};
		\draw (348.5,102) node [anchor=north west][inner sep=0.75pt]  [font=\scriptsize]  {$v_7$};
		\draw (257,57.88) node [anchor=north west][inner sep=0.75pt]  [font=\scriptsize]  {$v_1$};
		\draw (75,34) node [anchor=north west][inner sep=0.75pt]  [font=\scriptsize]  {$v_5$};
		\draw (164,151) node [anchor=north west][inner sep=0.75pt]  [font=\scriptsize]  {$v_4$};
		\draw (161,33.08) node [anchor=north west][inner sep=0.75pt]  [font=\scriptsize]  {$v_3$};
		\draw (258,124.28) node [anchor=north west][inner sep=0.75pt]  [font=\scriptsize]  {$v_2$};
		\draw (76.5,151.28) node [anchor=north west][inner sep=0.75pt]  [font=\scriptsize]  {$v_6$};
		\draw (17,99) node [anchor=north west][inner sep=0.75pt]  [font=\scriptsize]  {$v_7$};

	\end{tikzpicture}
	
\caption[Difference between $ \mathsf{pre} $ and $ \mathsf{tpre} $]{Computation of  $ \mu X. ~U\cup \pre{}{X}$ (left) and $\mu X. ~U\cup \tpre{}{X}  $ (right). Each colored region describes one iteration over $ X $. The dotted region on the right is added by the $ \mathsf{attr} $ part of $ \mathsf{tpre} $, and this allows only the vertex $ v_5 $ to be in $ \front(\{v_1\}) $.  Each set of the same colored edges defines a live transition group. 
}\label{fig:PreVsTpre}
\end{figure}

\smallskip
\noindent\textbf{Computing live group assumptions.}
Intuitively, the operator $ \textsf{tpre}_{\gamegraph} $ computes the union of (i)  the set of vertices from which $ \pz $ can reach $ U $ in a finite number of steps with no cooperation from $\po$ and (ii) the set of $\po$ vertices from which $\pz$ can reach $U$ with at most \emph{one-time} cooperation from $\po$.
 Looking at Fig.~\ref{fig:PreVsTpre}, case (i) is indicated by the dotted line, while case (ii) corresponds to the last added $\po$ vertex (e.g., $v_5$). Hence, we need to capture the cooperation needed by $\po$ only from the vertices added last, which we call the \emph{frontier} of $ U $ in $ \gamegraph $ and are formalized as follows:
\begin{equation}
  \front(U):=\tpre{\gamegraph}{U}\setminus \attrz{\gamegraph}{U}.
\end{equation}
It is easy to see that, indeed $ \front(U)\subseteq V^1 $, as whenever $ v\in\front(U)\cap V^0 $, then it would have been the case that $v\in\attrz{\gamegraph}{U} $ via \eqref{eq:FP:buechi}.

Defining live groups based on frontiers instead of all elements in $X^i$ indeed yields the desired permissive assumption for Büchi games. By observing that we additionally need to ensure that $\po$ never leaves the cooperative winning region by a simple safety assumption, we get the following result, which is the main contribution of this section and is proved in the appendix.

\begin{restatable}{theorem}{restatebuchi}\label{thm:Buechi assumptions}
	Let $ \game=\tup{\gamegraph = (V,E),\spec = \square\lozenge U} $ be a $\buchi$ game with $ Z^*=\TsolveBuchi(\gamegraph,U)$ and $\livegroup=\left\{\livegroupSingle\right\}_{i\geq 0}$ s.t.\
		\begin{equation}\label{eq: buechi live groups}
	\emptyset\neq \livegroupSingle:= (\front(X^i)\times (X^{i+1}\setminus \front(X^i)))\cap E,
\end{equation}
where $X^i$ is the set computed in the $ i $-th iteration of the computation over $ X $ and in the last iteration of the computation over $ Y $ in \TsolveBuchi.
	Then $\assump =  \assumpsafe(\safegroup)\land\assumpgrlive(\livegroup)$ is an \aname for $\game$, where $ \safegroup  = \computeSafe(\gamegraph, U)$. We write $\computeLive(\gamegraph,U)$ to denote the algorithm to construct live groups $\livegroup$ as above, which runs in time $ \bigO(n^3) $, where $ n=|V| $.
\end{restatable}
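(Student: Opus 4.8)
The statement bundles the three defining properties of an APA for $\game$, so I would prove them one at a time, using Prop.~\ref{prop:buechi cooperative winning region} (so that $Z^*=\TsolveBuchi(\gamegraph,U)=\team{0,1}\square\lozenge U=\bigcup_i X^i$ along the strictly increasing chain $X^0=\emptyset\subsetneq X^1\subsetneq\dots\subsetneq X^m=Z^*$) and Thm.~\ref{thm:safety assumption} (which already certifies the $\assumpsafe(\safegroup)$ conjunct as an APA for the ``stay inside $Z^*$'' safety game). Abbreviate $A_i:=X^i\cup\attrz{}{X^i}$. Unwinding $\tpre{}{\cdot}$ and using $U\cap\pre{}{Z^*}=X^1\subseteq X^i\subseteq A_i$ for $i\ge 1$ gives the working identities $X^{i+1}=A_i\cup\cpreo{}{A_i}$ and $\front(X^i)=\cpreo{}{A_i}\setminus A_i$; in particular $\front(X^i)\subseteq\vertexo$, every $v\in\front(X^i)$ has a successor in $A_i\subseteq X^{i+1}\setminus\front(X^i)$, so $\src(\livegroupSingle)=\front(X^i)$ and (given $\livegroupSingle\ne\emptyset$) an $\livegroupSingle$-edge is exactly an edge from a frontier vertex into the ``progress region'' $A_i$. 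Also write $\ell(v)=\min\{j:v\in X^j\}$ for $v\in Z^*$.

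\noindent\textbf{Implementability.} I would exhibit one $\po$ strategy $\strato$ that enforces $\assump$ from every vertex: at $v\in\front(X^i)$ play any $\livegroupSingle$-edge; at any other $v\in\vertexo\cap Z^*$ play an edge that remains inside $Z^*$ (one exists since $Z^*$ is cooperatively winning); at $v\in\vertexo\setminus Z^*$ play arbitrarily. A $\strato$-play never uses a $\safegroup$-edge (those leave $Z^*$ from $\vertexo\cap Z^*$, which $\strato$ forbids), so it satisfies $\assumpsafe(\safegroup)$; and it takes an $\livegroupSingle$-edge immediately after every visit to $\src(\livegroupSingle)=\front(X^i)$, so it satisfies $\assumpgrlive(\livegroup)$. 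Hence $\lang(\strato)\subseteq\lang(\assump)$ from every vertex, i.e.\ $\team{1}\assump=V$. This is the short part.

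\noindent\textbf{Sufficiency.} From any $v\in Z^*$ let $\pz$ use the memoryless ``lower the layer'' strategy: at $w\in\vertexz$ with $\ell(w)=j\ge 2$ we have $w\in\tpre{}{X^{j-1}}$ and, since $\front(X^{j-1})\subseteq\vertexo$, actually $w\in\attrz{}{X^{j-1}}$, so $\pz$ follows the $\pz$-attractor strategy toward $X^{j-1}$; at $w\in\vertexz$ with $\ell(w)=1$ move to a fixed successor in $Z^*$. Given any $\strato$ with $\lang(\strato)\subseteq\lang(\assump)$ and any resulting play $\rho$ from $v$, I claim $\rho\models\square\lozenge U$. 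Using $\assumpsafe$, $\rho$ stays in $Z^*$, so $\ell$ is defined along $\rho$; let $W$ be its set of infinitely recurring vertices and $m=\min_{w\in W}\ell(w)$. If $m\ge 2$, pick $w\in W$ with $\ell(w)=m$: if $w\in\vertexz$, or $w\in\attrz{}{X^{m-1}}\cap\vertexo$, the strategy (resp.\ $\pz$-properness of the attractor) forces $\rho$ from $w$ into $X^{m-1}$, contradicting minimality of $m$; if $w\in\front(X^{m-1})\cap\vertexo=\src(H_{m-1})$ then, $w$ being recurring and $\rho\models\assumpgrlive(\livegroup)$, an $H_{m-1}$-edge leaves $w$ infinitely often, reaching $A_{m-1}$, whence $\pz$ again forces into $X^{m-1}$ --- contradiction. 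So $m=1$, $W\cap(U\cap\pre{}{Z^*})\ne\emptyset$, and $U$ recurs; thus $v\in\team{0}_\assump\square\lozenge U$.

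\noindent\textbf{Permissiveness, and the main obstacle.} Let $\rho\models\square\lozenge U$. Each suffix of $\rho$ witnesses $\square\lozenge U$ cooperatively from its first vertex, so every vertex of $\rho$ lies in $\team{0,1}\square\lozenge U=Z^*$; hence $\rho$ takes no $\safegroup$-edge and $\rho\models\assumpsafe(\safegroup)$. For the live conjunct, fix $i$ with $\front(X^i)$ visited infinitely often. The crucial ``entering the progress region'' fact is: any edge $(x,x')$ with $x\notin A_i$ and $x'\in A_i$ must have $x\in\front(X^i)$ and be an $\livegroupSingle$-edge --- if $x\in\vertexz$ it would have a successor in $A_i$ and hence $x\in A_i$ by attractor-closedness of $A_i$, a contradiction; so $x\in\vertexo$, $x\in\cpreo{}{A_i}\subseteq X^{i+1}$, whence $x\in\cpreo{}{A_i}\setminus A_i=\front(X^i)$ and $(x,x')$ lands in $X^{i+1}\setminus\front(X^i)$. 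Since $\rho$ visits $\front(X^i)$ (disjoint from $A_i$) infinitely often and also visits $U\subseteq X^1\subseteq X^i\subseteq A_i$ infinitely often, it crosses into $A_i$ infinitely often, so --- by the fact above and finiteness of $E$ --- it uses some $\livegroupSingle$-edge infinitely often, i.e.\ $\rho\models\assumpgrlive(\livegroup)$. For the running time: $\safegroup$ costs $\bigO(n^2)$ by Thm.~\ref{thm:safety assumption}, while $\TsolveBuchi$ runs through $\bigO(n)$ $Y$-rounds each with $\bigO(n)$ $X$-iterations, with the extraction of $\livegroupSingle$ from $\front(X^i)$ costing $\bigO(n)$ per iteration, for $\bigO(n^3)$ overall. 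The step I expect to demand the most care --- and on which everything above silently relies --- is pinning down the exact set-algebra of the $\tpre$-layers: the identities $X^{i+1}=A_i\cup\cpreo{}{A_i}$ and $\front(X^i)=\cpreo{}{A_i}\setminus A_i$, the inclusion $\front(X^i)\subseteq\vertexo$, and the ``entering the progress region'' characterization of $\livegroupSingle$-edges; once these are established, implementability, sufficiency, and permissiveness all follow as sketched.
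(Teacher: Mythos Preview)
Your implementability and sufficiency arguments essentially match the paper's: the same witness for $\team{1}\assump=V$ (play an $\livegroupSingle$-edge at each frontier vertex, stay in $Z^*$ elsewhere), and the same rank-lowering $\pz$ strategy with a minimal-layer contradiction for sufficiency. The permissiveness argument, however, takes a different route. The paper argues by contradiction: if no $\livegroupSingle$-edge is eventually taken, then from $\front(X^i)$ the play either stays inside $\front(X^i)$ (hence never sees $U\cap Z^*\subseteq X^1$) or escapes to some $X^j\setminus X^{i+1}$, and then any edge re-entering $X^i$ would force its source into $X^{i+1}$, contradicting $j>i+1$. Your argument is direct: every edge entering $A_i$ from outside must be an $\livegroupSingle$-edge, and since the play oscillates between $\front(X^i)$ (outside $A_i$) and $U\cap Z^*\subseteq X^1\subseteq A_i$, it crosses into $A_i$---and hence uses an $\livegroupSingle$-edge---infinitely often. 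This buys a cleaner proof that avoids the index-chasing in the paper's Case~2. One caution on the step you flag as hardest: your identity $\front(X^i)=\cpreo{}{A_i}\setminus A_i$ (hence $\front(X^i)\cap A_i=\emptyset$) is exactly what the paper's proof also relies on (it writes ``$X^i\setminus\front(X^{i-1})=\attrz{}{X^{i-1}}\cup X^{i-1}$''), but it does not follow verbatim from the paper's definition $\front(U):=\tpre{}{U}\setminus\attrz{}{U}$, since the latter need not exclude vertices of $U$ itself (recall $\attrz{}{U}$ excludes $U$ by convention). You are right that pinning this down is where the care is needed; both your argument and the paper's hinge on it.
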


In fact, there is a faster algorithm that runs in time linear in the size of the graph for computation of APAs for \buchi games, which we present in Appendix \ref{sec:fasterbuchi}. We chose to present the mu-calculus based algorithm here, because it provides more insights into the nature of live groups.


\subsection{Co-Liveness Assumptions in Co-B\"uchi Games}\label{section:coBuchiGames}

A co-B\"uchi game is the \emph{dual} of a \buchi game, where a winning play should visit a designated set of vertices only finitely many times. Formally, a co-B\"uchi game is a tuple $\game=(\gamegraph,\spec)$ where $\spec=\lozenge\square U$ for some $U\subseteq V$.
 The standard symbolic algorithm to compute the cooperative winning region is as follows:
\begin{equation}\label{eq:EnvironmentcoBuchiFixpoint}
		\solveCobuchi(\gamegraph,U):=\mu X. \nu Y.~(U\cap \pre{}{Y})\cup (\pre{}{X}). 
	\end{equation}
As before, the sets $ X^i $ obtained in the $ i $-th computation of $ X $ during the evaluation of \eqref{eq:EnvironmentcoBuchiFixpoint} carry essential information for constructing assumptions. Intuitively, $ X^1 $ gives precisely the set of vertices from which the play can stay in $ U $ with $ \po $'s cooperation and we would like an assumption to capture the fact that we do not want $ \po $ to go further away from $ X^1 $ infinitely often. This observation is naturally described by so called co-liveness templates.

\begin{definition}
 	Let $\gamegraph=(V,E)$ be a game graph and $ \colivegroup\subseteq V\times V $ a set of edges. Then a \emph{co-liveness template} over $\gamegraph$ w.r.t.\ $\colivegroup$ is defined by the LTL formula
 	\begin{equation}
 		\textstyle \assumpdep(\colivegroup) \coloneqq \lozenge\square \bigwedge_{e\in \colivegroup} \neg e.
 	\end{equation}
\end{definition}

The assumptions employing co-liveness templates will be called co-liveness assumptions. With this, we can state the main result of this section.

\begin{restatable}{theorem}{restatecobuchi}\label{thm:coBuechi assumptions}
	Let $ \game=\tup{\gamegraph = (V,E), \lozenge\square U} $, $ Z^*=\solveCobuchi(\gamegraph,U)$ and 
		\begin{equation}\label{eq: cobuechi colive edges}
		\textstyle
			\colivegroup =\left(\
	\begin{aligned}\textstyle
		\left[(X^1\cap V^1) \times (Z^*\setminus X^{1})\right] ~\cup 
		\left[\bigcup_{i>1} (X^{i}\cap V^1)\times (Z^*\setminus X^{i-1})\right]
	\end{aligned}\right)\cap E,
		\end{equation}
	where $ X^i $ is the set computed in the $ i $-th iteration of fixpoint variable $ X $ in $\solveCobuchi$. Then $\assump =  \assumpsafe(\safegroup)\land\assumpdep(\colivegroup)$ is an \aname for $\game$, where $ \safegroup  = \computeSafe(\gamegraph, U)$. 
	We write $\computeCoLive(\gamegraph,U)$ to denote the algorithm constructing co-live edges~$\colivegroup$ as above which runs in time $ \bigO(n^3) $, where $ n=|V| $.
\end{restatable}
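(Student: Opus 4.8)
The plan is to mirror the four-step argument behind Theorems~\ref{thm:safety assumption} and~\ref{thm:Buechi assumptions}: first invoke correctness of the standard co-B\"uchi fixpoint so that $Z^{*}=\solveCobuchi(\gamegraph,U)=\team{0,1}\lozenge\square U$, then verify \emph{permissiveness}, \emph{implementability}, and \emph{sufficiency} of $\assump=\assumpsafe(\safegroup)\wedge\assumpdep(\colivegroup)$ one by one, and finally read the running time off the fixpoint. Throughout I would work with the rank function $\rankfun\colon Z^{*}\to\N$ sending $v$ to the least $i$ with $v\in X^{i}$, and record the structural dictionary it induces: from a $\po$-vertex of rank $1$ the non-$\colivegroup$ edges are exactly those staying inside $X^{1}$, from a $\po$-vertex of rank $i>1$ the non-$\colivegroup$ edges are exactly those that \emph{strictly decrease} the rank (so $\colivegroup$ edges keep the rank $\ge i$), and $\safegroup$ edges are precisely the $\po$-edges leaving $Z^{*}$.

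For \emph{permissiveness} I would argue directly that $\lang(\lozenge\square U)\subseteq\lang(\assump)$. If $\rho$ satisfies $\lozenge\square U$, then from the moment it stays in $U$ forever it also stays in $X^{1}$ (a $U$-tail witnesses membership in $X^{1}=\nu Y.\,U\cap\pre{}{Y}$), and every earlier vertex has a winning continuation and hence lies in $Z^{*}$; so $\rho$ never uses a $\safegroup$ edge, i.e.\ $\rho\models\assumpsafe(\safegroup)$, and once $\rho$ is confined to $X^{1}$ every edge it takes is intra-$X^{1}$, none of which is in $\colivegroup$, so $\rho\models\assumpdep(\colivegroup)$.

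For \emph{implementability} I would exhibit a memoryless $\po$-strategy: from a $\po$-vertex in $Z^{*}$ of rank $1$ move inside $X^{1}$ (possible since $X^{1}\subseteq\pre{}{X^{1}}$), from a $\po$-vertex in $Z^{*}$ of rank $i>1$ move to a successor of minimal rank that stays in $Z^{*}$ (one exists by cooperative winnability), and outside $Z^{*}$ move arbitrarily. No $\po$-move leaves $Z^{*}$, so $\assumpsafe(\safegroup)$ holds along every compliant play. For the co-liveness part, under this strategy the rank is non-increasing along $\po$-moves, and a $\po$-vertex from which \emph{every} move is a $\colivegroup$ edge is, by the inner $\nu$-fixpoint, forced to lie in $U$ with a same-rank successor in $X^{i}$, so an infinite run through such vertices would produce an infinite $U$-path, contradicting its rank being $>1$. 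I would turn this into the statement that every compliant play uses $\colivegroup$ edges only finitely often, giving $\assumpdep(\colivegroup)$ and hence $\team{1}\assump=V$. (Some care is needed because $Z^{*}$ need not be closed under adversarial $\pz$-moves, so the play may leave and re-enter $Z^{*}$, and one must check re-entry points do not reset the argument.)

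For \emph{sufficiency} I would dualize: on $Z^{*}$ let $\pz$ play a memoryless rank-reducing strategy (strictly decrease the rank when possible, otherwise stay inside $X^{i}\cap U$, and inside $X^{1}$ stay inside $X^{1}$). Fixing $\strato$ with $\lang(\strato)\subseteq\lang(\assump)$ and any resulting play $\rho$ from $Z^{*}$, the safety conjunct keeps $\rho$ inside $Z^{*}$, so ranks stay finite, and after the finite prefix where $\rho$ stops using $\colivegroup$ edges every visited $\po$-vertex has a rank-decreasing move available and, being barred from all $\colivegroup$ edges, must take it; hence $\po$-moves strictly decrease and $\pz$-moves never increase the rank. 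Then either $\po$ moves infinitely often and the rank collapses to $1$, or the play is eventually all $\pz$-moves, which by the same infinite-$U$-path argument cannot stabilise above rank $1$; either way $\rho$ is trapped in $X^{1}\subseteq U$, so $\rho\models\lozenge\square U$. With the trivial inclusion $\team{0}_{\assump}\lozenge\square U\subseteq\team{0,1}\lozenge\square U=Z^{*}$ this gives $\team{0}_{\assump}\lozenge\square U=Z^{*}$, hence sufficiency. The running time is dominated by the two-nested fixpoint $\solveCobuchi$ and the edge-by-layer extraction of $\safegroup,\colivegroup$, which is $\bigO(n^{3})$. The step I expect to be the real obstacle is implementability --- ruling out that $\po$ is dragged through co-live edges infinitely often --- which is exactly where the fine structure of the layers $X^{i}$ and the handling of excursions out of $Z^{*}$ carry the weight.
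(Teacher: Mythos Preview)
Your overall plan---rank function induced by the $X^i$, a rank-reducing $\pz$-strategy for sufficiency, and separate verification of the three properties---is exactly the paper's approach. Your permissiveness argument (a play satisfying $\lozenge\square U$ has a tail in $X^1$ and therefore eventually avoids every edge in $\colivegroup$) is a clean direct version of the paper's contrapositive, and your sufficiency argument is essentially the paper's, with the extra ``a tail of $\pz$-moves at constant rank $r>1$ would be an infinite $U$-path, hence already in $X^1$'' observation filling in what the paper leaves implicit.

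The genuine gap is implementability. The paper dispatches it in one sentence, asserting that ``each source has at least one alternative edge that is neither co-live nor unsafe''; you try to argue more carefully, but your key step---``an infinite run through such vertices would produce an infinite $U$-path''---does not go through. A $\po$-vertex $v$ of rank $r>1$ whose every outgoing edge is co-live does lie in $U$ (since $v\in X^r\setminus\pre{}{X^{r-1}}$ forces $v\in U\cap\pre{}{X^r}$), but its rank-$r$ successor need be neither in $U$ nor a $\po$-vertex, so no $U$-path is produced and the intended contradiction with $\rankfun(v)>1$ never materialises. Moreover, the obstacle you single out (excursions out of $Z^*$) is not the real one: what actually breaks the argument is $\pz$ \emph{raising} the rank inside $Z^*$. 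Concretely, take $V^1=\{a\}$, $V^0=\{b,c\}$, $U=\{a,c\}$, and edges $a\to b$, $b\to c$, $c\to c$, $c\to a$; then $X^1=\{c\}$, $Z^*=X^2=\{a,b,c\}$, the unique co-live edge is $(a,b)$---which is $a$'s \emph{only} edge---and $\pz$ playing $c\to a$ forces the play $(abc)^\omega$, so $\po$ cannot realise $\assumpdep(\colivegroup)$ from $a$. Your implementability argument therefore cannot be completed along the lines you sketch, and the paper's own one-line justification does not address this case either.
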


We observe that $ X_1 $ is a subset of $ U $ such that if a play reaches $ X^1 $, $ \pz $ and $ \po $ can cooperatively keep the play in $ X^1 $. To do so, we ensure via the definition of $ \colivegroup $ in \eqref{eq: cobuechi colive edges} that $ \po $ can only leave $ X^1 $ finitely often. Moreover, with the other co-live edges in $ \colivegroup $, we ensure that $ \po $ can only go away from $ X^1 $ finitely often, and hence if $ \pz $ plays their strategy to reach $ X^1 $ and then stay there, the play will be winning. The permissiveness of the assumption comes from the observation that if co-liveness is violated, then $ \po $ takes a co-live edge infinitely often, and hence leaves $ X^1 $ infinitely often, implying leaving $ U $ infinitely often. We refer the reader to the Appendix~\ref{proof:cobuechi assumption is correct} for a formal proof of Theorem \ref{thm:coBuechi assumptions}.

 In the context of \eqref{eq:EnvironmentcoBuchiFixpoint} the set
 \begin{equation}\label{eq: cobuechi colive edges with pre}
 \textstyle
 	\colivegroup =\left(\
 	\begin{aligned}
 		\left[(X^1\cap V^1) \times (Z^*\setminus X^{1})\right] ~\cup 
 		\left[\bigcup_{i>1} (X^{i}\cap V^1)\times (Z^*\setminus X^{i-1})\right]
 	\end{aligned}\right)\cap E,
 \end{equation}
  results in the desired co-live assumptions. We argued that this defines the adequately permissive assumption for co-B\"uchi games. However, utilizing the observation from Section ~\ref{section:BuchiGames} we can equivalently use the accelerated fixed-point algorithm resulting from replacing the pre-operator over $X$ in \eqref{eq:EnvironmentcoBuchiFixpoint} by the $\mathsf{tpre}$ operator. This again only accelerates the computation, as formalized in the following proposition and visualized in Fig.~\ref{fig:PreVsTpreCoBuchi}.

 \begin{restatable}{proposition}{restate:prop:cobuechi_tpre}\label{prop:cobuechi_tpre}
 	Let $ \game=\tup{\gamegraph,\lozenge\square U} $ be a $\cobuchi$ game and 
 	\begin{equation}\label{eq:FP:cobuechi}
 	\TsolveCobuchi(\gamegraph,U)= \mu X.\nu Y.~(U\cap \pre{}{Y})\cup (\tpre{}{X}).
 \end{equation}
 Then $ \TsolveCobuchi(\gamegraph,U)=\solveCobuchi(\gamegraph,U)=\team{0,1}\lozenge\square U$.
 \end{restatable}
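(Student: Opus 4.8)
The plan is to prove the single non-trivial equality $\TsolveCobuchi(\gamegraph,U)=\solveCobuchi(\gamegraph,U)$; the remaining equality $\solveCobuchi(\gamegraph,U)=\team{0,1}\lozenge\square U$ is the known correctness of~\eqref{eq:EnvironmentcoBuchiFixpoint}. Abbreviate $F(X):=\nu Y.~(U\cap\pre{}{Y})\cup\pre{}{X}$ and $G(X):=\nu Y.~(U\cap\pre{}{Y})\cup\tpre{}{X}$, so that $\solveCobuchi=\mu X.~F(X)=:Z^*$ and $\TsolveCobuchi=\mu X.~G(X)$; both $F$ and $G$ are monotone set transformers. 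By the Knaster--Tarski theorem, a least fixpoint is also the least pre-fixpoint, so it suffices to establish $G(Z^*)\subseteq Z^*$ (which gives $\mu G\subseteq Z^*$) and $F(\mu G)\subseteq\mu G$ (which gives $\mu F\subseteq\mu G$); together these yield $\mu F=\mu G$.

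Unlike in Proposition~\ref{prop:buechi cooperative winning region}, where the $\mathsf{pre}\to\mathsf{tpre}$ replacement occurs at the \emph{inner} fixpoint, so that Lemma~\ref{lemma:tpre is accelerated pre} applies verbatim with $W=U\cap\pre{}{Y}$, here the replacement occurs at the \emph{outer} fixpoint, and the naive inclusion $\pre{}{X}\subseteq\tpre{}{X}$ \emph{fails}. The two properties of $\tpre$ I would use instead are: (A) $\pre{}{X}\subseteq X\cup\tpre{}{X}$ for every $X$, which follows by unfolding the definitions of $\attrz{}{X}$ and $\cpreo{}{\attrz{}{X}\cup X}$; and (B) $\tpre{}{S}\subseteq S$ whenever $\pre{}{S}\subseteq S$ (call such $S$ $\mathsf{pre}$-closed) --- this is essentially the content of the proof of Lemma~\ref{lemma:tpre is accelerated pre}, as $\attrz{}{S}$ reaches $S$ only along paths that already lie in $S$, and $\cpreo{}{\attrz{}{S}\cup S}=\cpreo{}{S}\subseteq\pre{}{S}\subseteq S$. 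I would also record that $F(\emptyset)=G(\emptyset)=\nu Y.~U\cap\pre{}{Y}=\team{0,1}\square U$ (using $\pre{}{\emptyset}=\tpre{}{\emptyset}=\emptyset$ and Theorem~\ref{thm:safety assumption}), so that $\team{0,1}\square U\subseteq\mu F$ and $\team{0,1}\square U\subseteq\mu G$.

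The common engine of both inclusions is the claim: if $S\supseteq\team{0,1}\square U$ and $\pre{}{S}\subseteq S$, then $F(S)\subseteq S$; and if $S\supseteq\team{0,1}\square U$ and $\tpre{}{S}\subseteq S$, then $G(S)\subseteq S$. I would prove the first by a short trap argument: $F(S)$ is the greatest fixpoint of $h(Y)=(U\cap\pre{}{Y})\cup\pre{}{S}$, so it suffices to show $Y\subseteq S$ for every $Y$ with $Y\subseteq h(Y)$; each vertex of $Y\setminus S$ lies in $U$ (it is not in $\pre{}{S}\subseteq S$) and has a successor in $Y$ that again lies in $Y\setminus S$ (a successor in $S$ would place it in $\pre{}{S}\subseteq S$), hence $Y\setminus S$ is a subset of $U$ in which every vertex has a successor, so $Y\setminus S\subseteq\team{0,1}\square U\subseteq S$, forcing $Y\subseteq S$. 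The $\mathsf{tpre}$-variant is analogous, using $\tpre{}{S}\subseteq S$ and (A) to see that a vertex outside $S$ with a successor in $S$ must lie in $\tpre{}{S}\subseteq S$. To conclude: $Z^*=\mu F$ is a fixpoint of $F$, so evaluating the inner greatest fixpoint at $Z^*$ gives $Z^*=(U\cap\pre{}{Z^*})\cup\pre{}{Z^*}=\pre{}{Z^*}$, so $Z^*$ is $\mathsf{pre}$-closed, hence $\tpre{}{Z^*}\subseteq Z^*$ by (B), so the claim gives $G(Z^*)\subseteq Z^*$; symmetrically, $\mu G$ is a fixpoint of $G$, so evaluating gives $\mu G=(U\cap\pre{}{\mu G})\cup\tpre{}{\mu G}$, whence $\tpre{}{\mu G}\subseteq\mu G$ and therefore $\pre{}{\mu G}\subseteq\mu G$ by (A), so the claim gives $F(\mu G)\subseteq\mu G$. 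Hence $\TsolveCobuchi=\mu G=\mu F=\solveCobuchi=\team{0,1}\lozenge\square U$. The step I expect to be the main obstacle is isolating fact (A) and the trap argument correctly --- in particular, noticing that $\pre{}{X}\subseteq\tpre{}{X}$ fails and that one must route through the extra term $X$ on the right-hand side of (A); everything downstream is routine fixpoint bookkeeping, parallel to the proofs of Proposition~\ref{prop:buechi cooperative winning region} and Lemma~\ref{lemma:tpre is accelerated pre}.
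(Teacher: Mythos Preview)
The paper gives no separate proof; it just points to ``the observation from Section~\ref{section:BuchiGames}'' (Lemma~\ref{lemma:tpre is accelerated pre}). You correctly note this does not carry over verbatim: in the B\"uchi formula the $\mathsf{pre}\to\mathsf{tpre}$ switch sits at the \emph{inner} $\mu$-fixpoint (so Lemma~\ref{lemma:tpre is accelerated pre} applies with $U\cap\pre{}{Y}$ in place of $U$), whereas here it sits at the \emph{outer} $\mu$, under the $\nu Y$. Your route via least pre-fixpoints, the two facts (A) $\pre{}{X}\subseteq X\cup\tpre{}{X}$ and (B) $\pre{}{S}\subseteq S\Rightarrow\tpre{}{S}\subseteq S$, and the trap argument for $F(S)\subseteq S$ and $G(S)\subseteq S$, is a genuinely different and more explicit argument than anything the paper provides.

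There is, however, one real gap. You assert that $G$ is monotone; this is what licenses ``$G(Z^*)\subseteq Z^*\Rightarrow\mu G\subseteq Z^*$''. Under the paper's definition of $\attrz{}{U}$ (which \emph{excludes} $U$), $\tpre$ is not monotone: with $V^0=\{v,w'\}$, $V^1=\{w\}$ and edges $(v,w),(v,w'),(w,w),(w',w')$, one gets $\tpre{}{\{w\}}=\{v,w\}$ but $\tpre{}{\{v,w\}}=\{w\}$; taking the co-B\"uchi target $U=\{w\}$, the $G$-iterates from $\emptyset$ then oscillate between $\{w\}$ and $\{v,w\}$. So neither Knaster--Tarski nor iterative convergence is available for $G$ as stated. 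The paper's own proof of Lemma~\ref{lemma:tpre is accelerated pre} has the same blind spot (it explicitly invokes ``monotonicity of $\mathsf{tpre}$''), so this is an inherited issue rather than one you introduced. The simplest repair is to work with the standard attractor that includes its target (equivalently, to replace $\tpre{}{X}$ by $X\cup\tpre{}{X}$, which \emph{is} monotone since $\attrz{}{X}\cup X$ is the usual attractor); then your argument goes through as written.
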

 	
 When using the accelerated fixed-point algorithm, we can again restrict attention to $\po$ vertices in the frontier of $X^i$ for the construction of assumptions. In section \ref{section:BuchiGames}, for \buchi game $ \buchigame(U) $, we introduced live groups to take the play towards $ U $, whenever $ \po $ can. But for \cobuchi games, we need to restrict $ \po $ from going away from the region $ U_{good} $ where the play stays in $ U $. This requires co-liveness assumption only over frontiers of $ X^i $s, since any other vertex of $ \po $ added in $ X^{i+1} $ is added in the $ \mathsf{attr}^0 $ part of $ \mathsf{tpre} $, and hence can not go away from $ U_{good} $ anyway.
  With this, we have the following main result of this section, for which we provide the proof in Appendix~\ref{sec:accelcobuchi}.

 \begin{restatable}{theorem}{Trestatecobuchi}\label{thm:TcoBuechi assumptions}
 	Let $ \game=\tup{\gamegraph = (V,E), \spec = \lozenge\square U} $ be a \cobuchi game with $ Z^*=\TsolveCobuchi(\gamegraph,U)$ and 
 		\begin{equation}\label{eq: Tcobuechi colive edges}
 			\colivegroup =\left(\
 			 \begin{aligned}
 			 	(X^1\cap V^1) \times (Z^*\setminus X^{1})~\cup \\
 				\bigcup_{i>1} \front(X^i)\times (\front(X^i) \cup Z^*\setminus X^{i+1})
 			\end{aligned}\right)\cap E,
 		\end{equation}
 	where $ X^i $ is the set computed in the $ i $-th iteration of fixpoint variable $ X $. Then $\assump =  \assumpsafe(\safegroup)\land\assumpdep(\colivegroup)$ is an \aname for $\game$, where $ \safegroup  = \computeSafe(\gamegraph, U)$. Moreover, $ \colivegroup $ can be constructed in time $ \bigO(n^3) $, where $ n $ is the number of vertices.
 \end{restatable}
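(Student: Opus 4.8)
The plan is to reduce the statement to the non-accelerated co-B\"uchi theorem (Theorem~\ref{thm:coBuechi assumptions}) and to re-check the three \aname properties for the thinned-out co-live edge set. By Proposition~\ref{prop:cobuechi_tpre} the accelerated fixpoint $\TsolveCobuchi(\gamegraph,U)$ returns the same set $Z^*=\team{0,1}\lozenge\square U$ as $\solveCobuchi$, and since $\tpre{}{\emptyset}=\attrz{}{\emptyset}\cup\cpreo{}{\emptyset}=\emptyset$ the two computations share their first iterate: $X^1=\nu Y.\,U\cap\pre{}{Y}=\team{0,1}\square U$, which I write $W$. Hence the safety component $\safegroup=\computeSafe(\gamegraph,U)$, the set $W$ occurring in the first disjunct of $\colivegroup$, and the value $Z^*$ are all unchanged from Theorem~\ref{thm:coBuechi assumptions}; only the later iterates $X^i$ ($i>1$), and the decision to let the sources of the new co-live edges range over the frontiers $\front(X^i)$ instead of over all of $X^i\cap V^1$, are new. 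So it suffices to re-run the three parts of the proof of Theorem~\ref{thm:coBuechi assumptions} with this smaller $\colivegroup$.

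The fact that justifies the thinning, and which I would isolate as the first lemma, is the decomposition of the accelerated iterates already sketched in the text before the theorem: for every $i\geq 1$, each vertex added to $X^{i+1}$ lies either in the attractor part $\attrz{}{X^i}$ or in the frontier $\front(X^i)=\tpre{}{X^i}\setminus\attrz{}{X^i}$, the latter consisting only of $\po$-vertices, and every $\po$-vertex in $\attrz{}{X^i}$ has \emph{all} of its successors inside $\attrz{}{X^i}\cup X^i$, so $\pz$ forces a strict drop of level from such a vertex with no cooperation. This is read off the definitions of $\tpre$, $\front$ and the attractor. Given it, \textbf{implementability} is obtained by exhibiting one $\po$-strategy that obeys $\assump$ everywhere: outside $Z^*$ play arbitrarily (no $\safegroup$- or $\colivegroup$-edge starts there), and inside $Z^*$ never take a $\safegroup$-edge (possible by the successor-closure of $Z^*$/$W$ used already in Theorem~\ref{thm:safety assumption}) and, whenever the play currently sits at a frontier vertex of some $X^i$, take the cooperative edge that drops the level; this makes the level of the play non-increasing, hence eventually constant, so every $\colivegroup$-edge is taken only finitely often. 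For \textbf{permissiveness}, take any $\rho\in\lang(\lozenge\square U)$: from some point on $\rho$ stays in $U$ and therefore, witnessed by $\rho$ itself, in $W=X^1\subseteq X^i$ for every $i$; hence from that point on it takes no $\safegroup$-edge (exactly as in Theorem~\ref{thm:coBuechi assumptions}) and no $\colivegroup$-edge, because every edge of $\colivegroup$ either leaves $W$ (the first disjunct, and the $Z^*\setminus X^{i+1}$ component of the second — these land outside the trap $W$ and are therefore also forbidden by the safety template) or goes between two \emph{newly added} frontier vertices, which are disjoint from $X^i\supseteq W$; so $\rho\models\assump$.

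The main obstacle, just as in Theorem~\ref{thm:coBuechi assumptions}, is \textbf{sufficiency}: building a $\pz$-strategy winning under $\assump$ from every vertex of $Z^*$. The natural candidate is: on $\attrz{}{X^i}$ play the $\pz$-attractor strategy towards $X^i$, on the remaining newly added (in-$U$) vertices of $X^{i+1}$ play towards $\tpre{}{X^i}$ while staying in $U$, and once the play is in $W$ switch to the cooperative strategy that keeps it inside $W\subseteq U$ as long as $\po$ obeys the safety template. One then argues that along any play compliant with this strategy and with $\assump$ and starting in $Z^*$, the level $\ell(\rho_k):=\min\{i\mid\rho_k\in X^i\}$ never increases and, while positive, strictly decreases within finitely many steps: by the decomposition lemma the only case in which $\pz$'s own moves do not already force a drop is when the play lingers at a frontier vertex of $X^{\ell}$, and then the co-liveness template of $\colivegroup$ — which forbids using the ``staying-or-receding'' edges out of $\front(X^{\ell})$ infinitely often — forces the cooperative step into $X^{\ell-1}\cup\tpre{}{X^{\ell-1}}$; once $\ell=1$ the play is trapped in $W\subseteq U$ forever. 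I expect the delicate technical points to be (a) a frontier vertex of $X^i$ that also belongs to a strictly lower iterate $X^j$, whose attached co-liveness obligation must be shown compatible with the play legitimately descending through $X^j$ (handled by the fact that $\colivegroup$-edges out of $\front(X^i)$ only forbid moves into $\front(X^i)\cup(Z^*\setminus X^{i+1})$, never a genuine descent), and (b) the ``stay-in-$U$'' segments between two $\tpre$-applications, which must be shown loop-free (a cycle inside $U$ would witness membership in $W$) and hence traversed in finitely many steps. Finally, the complexity bound is routine: $\TsolveCobuchi$ has at most $n=|V|$ iterations of the outer fixpoint variable, each frontier/attractor set is obtained with $\bigO(n^2)$ work, and for each level $i$ assembling the edge set $\front(X^i)\times(\front(X^i)\cup(Z^*\setminus X^{i+1}))\cap E$ costs $\bigO(n^2)$; summing over the $\bigO(n)$ levels gives $\bigO(n^3)$.
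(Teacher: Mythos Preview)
Your plan mirrors the paper's proof: both re-verify sufficiency, implementability and permissiveness for the thinned-out $\colivegroup$, using the decomposition of each accelerated iterate into its attractor part, its frontier, and the in-$U$ remainder. Your sufficiency sketch is in fact more explicit than the paper's about the ``in-$U$'' vertices between two $\tpre$-applications. But there is one recurring mis-attribution that leaves a real hole in the sufficiency argument.

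You twice make the \emph{safety} template responsible for confining the play to $W=X^1$ --- in the permissiveness parenthetical (``therefore also forbidden by the safety template'') and in sufficiency (``keeps it inside $W$ as long as $\po$ obeys the safety template''). This cannot work. If $\safegroup$ really consisted of the $\po$-edges leaving $W$ (the literal reading of $\computeSafe(\gamegraph,U)$ via Theorem~\ref{thm:safety assumption}), then $\assumpsafe(\safegroup)$ would already fail permissiveness for $\lozenge\square U$: a winning play may pass through $W$, leave to $Z^*\setminus W$ via a $\po$-edge, and only later settle in $U$. In the proofs of Theorems~\ref{thm:coBuechi assumptions} and~\ref{thm:TcoBuechi assumptions} the set $\safegroup$ has to be read as the $\po$-edges leaving the co-B\"uchi region $Z^*$ (cf.\ Algorithm~\ref{alg:compute cobuechi assumption}, line~2), so the safety template only confines the play to $Z^*$, not to $W$. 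What eventually traps the play in $W$ is precisely the \emph{first} disjunct $(X^1\cap V^1)\times(Z^*\setminus X^1)$ of $\colivegroup$: every $\po$-exit from $W$ into $Z^*\setminus W$ is co-live and hence taken only finitely often. Once you replace the two appeals to the safety template by this observation, your level argument closes exactly as in the paper: after the last co-live edge the level is non-increasing, strictly drops at levels $>1$ by your decomposition lemma together with $\pz$'s attractor/stay-in-$U$ strategy, and at level~$1$ the play remains in $W\subseteq U$ because $\po$ can no longer use a first-disjunct edge.
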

 

 	In fact, there is again a faster algorithm that runs in time linear in size of the graph for computation of APAs for \cobuchi games, which we present in the Appendix \ref{sec:fastercobuchi}. We chose to present this version for the same reasons as for the \buchi games. 
 	
 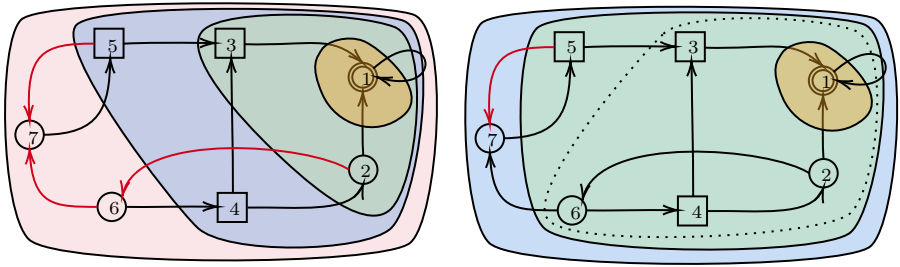
\begin{figure}[h]
 	\centering

 	\tikzset{every picture/.style={line width=0.75pt}} 
 	
 	\begin{tikzpicture}[x=0.75pt,y=0.75pt,yscale=-0.7,xscale=0.7]
 		
 		\draw  [fill={rgb, 255:red, 74; green, 144; blue, 226 }  ,fill opacity=0.3 ] (351.41,43.34) .. controls (368.61,26.94) and (607.01,29.34) .. (634.21,39.74) .. controls (661.41,50.14) and (651.81,186.96) .. (631.81,204.14) .. controls (611.8,221.32) and (377.81,221.74) .. (357.01,201.74) .. controls (336.21,181.74) and (334.21,59.74) .. (351.41,43.34) -- cycle ;
 		\draw  [fill={rgb, 255:red, 184; green, 233; blue, 134 }  ,fill opacity=0.3 ] (391.41,46.94) .. controls (408.61,30.54) and (599.81,33.74) .. (627.01,44.14) .. controls (654.21,54.54) and (637.41,176.96) .. (617.41,194.14) .. controls (597.4,211.32) and (417.41,211.07) .. (396.61,191.07) .. controls (375.81,171.07) and (374.21,63.34) .. (391.41,46.94) -- cycle ;
 		\draw   (404.2,49.8) -- (425.2,49.8) -- (425.2,70.8) -- (404.2,70.8) -- cycle ;
 		\draw   (587.65,151.5) .. controls (587.65,145.84) and (592.24,141.25) .. (597.9,141.25) .. controls (603.56,141.25) and (608.15,145.84) .. (608.15,151.5) .. controls (608.15,157.16) and (603.56,161.75) .. (597.9,161.75) .. controls (592.24,161.75) and (587.65,157.16) .. (587.65,151.5) -- cycle ;
 		\draw   (587.35,84.75) .. controls (587.35,79.09) and (591.94,74.5) .. (597.6,74.5) .. controls (603.26,74.5) and (607.85,79.09) .. (607.85,84.75) .. controls (607.85,90.41) and (603.26,95) .. (597.6,95) .. controls (591.94,95) and (587.35,90.41) .. (587.35,84.75) -- cycle ;
 		\draw   (347.25,126.3) .. controls (347.25,120.64) and (351.84,116.05) .. (357.5,116.05) .. controls (363.16,116.05) and (367.75,120.64) .. (367.75,126.3) .. controls (367.75,131.96) and (363.16,136.55) .. (357.5,136.55) .. controls (351.84,136.55) and (347.25,131.96) .. (347.25,126.3) -- cycle ;
 		\draw   (406.45,178.3) .. controls (406.45,172.64) and (411.04,168.05) .. (416.7,168.05) .. controls (422.36,168.05) and (426.95,172.64) .. (426.95,178.3) .. controls (426.95,183.96) and (422.36,188.55) .. (416.7,188.55) .. controls (411.04,188.55) and (406.45,183.96) .. (406.45,178.3) -- cycle ;
 		\draw   (491.4,49.8) -- (512.4,49.8) -- (512.4,70.8) -- (491.4,70.8) -- cycle ;
 		\draw   (493,168.2) -- (514,168.2) -- (514,189.2) -- (493,189.2) -- cycle ;
 		\draw [color={rgb, 255:red, 208; green, 2; blue, 27 }  ,draw opacity=1 ]   (404,60.6) .. controls (367.37,60.4) and (354.26,66.87) .. (357.4,114.59) ;
 		\draw [shift={(357.5,116.05)}, rotate = 265.92] [color={rgb, 255:red, 208; green, 2; blue, 27 }  ,draw opacity=1 ][line width=0.75]    (10.93,-3.29) .. controls (6.95,-1.4) and (3.31,-0.3) .. (0,0) .. controls (3.31,0.3) and (6.95,1.4) .. (10.93,3.29)   ;
 		\draw [color={rgb, 255:red, 0; green, 0; blue, 0 }  ,draw opacity=1 ]   (425.2,60.6) .. controls (441.66,59.82) and (458.9,59.8) .. (489.32,60.18) ;
 		\draw [shift={(491.2,60.2)}, rotate = 180.73] [color={rgb, 255:red, 0; green, 0; blue, 0 }  ,draw opacity=1 ][line width=0.75]    (10.93,-3.29) .. controls (6.95,-1.4) and (3.31,-0.3) .. (0,0) .. controls (3.31,0.3) and (6.95,1.4) .. (10.93,3.29)   ;
 		\draw    (597.9,141.25) .. controls (597.22,128.92) and (597.2,121) .. (597.57,96.89) ;
 		\draw [shift={(597.6,95)}, rotate = 90.9] [color={rgb, 255:red, 0; green, 0; blue, 0 }  ][line width=0.75]    (10.93,-3.29) .. controls (6.95,-1.4) and (3.31,-0.3) .. (0,0) .. controls (3.31,0.3) and (6.95,1.4) .. (10.93,3.29)   ;
 		\draw    (512.4,61) .. controls (561.79,62.18) and (574.3,53.65) .. (596.24,73.26) ;
 		\draw [shift={(597.6,74.5)}, rotate = 222.92] [color={rgb, 255:red, 0; green, 0; blue, 0 }  ][line width=0.75]    (10.93,-3.29) .. controls (6.95,-1.4) and (3.31,-0.3) .. (0,0) .. controls (3.31,0.3) and (6.95,1.4) .. (10.93,3.29)   ;
 		\draw    (514.6,178.8) .. controls (551.07,178.22) and (588.86,183.52) .. (597.31,163.35) ;
 		\draw [shift={(597.9,161.75)}, rotate = 107.85] [color={rgb, 255:red, 0; green, 0; blue, 0 }  ][line width=0.75]    (10.93,-3.29) .. controls (6.95,-1.4) and (3.31,-0.3) .. (0,0) .. controls (3.31,0.3) and (6.95,1.4) .. (10.93,3.29)   ;
 		\draw    (504,168.6) .. controls (504,131.96) and (503.22,107.73) .. (502.82,72.23) ;
 		\draw [shift={(502.8,70.6)}, rotate = 89.37] [color={rgb, 255:red, 0; green, 0; blue, 0 }  ][line width=0.75]    (10.93,-3.29) .. controls (6.95,-1.4) and (3.31,-0.3) .. (0,0) .. controls (3.31,0.3) and (6.95,1.4) .. (10.93,3.29)   ;
 		\draw    (406.45,178.3) .. controls (375.27,178.2) and (359.67,178.2) .. (357.59,138.4) ;
 		\draw [shift={(357.5,136.55)}, rotate = 87.66] [color={rgb, 255:red, 0; green, 0; blue, 0 }  ][line width=0.75]    (10.93,-3.29) .. controls (6.95,-1.4) and (3.31,-0.3) .. (0,0) .. controls (3.31,0.3) and (6.95,1.4) .. (10.93,3.29)   ;
 		\draw    (426.95,178.3) .. controls (441.23,178.59) and (468.59,177.84) .. (491.45,178.17) ;
 		\draw [shift={(493.2,178.2)}, rotate = 180.99] [color={rgb, 255:red, 0; green, 0; blue, 0 }  ][line width=0.75]    (10.93,-3.29) .. controls (6.95,-1.4) and (3.31,-0.3) .. (0,0) .. controls (3.31,0.3) and (6.95,1.4) .. (10.93,3.29)   ;
 		\draw    (367.75,126.3) .. controls (408.18,125.81) and (415,106.78) .. (415.58,72.57) ;
 		\draw [shift={(415.6,71)}, rotate = 90.65] [color={rgb, 255:red, 0; green, 0; blue, 0 }  ][line width=0.75]    (10.93,-3.29) .. controls (6.95,-1.4) and (3.31,-0.3) .. (0,0) .. controls (3.31,0.3) and (6.95,1.4) .. (10.93,3.29)   ;
 		\draw   (590,84.75) .. controls (590,80.55) and (593.4,77.15) .. (597.6,77.15) .. controls (601.8,77.15) and (605.2,80.55) .. (605.2,84.75) .. controls (605.2,88.95) and (601.8,92.35) .. (597.6,92.35) .. controls (593.4,92.35) and (590,88.95) .. (590,84.75) -- cycle ;
 		\draw  [fill={rgb, 255:red, 245; green, 166; blue, 35 }  ,fill opacity=0.4 ] (568.21,64.62) .. controls (571.81,59.02) and (586.21,54.22) .. (597.81,58.62) .. controls (609.41,63.02) and (647.01,95.02) .. (627,112.2) .. controls (606.99,129.38) and (578.21,119.02) .. (568.61,97.42) .. controls (559.01,75.82) and (564.61,70.22) .. (568.21,64.62) -- cycle ;
 		\draw  [dash pattern={on 0.84pt off 2.51pt}] (482.61,50.94) .. controls (499.81,34.54) and (593.01,38.14) .. (620.21,48.54) .. controls (647.41,58.94) and (637.01,163.36) .. (617.01,180.54) .. controls (597,197.72) and (421.01,205.87) .. (400.21,185.87) .. controls (379.41,165.87) and (465.41,67.34) .. (482.61,50.94) -- cycle ;
 		\draw    (605.21,78.51) .. controls (648.11,40.23) and (660.3,99.96) .. (609.41,85.22) ;
 		\draw [shift={(607.85,84.75)}, rotate = 17.32] [color={rgb, 255:red, 0; green, 0; blue, 0 }  ][line width=0.75]    (10.93,-3.29) .. controls (6.95,-1.4) and (3.31,-0.3) .. (0,0) .. controls (3.31,0.3) and (6.95,1.4) .. (10.93,3.29)   ;
 		\draw  [fill={rgb, 255:red, 208; green, 2; blue, 27 }  ,fill opacity=0.1 ] (19.81,40.82) .. controls (37.01,24.42) and (275.41,26.82) .. (302.61,37.22) .. controls (329.81,47.62) and (320.21,184.44) .. (300.21,201.62) .. controls (280.2,218.8) and (46.21,219.22) .. (25.41,199.22) .. controls (4.61,179.22) and (2.61,57.22) .. (19.81,40.82) -- cycle ;
 		\draw  [fill={rgb, 255:red, 74; green, 144; blue, 226 }  ,fill opacity=0.3 ] (59.81,44.42) .. controls (77.01,28.02) and (268.21,31.22) .. (295.41,41.62) .. controls (322.61,52.02) and (305.81,174.44) .. (285.81,191.62) .. controls (265.8,208.8) and (167.55,210.67) .. (146.75,190.67) .. controls (125.95,170.67) and (42.61,60.82) .. (59.81,44.42) -- cycle ;
 		\draw  [fill={rgb, 255:red, 184; green, 233; blue, 134 }  ,fill opacity=0.3 ] (151.01,48.42) .. controls (168.21,32.02) and (261.41,35.62) .. (288.61,46.02) .. controls (315.81,56.42) and (305.41,160.84) .. (285.41,178.02) .. controls (265.4,195.2) and (221.15,153.87) .. (200.35,133.87) .. controls (179.55,113.87) and (133.81,64.82) .. (151.01,48.42) -- cycle ;
 		\draw   (72.6,47.28) -- (93.6,47.28) -- (93.6,68.28) -- (72.6,68.28) -- cycle ;
 		\draw   (256.05,148.98) .. controls (256.05,143.32) and (260.64,138.73) .. (266.3,138.73) .. controls (271.96,138.73) and (276.55,143.32) .. (276.55,148.98) .. controls (276.55,154.64) and (271.96,159.23) .. (266.3,159.23) .. controls (260.64,159.23) and (256.05,154.64) .. (256.05,148.98) -- cycle ;
 		\draw   (255.75,82.23) .. controls (255.75,76.57) and (260.34,71.98) .. (266,71.98) .. controls (271.66,71.98) and (276.25,76.57) .. (276.25,82.23) .. controls (276.25,87.89) and (271.66,92.48) .. (266,92.48) .. controls (260.34,92.48) and (255.75,87.89) .. (255.75,82.23) -- cycle ;
 		\draw   (15.65,123.78) .. controls (15.65,118.12) and (20.24,113.53) .. (25.9,113.53) .. controls (31.56,113.53) and (36.15,118.12) .. (36.15,123.78) .. controls (36.15,129.44) and (31.56,134.03) .. (25.9,134.03) .. controls (20.24,134.03) and (15.65,129.44) .. (15.65,123.78) -- cycle ;
 		\draw   (74.85,175.78) .. controls (74.85,170.12) and (79.44,165.53) .. (85.1,165.53) .. controls (90.76,165.53) and (95.35,170.12) .. (95.35,175.78) .. controls (95.35,181.44) and (90.76,186.03) .. (85.1,186.03) .. controls (79.44,186.03) and (74.85,181.44) .. (74.85,175.78) -- cycle ;
 		\draw   (159.8,47.28) -- (180.8,47.28) -- (180.8,68.28) -- (159.8,68.28) -- cycle ;
 		\draw   (161.4,165.68) -- (182.4,165.68) -- (182.4,186.68) -- (161.4,186.68) -- cycle ;
 		\draw [color={rgb, 255:red, 208; green, 2; blue, 27 }  ,draw opacity=1 ]   (72.4,58.08) .. controls (35.77,57.88) and (22.66,64.35) .. (25.8,112.07) ;
 		\draw [shift={(25.9,113.53)}, rotate = 265.92] [color={rgb, 255:red, 208; green, 2; blue, 27 }  ,draw opacity=1 ][line width=0.75]    (10.93,-3.29) .. controls (6.95,-1.4) and (3.31,-0.3) .. (0,0) .. controls (3.31,0.3) and (6.95,1.4) .. (10.93,3.29)   ;
 		\draw [color={rgb, 255:red, 0; green, 0; blue, 0 }  ,draw opacity=1 ]   (93.6,58.08) .. controls (110.06,57.3) and (127.3,57.28) .. (157.72,57.66) ;
 		\draw [shift={(159.6,57.68)}, rotate = 180.73] [color={rgb, 255:red, 0; green, 0; blue, 0 }  ,draw opacity=1 ][line width=0.75]    (10.93,-3.29) .. controls (6.95,-1.4) and (3.31,-0.3) .. (0,0) .. controls (3.31,0.3) and (6.95,1.4) .. (10.93,3.29)   ;
 		\draw [color={rgb, 255:red, 0; green, 0; blue, 0 }  ,draw opacity=1 ]   (266.3,138.73) .. controls (265.62,126.4) and (265.6,118.48) .. (265.97,94.37) ;
 		\draw [shift={(266,92.48)}, rotate = 90.9] [color={rgb, 255:red, 0; green, 0; blue, 0 }  ,draw opacity=1 ][line width=0.75]    (10.93,-3.29) .. controls (6.95,-1.4) and (3.31,-0.3) .. (0,0) .. controls (3.31,0.3) and (6.95,1.4) .. (10.93,3.29)   ;
 		\draw [color={rgb, 255:red, 0; green, 0; blue, 0 }  ,draw opacity=1 ]   (180.8,58.48) .. controls (230.19,59.66) and (242.7,51.13) .. (264.64,70.74) ;
 		\draw [shift={(266,71.98)}, rotate = 222.92] [color={rgb, 255:red, 0; green, 0; blue, 0 }  ,draw opacity=1 ][line width=0.75]    (10.93,-3.29) .. controls (6.95,-1.4) and (3.31,-0.3) .. (0,0) .. controls (3.31,0.3) and (6.95,1.4) .. (10.93,3.29)   ;
 		\draw [color={rgb, 255:red, 0; green, 0; blue, 0 }  ,draw opacity=1 ]   (183,176.28) .. controls (219.47,175.69) and (257.26,181) .. (265.71,160.83) ;
 		\draw [shift={(266.3,159.23)}, rotate = 107.85] [color={rgb, 255:red, 0; green, 0; blue, 0 }  ,draw opacity=1 ][line width=0.75]    (10.93,-3.29) .. controls (6.95,-1.4) and (3.31,-0.3) .. (0,0) .. controls (3.31,0.3) and (6.95,1.4) .. (10.93,3.29)   ;
 		\draw [color={rgb, 255:red, 0; green, 0; blue, 0 }  ,draw opacity=1 ]   (172.4,166.08) .. controls (172.4,129.44) and (171.62,105.21) .. (171.22,69.71) ;
 		\draw [shift={(171.2,68.08)}, rotate = 89.37] [color={rgb, 255:red, 0; green, 0; blue, 0 }  ,draw opacity=1 ][line width=0.75]    (10.93,-3.29) .. controls (6.95,-1.4) and (3.31,-0.3) .. (0,0) .. controls (3.31,0.3) and (6.95,1.4) .. (10.93,3.29)   ;
 		\draw [color={rgb, 255:red, 208; green, 2; blue, 27 }  ,draw opacity=1 ]   (74.85,175.78) .. controls (43.67,175.68) and (28.07,175.68) .. (25.99,135.88) ;
 		\draw [shift={(25.9,134.03)}, rotate = 87.66] [color={rgb, 255:red, 208; green, 2; blue, 27 }  ,draw opacity=1 ][line width=0.75]    (10.93,-3.29) .. controls (6.95,-1.4) and (3.31,-0.3) .. (0,0) .. controls (3.31,0.3) and (6.95,1.4) .. (10.93,3.29)   ;
 		\draw [color={rgb, 255:red, 0; green, 0; blue, 0 }  ,draw opacity=1 ]   (95.35,175.78) .. controls (109.63,176.07) and (136.99,175.32) .. (159.85,175.65) ;
 		\draw [shift={(161.6,175.68)}, rotate = 180.99] [color={rgb, 255:red, 0; green, 0; blue, 0 }  ,draw opacity=1 ][line width=0.75]    (10.93,-3.29) .. controls (6.95,-1.4) and (3.31,-0.3) .. (0,0) .. controls (3.31,0.3) and (6.95,1.4) .. (10.93,3.29)   ;
 		\draw [color={rgb, 255:red, 0; green, 0; blue, 0 }  ,draw opacity=1 ]   (36.15,123.78) .. controls (76.58,123.29) and (83.4,104.26) .. (83.98,70.05) ;
 		\draw [shift={(84,68.48)}, rotate = 90.65] [color={rgb, 255:red, 0; green, 0; blue, 0 }  ,draw opacity=1 ][line width=0.75]    (10.93,-3.29) .. controls (6.95,-1.4) and (3.31,-0.3) .. (0,0) .. controls (3.31,0.3) and (6.95,1.4) .. (10.93,3.29)   ;
 		\draw   (258.4,82.23) .. controls (258.4,78.03) and (261.8,74.63) .. (266,74.63) .. controls (270.2,74.63) and (273.6,78.03) .. (273.6,82.23) .. controls (273.6,86.43) and (270.2,89.83) .. (266,89.83) .. controls (261.8,89.83) and (258.4,86.43) .. (258.4,82.23) -- cycle ;
 		\draw  [fill={rgb, 255:red, 245; green, 166; blue, 35 }  ,fill opacity=0.4 ] (236.61,62.1) .. controls (240.21,56.5) and (254.61,51.7) .. (266.21,56.1) .. controls (277.81,60.5) and (315.41,92.5) .. (295.4,109.68) .. controls (275.39,126.86) and (246.61,116.5) .. (237.01,94.9) .. controls (227.41,73.3) and (233.01,67.7) .. (236.61,62.1) -- cycle ;
 		\draw    (273.61,75.99) .. controls (316.51,37.71) and (328.7,97.44) .. (277.81,82.7) ;
 		\draw [shift={(276.25,82.23)}, rotate = 17.32] [color={rgb, 255:red, 0; green, 0; blue, 0 }  ][line width=0.75]    (10.93,-3.29) .. controls (6.95,-1.4) and (3.31,-0.3) .. (0,0) .. controls (3.31,0.3) and (6.95,1.4) .. (10.93,3.29)   ;
 		\draw [color={rgb, 255:red, 208; green, 2; blue, 27 }  ,draw opacity=1 ]   (256.05,148.98) .. controls (228.64,130.94) and (104.09,119.78) .. (93.06,166.53) ;
 		\draw [shift={(92.76,167.96)}, rotate = 280.3] [color={rgb, 255:red, 208; green, 2; blue, 27 }  ,draw opacity=1 ][line width=0.75]    (10.93,-3.29) .. controls (6.95,-1.4) and (3.31,-0.3) .. (0,0) .. controls (3.31,0.3) and (6.95,1.4) .. (10.93,3.29)   ;

 		\draw    (587.65,151.5) .. controls (560.24,133.46) and (435.69,122.3) .. (424.66,169.05) ;
 		\draw [shift={(424.36,170.48)}, rotate = 280.3] [color={rgb, 255:red, 0; green, 0; blue, 0 }  ][line width=0.75]    (10.93,-3.29) .. controls (6.95,-1.4) and (3.31,-0.3) .. (0,0) .. controls (3.31,0.3) and (6.95,1.4) .. (10.93,3.29)   ;
 		
 		\draw (594.2,79.2) node [anchor=north west][inner sep=0.75pt]  [font=\scriptsize]  {$1$};
 		\draw (410.6,54) node [anchor=north west][inner sep=0.75pt]  [font=\scriptsize]  {$5$};
 		\draw (501,173.2) node [anchor=north west][inner sep=0.75pt]  [font=\scriptsize]  {$4$};
 		\draw (498.2,54.4) node [anchor=north west][inner sep=0.75pt]  [font=\scriptsize]  {$3$};
 		\draw (594.2,146.4) node [anchor=north west][inner sep=0.75pt]  [font=\scriptsize]  {$2$};
 		\draw (413.4,173.6) node [anchor=north west][inner sep=0.75pt]  [font=\scriptsize]  {$6$};
 		\draw (353.4,121.2) node [anchor=north west][inner sep=0.75pt]  [font=\scriptsize]  {$7$};
 		\draw (262.8,77.08) node [anchor=north west][inner sep=0.75pt]  [font=\scriptsize]  {$1$};
 		\draw (80,53.68) node [anchor=north west][inner sep=0.75pt]  [font=\scriptsize]  {$5$};
 		\draw (167.8,171.08) node [anchor=north west][inner sep=0.75pt]  [font=\scriptsize]  {$4$};
 		\draw (165.4,52.28) node [anchor=north west][inner sep=0.75pt]  [font=\scriptsize]  {$3$};
 		\draw (262.2,143.48) node [anchor=north west][inner sep=0.75pt]  [font=\scriptsize]  {$2$};
 		\draw (81,170.48) node [anchor=north west][inner sep=0.75pt]  [font=\scriptsize]  {$6$};
 		\draw (22.8,119.48) node [anchor=north west][inner sep=0.75pt]  [font=\scriptsize]  {$7$};

 	\end{tikzpicture}
 	\caption{Left picture describes \cobuchi computation with $ \mathsf{pre} $, and the right with $ \mathsf{tpre} $. Each colored region describes how $ X $ grows after every iteration, and the dotted region on the right is added by the $ \mathsf{attr} $ part of $ \mathsf{tpre} $. The edges in {\color{red} red} describe the co-live edges in both cases. Again $ \mathsf{pre} $ computation would give assumptions on $ \pz $ vertices, while that with $ \mathsf{tpre} $ only gives assumptions on $ \po $'s vertices.}\label{fig:PreVsTpreCoBuchi}
 \end{figure}
\subsection{\aname Assumptions for Parity Games}\label{section:parityGames}
\noindent\textbf{Parity games.}
Let $\gamegraph = \tup{V,\vertexz,\vertexo,E}$ be a game graph, and $C = \set{C_0,\ldots,C_k}$ be a set of subsets of vertices which form a partition of $V$. Then the game $\game=(\gamegraph,\spec)$ is called a \emph{parity game} if 
 \begin{equation}\label{equ:parity}
	\textstyle\spec = \paritygame(C) \coloneqq \bigvee_{i\inodd [0;k]} \square\lozenge C_i \implies \bigvee_{j\ineven [i+1;k]} \square\lozenge C_j.
\end{equation}
The set $C$ is called the \emph{priority set} and a vertex $v$ in the set $C_i$, for $i\in [1;k]$, is said to have \emph{priority} $i$. An infinite play $\rho$ is winning for $\spec = \paritygame(C)$ if the highest priority appearing infinitely often along $\rho$ is even.

\smallskip
\noindent\textbf{Conditional live group templates.}
As seen in the previous sections, for games with simple winning conditions which require visiting a fixed set of edges infinitely often or only finitely often, a single assumption (conjoined with a simple safety assumption) suffices to characterize \anames, as there is just one way to win. However, in general parity games, there are usually multiple ways of winning: for example, in parity games with priorities $ \{0,1,2\} $, a play will be winning if either (i) it only infinitely often sees vertices of priority 0, or (ii) it sees priority 1 infinitely often but also sees priority 2 infinitely often. Intuitively, winning option (i) requires the use of co-liveness assumptions as in Sec.~\ref{section:coBuchiGames}. However, winning option (ii) actually requires the live group assumptions discussed in Sec.~\ref{section:BuchiGames} to be \emph{conditional} on whether certain states with priority 1 have actually been visited infinitely often. This is formalized by generalizing live group templates to \emph{conditional live group templates}.

\begin{definition}
Let $\gamegraph=(V,\vertexz,\vertexo,E)$ be a game graph. Then a \emph{conditional live group} over $\gamegraph$ is a pair $ (R, \livegroup ) $, where $ R\subseteq V $ and $ \livegroup $ is a live group. Given a set of conditional live groups $ \condlivegroup $ we define a \emph{conditional live group template} as the LTL formula
\begin{equation}
\textstyle
\assumpcondlive(\condlivegroup) \coloneqq \bigwedge_{(R,\livegroup)\in \condlivegroup}\left(\square\lozenge R\implies \assumpgrlive(\livegroup)\right).
\end{equation}
\end{definition}

Again, the assumptions employing conditional live group templates will be called conditional live group assumptions. With the generalization of live group assumptions to \emph{conditional} live group assumptions, we actually have all the ingredients to define an \aname for parity games as a conjunction
\begin{equation}\label{eq:parityassumption}
 \assump=  \assumpsafe(\safegroup)\land\assumpdep(\colivegroup)\land\assumpcondlive(\condlivegroup)
\end{equation}
of a safety, a co-liveness, and a conditional live group assumptions. 
Intuitively, we use (i) a safety assumption to prevent $\po$ to leave the cooperative winning region, (ii) a co-live assumption for each winning option that requires seeing a particular \emph{odd} priority only finitely often, and (iii) a conditional live group assumption for each winning option that requires seeing an \emph{even} priority infinitely often if certain \emph{odd} priority have been seen infinitely often. 
 The remainder of this section gives an algorithm (Alg.~\ref{alg:compute parity assumption}) to compute the actual safety, co-live and conditional live group sets $\safegroup$, $\colivegroup$ and $\condlivegroup$, respectively, and proves that the resulting assumption $\assump$ (as in \eqref{eq:parityassumption}) is actually an \aname for the parity game $\game$.

\begin{algorithm}[h!]
	\scriptsize
	\caption{\parityAssump}
	\label{alg:compute parity assumption}
	\begin{algorithmic}[1]
		\Require $ \gamegraph=\tup{V,E},~C:V\rightarrow \{0,1,\ldots\}$
		\Ensure $\assump$
		\State $Z^* \gets \solveParity(\gamegraph,C)$\label{algo:coop parity computation}
		\State $\safegroup\gets  \computeSafe(\gamegraph,Z^*)$
		\State $\gamegraph\gets \gamegraph|_{Z^*}$, $C\gets C|_{Z^*}$\label{algo:restricted game}
		\State $ (\colivegroup,\condlivegroup) \gets $\textsc{ComputeSets}$ ((\gamegraph,C),\emptyset,\emptyset) $
		\State \Return $\safegroup,\colivegroup,\condlivegroup$ 
		
		\Statex
		\Procedure {ComputeSets}{$(\gamegraph,C),\colivegroup,\condlivegroup$}
		\State $d\gets \mathrm{max}\{i \mid C_i \neq \emptyset\}$
			\If {$ d $ is odd}
				\State $ W_{\neg d}\gets \solveParity(\gamegraph|_{V\setminus C_d},C) $ \label{algo:parity without d}
				\State $ \colivegroup\gets \colivegroup \cup \computeCoLive(\gamegraph,W_{\neg d})$ \label{algo:add depressed edges}
			\Else
				\State $ W_{d}\gets \solveBuchi(\gamegraph,C_d) $, $W_{\neg d}\gets V\setminus W_{d}$\label{algo:buechi to reach d}
				\ForAll{odd $ i\in[0;d] $}
					\State $ \condlivegroup\gets \condlivegroup \cup (W_d\cap C_i, \computeLive( \gamegraph|_{W_{d}},C_{i+1}\cup C_{i+3}\cdots \cup C_d)) $\label{algo:add cond live groups}
				\EndFor
			\EndIf
			\If {$d> 0$}
			\State $ \gamegraph\gets \gamegraph|_{W_{\neg d}} $ , $ C_0\gets C_0\cup C_d $, $C_d\gets\emptyset$\label{algo:reduce game to few color}
			\State \textsc{ComputeSets}$ ((\gamegraph,C),\colivegroup,\condlivegroup) $\label{algo:recursively compute}
			\Else 
			\State \Return $(\colivegroup,\condlivegroup)$
			\EndIf
		\EndProcedure
	\end{algorithmic}
\end{algorithm}

\smallskip
\noindent\textbf{Computing \anames.}
The computation of unsafe, co-live, and conditional live group sets $\safegroup$, $\colivegroup$, and $\condlivegroup$ to make $\assump$ in \eqref{eq:parityassumption} an \aname  is formalized in Alg.~\ref{alg:compute parity assumption}. Alg.~\ref{alg:compute parity assumption} utilizes the standard fixpoint algorithm $\solveParity(\gamegraph,C)$~\cite{EmersonJutla:ParityFixpoint} to compute the cooperative winning region for a parity game $\game$, defined as 
\begin{equation}
\textstyle
\solveParity(\gamegraph,C):=\tau X_d \cdots \nu X_2 ~\mu X_1 ~\nu X_0. \bigcup_{i\in[0;d]} (C_i\cap\pre{}{X_i}),
\end{equation}
where $ \tau $ is $ \nu $ if $ d $ is even, and $ \mu $ otherwise.
In addition, Alg.~\ref{alg:compute parity assumption} involves the algorithms $\computeSafe$ (Thm.~\ref{thm:safety assumption}), $\computeLive$ (Thm.~\ref{thm:Buechi assumptions}), and $\computeCoLive$ (Thm.~\ref{thm:coBuechi assumptions}) to compute safety, live group, and co-liveness assumptions in an iterative manner. In addition, $\gamegraph|_U \coloneqq \tup{U,U^0, U^1, E'}$ s.t.\ $U^0 \coloneqq V^0\cap U$, $U^1 \coloneqq V^1\cap U$, and $E'\coloneqq E\cap (U\times U)$ denotes the restriction of a game graph $\gamegraph \coloneqq \tup{V,V^0, V^1, E}$ to a subset of its vertices $U\subseteq V$. Further, $C|_U$ denotes the restriction of the priority set $C$ from $V$ to $U\subseteq V$.

\begin{figure}[t]
\centering
\begin{tikzpicture}
		  		  		\node[player1, blue!60, label={below:$c_1$}] (1) at (0, 0) {$v_1$};
		  		  		\node[player1, label={below:$c_2$}] (2) at (\pos, 0) {$v_2$};
		  				\node[player1, red!60, label={below:$c_3$}] (3) at (-\pos,0) {$v_3$};
		  				\node[player1, label={below:$c_4$}] (4) at (\fpeval{2*\pos},0) {$v_4$};
		  				\node[player1, label={below:$c_5$}] (5) at (\fpeval{3*\pos},0) {$v_5$};
		  				\node[player1, label={below:$c_4$}] (6) at (\fpeval{4*\pos},0) {$v_6$};
		  				\node[player1, label={below:$c_3$}] (7) at (\fpeval{5*\pos},0) {$v_7$};
		  				
		  				\path[->] (1) edge[loop above] () edge[blue!60,bend left = 20] (2);  		
						\path[->] (2) edge[bend left=20] (1) edge[bend left = 40] (3) edge[red!60,bend left = 20] (4);  		
						\path[->] (3) edge (1);
						\path[->] (4) edge[bend left=20] (2);
						\path[->] (5) edge[dashed,loop above] () edge (4) edge[bend left=20] (6);
						\path[->] (6) edge[loop above] () edge[dashed, bend left = 20] (5) edge[densely dotted] (7);
						\path[->] (7) edge[loop right] ();
\end{tikzpicture}
\caption{A parity game, where a vertex with priority $i$ has label $ c_i $. The dotted edges are the unsafe edges, the dashed edges are the co-live edges, and every similarly colored vertex-edge pair forms a conditional live group.}\label{fig:parity_algo_idea}
\vspace{-0.5cm}
\end{figure}
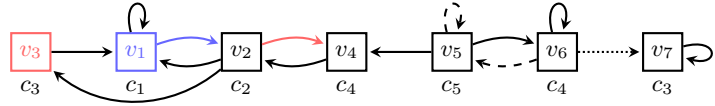

We illustrate the steps of Alg.~\ref{alg:compute parity assumption} by an example depicted in Fig.~\ref{fig:parity_algo_idea}. In line \ref{algo:coop parity computation}, we begin with computing the cooperative winning region $Z^*$ of the entire game, to find that from vertex $ v_7 $, there is no way of satisfying the parity condition even with $ \po $'s cooperation, i.e., $Z^*=\{v_1,\hdots,v_6\}$. So we mark the edge from $ v_6 $ to $ v_7 $ to be a safety-assumption edge, restrict the game to $G=G|_{Z^*}$ and run \textsc{ComputeSets} on the new game.

In the new restricted game $G$ the highest priority is $ d=5 $, which is odd, hence we execute lines \ref{algo:parity without d}-\ref{algo:add depressed edges}. Now a play would be winning only if eventually the play does not see $ v_5 $ any more. Hence, in step \ref{algo:parity without d}, we find the region $W_{\neg 5}=\{v_1,\hdots,v_4,v_6\}$ of the restricted graph $G|_{V\setminus C_5}$ (only containing nodes $v_i$ with priority $C(v_i)<5)$) from where we can satisfy the parity condition without seeing $ v_5 $. We then make sure that we do not leave $W_{\neg 5}$ to visit $v_5$ in the game $G$ infinitely often by executing $\computeCoLive(\gamegraph,W_{\neg 5})$ in line \ref{algo:add depressed edges}.
This puts a co-liveness assumption on the edges $(v_5,v_5)$ and $(v_6,v_5)$. 

Once we restrict a play from visiting $ v_5 $ infinitely often, we only need to focus on satisfying parity without visiting $ v_5 $ within $W_{\neg 5}$. This observation allows us to further restrict our computation to the game $ \game= \game|_{W_{\neg 5}}$ in line \ref{algo:reduce game to few color}, where we also update the priorities to only range from $0$ to $4$. In our example this step does not change anything. We then re-execute \textsc{ComputeSets} on this game. 

In the restricted graph, the highest priority is $ 4 $ which is even, hence we execute lines \ref{algo:buechi to reach d}-\ref{algo:add cond live groups}. One way of winning in this game is to visit $ C_4 $ infinitely often, so we compute the respective cooperative winning region $W_4$ in line \ref{algo:buechi to reach d}. In our example we have $W_4=W_{\neg 5}=\{v_1,\hdots,v_4,v_6\}$. Now, to ensure that from the vertices from which we can cooperatively see $ 4 $, we actually win, we have to make sure that every time a lower odd priority vertex is visited infinitely often, a higher priority is also visited. This can be ensured by conditional live group fairness as computed in line \ref{algo:add cond live groups}. For every odd priority $i<4$, (i.e, for $i=1$ and $i=3$) we have to make sure that either $2$ or $4$ (if $i=1$) or $4$ 
(if $ i=3 $) is visited infinitely often. The resulting live groups $\condlivegroup_i=(R_i,H^\ell_i)$ collect all vertices in $W_4$ with priority $i$ in $R_i$ and all live groups allowing to see even priorities $j$ with $i<j\leq 4$ in $H^\ell_i$, where the latter is computed using the fixed-point algorithm $ \computeLive $ to compute live groups. The resulting live groups for $i=1$ (blue) and $i=3$ (red) are depicted in Fig.~\ref{fig:parity_algo_idea} and given by $(\{v_1\},\{(v_1,v_2)\})$ and $(\{v_3\},\{(v_2,v_4)\}, \{(v_1,v_2)\})$, respectively.

At this point we have $W_{\neg 4}=\emptyset$. With this the game graph computed in line~\ref{algo:reduce game to few color} becomes empty, and the algorithm eventually terminates after iteratively removing all priorities from $C$ after \textsc{ComputeSets} has been run (without any computations, as $\game$ is empty) for priorities $3$, $2$ and $1$. In a different game graph, the reasoning done for priorities $5$ and $4$ above can also repeat for lower priorities if there are other parts of the game graph not contained in $W_{4}$, from where the game can be won by seeing priority $2$ infinitely often. The main insight into the correctness of the outlined algorithm is that all computed assumptions can be conjoined to obtain an \aname for the original parity game.

\smallskip
\noindent\textbf{Main result.}
With Alg.~\ref{alg:compute parity assumption} in place, we can now state the main result of this section, and in particular, of the entire paper, proven in Appendix~\ref{proof:parity assumption is correct}.

\begin{restatable}{theorem}{restateparity}\label{thm:parity assumption}
Let $ \game=\tup{\gamegraph = (V,E), \paritygame(C)} $ be a parity game such that \linebreak$(\safegroup,\colivegroup,\condlivegroup)=\parityAssump(\gamegraph,C)$.
	Then  $\assump=  \assumpsafe(\safegroup)\land\assumpdep(\colivegroup)\land\assumpcondlive(\condlivegroup)$ is an \aname  for $\game$. Moreover, Algo.~\ref{alg:compute parity assumption} terminates in time $ \bigO(n^4) $, where $ n=|V| $. 
\end{restatable}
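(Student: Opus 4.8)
The plan is to establish correctness --- that $\assump$ is an APA --- by induction on the recursion depth of \textsc{ComputeSets}, and then to bound the running time by counting the recursive calls. The backbone is a structural invariant together with two region-decomposition lemmas. The invariant is that at every invocation of \textsc{ComputeSets}$((\gamegraph,C),\cdot,\cdot)$ the graph $\gamegraph$ equals the restriction of the game to its own cooperative winning region; this holds initially because line~\ref{algo:restricted game} restricts to $Z^*$, and it is preserved, as follows. For odd $d$ one checks that every cooperatively winning play sees $C_d$ finitely often (since $d$ is the largest priority and odd), hence eventually stays in $V\setminus C_d$ and, being winning there, eventually stays in $W_{\neg d}=\solveParity(\gamegraph|_{V\setminus C_d},C)$; thus $\team{0,1}\lozenge\square W_{\neg d}$ equals the whole vertex set, so the safety component of the co-B\"uchi APA (Thm.~\ref{thm:coBuechi assumptions}) is empty --- which is exactly why line~\ref{algo:add depressed edges} need not recompute a safety set --- and $W_{\neg d}$ again equals the cooperative winning region of the recursive game. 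For even $d$, the key observation is that $W_{\neg d}=V\setminus W_d$ is \emph{closed} under successors (if some successor of $v$ could reach $C_d$ infinitely often, then so could $v$), so $\gamegraph|_{W_{\neg d}}$ is a genuine subgame, $Z^*=W_d\cup W_{\neg d}$ disjointly, and $W_{\neg d}$ is the cooperative winning region of $\gamegraph|_{W_{\neg d}}$ after the harmless merge $C_d\mapsto C_0$.

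With the invariant in hand, write $\assumpsafe(\safegroup)$ for the top-level safety template and $\assump'$ for the assumption produced by the recursive call, which by the induction hypothesis is an APA for the strictly smaller recursive game. In the odd case $\assump=\assumpsafe(\safegroup)\wedge\assumpdep(\colivegroup_d)\wedge\assump'$: for \emph{sufficiency}, $\pz$ plays the co-B\"uchi APA strategy of Thm.~\ref{thm:coBuechi assumptions} (which, under the newly added co-live edges, forces the play to eventually settle in $W_{\neg d}$) and, each time the play enters $W_{\neg d}$, restarts the inductively-obtained $\pz$ strategy for the recursive game; only finitely many restarts occur because of the co-live template, so the last run is uninterrupted and winning. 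For \emph{permissiveness}, a cooperatively winning play stays in $Z^*$ and hence obeys $\assumpsafe(\safegroup)$, it settles in $W_{\neg d}$ and hence obeys $\assumpdep(\colivegroup_d)$ by permissiveness of Thm.~\ref{thm:coBuechi assumptions}, and its suffix is a cooperatively winning play of the recursive game and hence obeys $\assump'$ by the induction hypothesis --- using here that the recursive safety set is empty and the remaining recursive templates depend only on the suffix. \emph{Implementability} is obtained by composing the $\po$ witness strategies. The even case is analogous but uses Thm.~\ref{thm:Buechi assumptions}: from $W_d$, $\pz$ plays to reach the highest available even priority, switching to the recursive strategy the moment the play crosses into the closed set $W_{\neg d}$; and one argues that any play that stays in $W_d$ and obeys $\assumpcondlive(\condlivegroup_d)$ is winning by taking the highest odd priority $i^\ast$ seen infinitely often --- its condition $W_d\cap C_{i^\ast}$ is then met, the corresponding live group computed by $\computeLive(\gamegraph|_{W_d},C_{i^\ast+1}\cup C_{i^\ast+3}\cup\cdots\cup C_d)$ forces infinitely many visits to some even priority above $i^\ast$, and by maximality of $i^\ast$ that even priority is the largest one seen infinitely often.

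The main obstacle will be this even case --- concretely, exhibiting a single $\pz$ strategy on $W_d$ under which ``obeys the conditional live groups'' implies ``visits $C_{i^\ast+1}\cup C_{i^\ast+3}\cup\cdots\cup C_d$ infinitely often'' for the a priori unknown $i^\ast$, and dovetailing it with the recursive strategy across the $W_d/W_{\neg d}$ boundary. This is precisely where the construction of $\computeLive$ through the $\mathsf{tpre}$-frontiers matters (so that the non-frontier $\po$ vertices added by $\attrz{\gamegraph}{\cdot}$ need no cooperation at all), and it is the heart of the correctness proof. A secondary subtlety is the joint implementability $\team{1}\big(\assumpsafe(\safegroup)\wedge\assumpdep(\colivegroup)\wedge\assumpcondlive(\condlivegroup)\big)=V$: one must verify that $\po$ can \emph{simultaneously} avoid every unsafe edge, take each co-live edge only finitely often, and honour every conditional live group, which follows from the fact that these templates are generated on structurally separated parts of the nested fixpoint computations and that every vertex retains a ``neutral'' safe successor.

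Finally, for the complexity: the recursion depth of \textsc{ComputeSets} is at most the number of priorities, hence $O(n)$, because each call strictly lowers $d=\max\{i\mid C_i\neq\emptyset\}$ via the merge in line~\ref{algo:reduce game to few color}. Each level performs $O(1)$ cooperative parity/B\"uchi solves (line~\ref{algo:parity without d} or~\ref{algo:buechi to reach d}) and at most $O(n)$ calls to $\computeLive$/$\computeCoLive$ on nested subgraphs; using that the cooperative winning region of a parity game is computable in polynomial time and that $\computeLive,\computeCoLive$ run in $O(n^3)$ (Thms.~\ref{thm:Buechi assumptions},~\ref{thm:coBuechi assumptions}), or in linear time via the appendix variants, a routine accounting gives $O(n^3)$ per level and $O(n^4)$ overall, the top-level $\solveParity$ and $\computeSafe$ being no more expensive; combined with the correctness argument this yields Thm.~\ref{thm:parity assumption}.
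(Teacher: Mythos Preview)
Your plan follows the paper's own architecture: induction on the maximal priority $d$, handling the odd case via the co-B\"uchi APA (Thm.~\ref{thm:coBuechi assumptions}) and the even case via the B\"uchi-based conditional live groups (Thm.~\ref{thm:Buechi assumptions}), with a recursive appeal to the APA of the residual game on $W_{\neg d}$. The structural invariant you isolate (each recursive call operates on its own cooperative winning region, and $W_{\neg d}$ is successor-closed in the even case) is correct and is what the paper uses implicitly; your permissiveness, implementability, and complexity outlines also match the paper's.

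The one genuine gap is exactly where you flag it. In the even case your proposed $\pz$ strategy on $W_d$ --- ``play to reach the highest available even priority'' --- is not enough to justify that ``the corresponding live group \dots\ forces infinitely many visits to some even priority above $i^\ast$''. The live-group template for target $T_i:=C_{i+1}\cup C_{i+3}\cup\cdots\cup C_d$ only fires when the \emph{sources} of its live groups (the $\mathsf{tpre}$-frontiers for $T_i$) are visited infinitely often, and driving the play to those frontiers requires $\pz$ to execute the B\"uchi attractor strategy \emph{for that specific} $T_i$; a fixed strategy aimed only at $C_d$ can route around the $T_i$-frontiers entirely. Since $i^\ast$ is not known in advance, one strategy per target does not suffice. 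The paper closes this gap with a concrete finite-memory construction: from each vertex in $W_d$, $\pz$ \emph{round-robins} through the B\"uchi strategies for the nested targets $T_{d-1},T_{d-3},\ldots,T_1$, advancing to the next one each time that vertex is revisited. Then for every odd $i$ the $T_i$-strategy is played from every recurrent vertex infinitely often, so the sufficiency argument of Thm.~\ref{thm:Buechi assumptions} applies with target $T_{i^\ast}$ and yields the required infinitely many visits above $i^\ast$. Your remark about the $\mathsf{tpre}$-frontiers explains why the live-group assumption is \emph{permissive}, but it does not by itself supply this missing strategy.
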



\section{Experimental Evaluation}\label{section:experiment}
We have developed a C++-based prototype tool {$\tool$}\footnote{Repository URL: \url{https://gitlab.mpi-sws.org/kmallik/simpa}} computing \textbf{S}ufficient, \textbf{Im}plementable and \textbf{P}ermissive \textbf{A}ssumptions for B\"uchi, co-B\"uchi, and parity games. We first compare $\tool$ against the closest related tool  $\krishtool$~\cite{chatterjee2010gist} in Sec.~\ref{sec:toolcompare}. We then show that $\tool$ gives small and meaningful assumptions for the well-known 2-client arbiter synthesis problem from~\cite{NirGR1} in Sec.~\ref{sec:arbiter}. 

\vspace{0.5cm}
\begin{minipage}{\textwidth}
  \begin{minipage}[b]{0.45\textwidth}
   \centering
    \includegraphics[width=1\linewidth]{"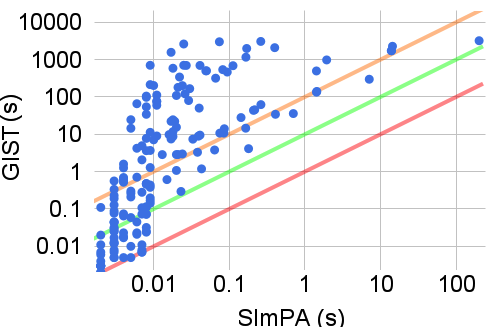"}
    \captionof{figure}{Running times of $\tool$ vs $\krishtool$ (in seconds, log-scale)}\label{fig:experiment}
  \end{minipage}
  \begin{minipage}[b]{0.5\textwidth}
  \centering
   \begin{tabular}{|l|r|r|}
   \hline
    & \multicolumn{1}{c|}{$\tool$} & \multicolumn{1}{c|}{$\krishtool$} \\ \hline
   Mean-time & 64.8s & \cellcolor[HTML]{FFFFFF}1079.0s \\ \hline
   \begin{tabular}[c]{@{}l@{}}Non-timeout \\ mean-time\end{tabular} & 64.8s & 209.2s \\ \hline
   Timeouts (1hr) & 0(0\%) & 59(26\%) \\ \hline
   \begin{tabular}[c]{@{}l@{}}No assumption \\ generated\end{tabular} & 0(0\%) & 20(9\%) \\ \hline
   Faster & 230(100\%) & 0(0\%) \\ \hline
   
   \end{tabular}
      \captionof{table}{Summary of the experimental results}\label{tab:benchmarks}
    \end{minipage}
  \end{minipage}
\subsection{Performance Evaluation}\label{sec:toolcompare}
We compare the effectiveness of our tool against a re-implementation of the closest related tool called $\krishtool$~\cite{chatterjee2010gist}, which is not available anymore from the authors\footnote{The link provided in the paper is broken, and the authors informed us that the implementation is not available.}. $\krishtool$ originally computes assumptions only enabling a particular initial vertex to become winning for $\pz$. However, for the experiments, we run $\krishtool$ until one of the cooperatively winning vertices is not winning anymore. Since $ \krishtool $ starts with a maximal assumption and keeps making it smaller until a fixed initial vertex is not winning anymore, our modification makes $ \krishtool $ faster as the modified termination condition is satisfied earlier. As our tool does not depend on any fixed initial vertex and the dependency on the initial vertex makes $ \krishtool $ slower, this modification allows a fair comparison.

We compared the performance and the quality of the assumptions computed by $\tool$ and $\krishtool$ on a set of parity games collected from the SYNTCOMP benchmark suite~\cite{syntcomp}.
For computing assumptions using both $\tool $ and $\krishtool$, we set a timeout of one hour.
All the experiments were performed on a computer equipped with Intel(R) Core(TM) i5-10600T CPU @ 2.40GHz and 32 GiB RAM.

We provide all details of the experimental results in Table~\ref{appendix:extended_table} in the appendix and summarize them in Table~\ref{tab:benchmarks}. In addition, Fig.~\ref{fig:experiment} shows a scatter plot, where every instance of the benchmarks is depicted as a point, where the X and the Y coordinates represent the running time for $\tool$ and $\krishtool$ (in seconds), respectively.
We see that $\tool$ is computationally much faster than $\krishtool$ in every instance (all dots lie above the lower red line) -- most times by one (above the middle green line) and many times even two (above the upper orange line) orders of magnitude.

Moreover, in some experiments, $\krishtool$ fails to compute a sufficient assumption (in the sense of Def.~\ref{def:sufficient}), whereas our algorithm successfully computes an \aname
(given in the row labeled `no assumption generated' in Table~\ref{tab:benchmarks} and marked using `*' next to the computation times in Table~\ref{appendix:extended_table}). This is not surprising, as the class of assumptions used by $\krishtool$ are only unsafe edges and live edges (i.e., singleton live groups) which are not expressive enough to provide sufficient assumptions for all parity games (see Fig.~\ref{fig:illustration_intro}(b) for a simple example where there is no sufficient assumption that can be expressed using live edges). Furthermore, we note that in all cases where the assumptions computed by $\krishtool$ are actually \anames, $\tool$ computes the same assumptions orders or magnitudes faster.

\subsection{2-Client Arbiter Example}\label{sec:arbiter}
We consider the 2-client arbiter example from~\cite{NirGR1}. In this example, clients $i\in\{1,2\}$ ($\po$) can request or free a shared resource by setting the input variables $r_i$ to true or false, and the arbiter ($\pz$) can set the output variables $g_i$ to true or false to grant or withdraw the shared resource to/from client $i$. The game graph for this example is implicitly given as part of the specification (as this is a GR(1) synthesis problem, see~\cite{NirGR1} for details). We depict a relevant part of this game graph schematically in Fig.~\ref{fig:arbiter_ex}. 
Here, rectangles and circles represent $\po$ and $\pz$ vertices, respectively, and the double-lined vertices have priority $2$ (are Büchi vertices), while all other vertices have priority $1$. The labels of the $\pz$ states indicate the current status of the request and grant bits, and in addition, remember if a request is currently \emph{pending}, i.e., $F_i = \overline{g_i}~\mathsf{S}~ (r_i\wedge \overline{g_i})$, where $\mathsf{S}$ denotes the LTL operator \enquote{since}. Labels of $\po$ vertices additionally remember the last move chosen by $\pz$. We see that all vertices with no pending requests have priority $2$. It is known that there does not exist a winning strategy in this game for $\pz$ if the moves of the clients ($\po$) are unconstrained.

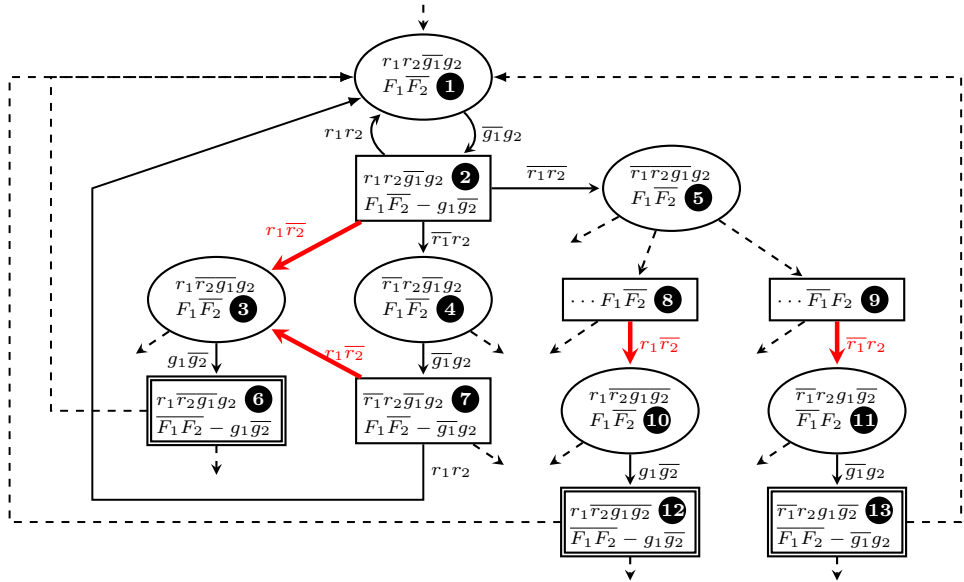
\begin{figure}
\small
\centering
\vspace{-0.5cm}
\begin{scriptsize}
\begin{tikzpicture}[yscale=0.8,xscale=1.1]
		  		  		\node[bplayer0] (1) at (0, 0) {$r_1r_2\overline{g_1}g_2\newline F_1\overline{F_2} \circled{1}$};
		  		  		\node[bplayer1] (2) at (0,-\ypos) {$r_1r_2\overline{g_1}g_2\circled{2}\newline F_1\overline{F_2} - g_1\overline{g_2}$};
		  				\node[bplayer0] (3) at (-\hpos,-2*\ypos) {$r_1\overline{r_2}\overline{g_1}g_2\newline F_1\overline{F_2}\circled{3}$};
		  				\node[bplayer0] (4) at (0,-2*\ypos) {$\overline{r_1}r_2\overline{g_1}g_2\newline F_1\overline{F_2}\circled{4}$};
		  				\node[bplayer0] (5) at (1.2*\hpos,-1*\ypos) {$\overline{r_1}\overline{r_2}\overline{g_1}g_2\newline F_1\overline{F_2}\circled{5}$};
		  				\node[bplayer1,accepting] (6) at (-1*\hpos,-3*\ypos) {$r_1\overline{r_2}\overline{g_1}g_2\circled{6}\newline \overline{F_1}\overline{F_2} - g_1\overline{g_2}$};
		  				\node[bplayer1] (7) at (0,-3*\ypos) {$\overline{r_1}r_2\overline{g_1}g_2\circled{7}\newline F_1\overline{F_2} - \overline{g_1}g_2$};
		  				
		  				\node[bplayer1] (8) at (1*\hpos,-2*\ypos) {$\cdots F_1\overline{F_2}\circled{8}$};
		  				\node[bplayer1] (9) at (2*\hpos,-2*\ypos) {$\cdots \overline{F_1}F_2\circled{9}$};
		  				\node[bplayer0] (10) at (1*\hpos,-3*\ypos) {$r_1\overline{r_2}\overline{g_1}\overline{g_2} \newline F_1\overline{F_2}\circled{10}$};
		  				\node[bplayer0] (11) at (2*\hpos,-3*\ypos) {$\overline{r_1}r_2g_1\overline{g_2}\newline \overline{F_1}F_2\circled{11} $};
		  				\node[bplayer1,accepting] (12) at (1*\hpos,-4*\ypos) {$r_1\overline{r_2}\overline{g_1}\overline{g_2}\circled{12} \newline \overline{F_1}\overline{F_2} - g_1\overline{g_2}$};
		  				\node[bplayer1,accepting] (13) at (2*\hpos,-4*\ypos) {$\overline{r_1}r_2g_1\overline{g_2}\circled{13} \newline \overline{F_1}\overline{F_2} - \overline{g_1}g_2$};
		  				
		  				\path[->] (0,0.65*\ypos) edge[dashed] (1); 
		  				\path[->] (1) edge[bend left =40] node[right] {$\overline{g_1}g_2$} (2); 
						\path[->] (2) edge[bend left =40] node[left] {$r_1r_2$} (1) edge[above left,pos=0.5,red, line width = 0.06cm] node {$r_1\overline{r_2}$} (3) edge node {$\overline{r_1}r_2$} (4) edge node {$\overline{r_1}\overline{r_2}$} (5); 
						\path[->] (3) edge[left] node {$g_1\overline{g_2}$} (6) edge[dashed] (-1.4*\hpos, -2.5*\ypos); 
						\path[->] (4) edge node {$\overline{g_1}g_2$} (7) edge[dashed] (0.4*\hpos, -2.5*\ypos); 
						\path[->] (5) edge[dashed]  (8) edge[dashed]  (9) edge[dashed] (0.7*\hpos, -1.5*\ypos); 
						\path[->] (6) edge[dashed] (-1*\hpos,-3.6*\ypos); 
						\draw[dashed,->,>=latex] (6) -- (-1.8*\hpos,-3*\ypos) -| (-1.8*\hpos,0) -> (1);
						\path[->] (7) edge[red, line width = 0.06cm] node[right] {$r_1\overline{r_2}$} (3) edge[dashed] (0.4*\hpos,-3.5*\ypos); 
						\draw[->,>=latex] (7) -- node[right] {$r_1r_2$} (0*\hpos,-3.8*\ypos) -| (-1.6*\hpos,-3.8*\ypos) -| (-1.6*\hpos,-1*\ypos) -> (1);
						\path[->] (8) edge[red, line width = 0.06cm] node {$r_1\overline{r_2}$} (10) edge[dashed] (0.6*\hpos,-2.5*\ypos); 
						\path[->] (9) edge[red, line width = 0.06cm] node {$\overline{r_1}r_2$} (11) edge[dashed] (1.6*\hpos,-2.5*\ypos); 
						\path[->] (10) edge node {$g_1\overline{g_2}$} (12) edge[dashed] (0.6*\hpos, -3.5*\ypos); 
						\path[->] (11) edge node {$\overline{g_1}g_2$} (13) edge[dashed] (1.6*\hpos, -3.5*\ypos); 
						\path[->] (12) edge[dashed] (1*\hpos,-4.55*\ypos); 
						\draw[dashed,->,>=latex] (12) -- (-2*\hpos,-4*\ypos) -| (-2*\hpos,0) -> (1);
						\path[->] (13) edge[dashed] (2*\hpos,-4.55*\ypos); 
						\draw[dashed,->,>=latex] (13) -- (2.6*\hpos,-4*\ypos) -| (2.6*\hpos,0) -> (1);

\end{tikzpicture}
\end{scriptsize}
\vspace{-0.5cm}
\caption{Illustration of a relevant part of the game graph for the 2-client arbiter.}\label{fig:arbiter_ex}
\vspace{-0.5cm}
\end{figure}

Running $\tool$ on this example yields only live group assumptions (as this is a Büchi game and all vertices are cooperatively winning) which were computed in $0.01$ seconds.  The edges of one live group are indicated schematically by thick red arrows in Fig.~\ref{fig:arbiter_ex}.  We see that this live group ensures that the play eventually moves to vertices where the $\pz$ can force a visit to a Büchi vertex. 
In~\cite{NirGR1}, the assumption used to restrict the clients' behavior in order to render the synthesis problem realizable is given by 
\begin{equation*}
 \tilde{\assump} = \bigwedge_i (\overline{r_i} \wedge \square\big((r_i\neq g_i)\Rightarrow(r_i=\bigcirc r_i)\big)\wedge\square\big((r_i\wedge g_i)\Rightarrow \lozenge\overline{r_i})\big).
\end{equation*}
We see that our live group assumptions are similar but more permissive. For example when we persistently see states with label $r_1\overline{g_1}$ and $r_2g_2$ (e.g., cycling through states $\circled{1}$ and $\circled{2}$ in Fig.~\ref{fig:arbiter_ex}) we enforce that \emph{eventually} the edge labeled $r_1\overline{r_2}$ needs to be taken. On the other hand, the second and third condition in $\tilde{\assump}$ (which are triggered by $r_1\overline{g_1}$ and $r_2g_2$, respectively) enforces that no other outgoing transition is allowed from state $\circled{2}$ except for the one labeled with $r_1\overline{r_2}$, which is strictly more restrictive. 

We have also run $\krishtool$ on this example. It took $6.44$ seconds to compute live edge assumptions for unrestricted initial conditions, which is two orders of magnitude slower than $\tool$. Further, in order to see $\circled{6}$ infinitely often $\krishtool$ returns the live edges $\circled{2}-\circled{3}$ and $\circled{7}-\circled{1}$. This assumption is not permissive, as there exist winning plays that do not use either of these edges infinitely often. 
It turns out that an \aname  for this example will unavoidably require live groups -- singleton live edges, as computed by $\krishtool$, will not suffice.

\subsubsection*{Acknowledgements}
S.~P.~Nayak and A.-K.~Schmuck are partially supported by the DFG project 389792660 TRR 248–CPEC. 
A.~Anand and A.-K.~Schmuck are partially supported by the DFG project SCHM 3541/1-1.
K.~Mallik is supported by the ERC project ERC-2020-AdG 101020093.

\newpage
\bibliographystyle{splncs04}
\bibliography{main}

\newpage
\appendix

\section{\aname assumptions for safety games}\label{proof:safety assumption is correct}
For the convenience of the reader, we restate Thm. \ref{thm:safety assumption} here.
\restatesafety*

\begin{proof}
	We refer the reader to \cite[chapter 2]{AutomataLogicsandInfiniteGames} for the proof of $ Z^*=\team{0,1}\square U $, and we show below that $ \assumpsafe(\safegroup) $ is a \aname assumption for safety games by proving sufficiency, implementability and permissiveness below.

	\begin{inparaitem}[$\blacktriangleright$]
		\noindent \item \textbf{(ii)} Implementability: 
	It is easy to observe that $ \assumpsafe(\safegroup) $ is implementable by $ \po $, if he does not take the edges in $ \safegroup $ ever.

		\noindent \item \textbf{(ii)} Sufficiency: 
	For sufficiency of the assumption, consider the strategy $ \stratz $ for $ \pz $: at a vertex $ v\in Z^* $, plays the transition that keeps the play in $ Z^* $, and for other vertices, plays arbitrarily. The strategy is well-defined, since if at $ v\in Z^* $, there is no such transition, $ v $ would not be in $ Z^* $, by definition.
	
	Let $ v_0\in Z^*=\team{0,1}\square U $. Let $ \strato $ be an arbitrary strategy of $ \po $ such that $ \lang(\strato)\subseteq \lang(\assumpsafe(\safegroup)) $, and $ \rho=v_0v_1\ldots $ be an arbitrary $ \stratz\strato $-play. Then $ \rho\in \lang(\assumpsafe(\safegroup)) $.
	
	Now, suppose $ \rho \not\in \lang(\square U) $, i.e. $ v_i \not\in U $ for some $ i $. WLOG assume that $ i $ is the least such index, that is, for all $ j<i $, $ v_j\in U $. Then by definition of $ \safegroup $, $ (v_{i-1}, v_i)\in S $, which is a contradiction. Hence, $ \rho \in \lang(\square U) $.

	\noindent \item \textbf{(ii)} Permissiveness: 
	Now for the permissiveness, let $ \rho\in \lang{(\square U)} $. Suppose that $ \rho\not\in\lang{(\assumpsafe(\safegroup))} $. Then some edge $ (v,v')\in S $ is taken in $ \rho $. Then after reaching $ v' $, $ \rho $ still satisfies the safety condition. Hence, by Prop. \ref{prop:buechi cooperative winning region}, $ v'\in Z^* $, but then $ (v,v')\not\in S $, which is a contradiction. Hence, $ \rho\in\lang{(\assumpsafe(\safegroup))}  $.

	\noindent \item \textit{Complexity analysis.}
	The fixpoint computation takes time $ \bigO(n^2) $ and then computing $ \safegroup $ takes another $ \bigO(n^2) $. 
\end{inparaitem}
\end{proof}

\section{$ \mathsf{tpre} $ accelerates the $ \mu $-fixpoint computation  }
\newcommand{\Xt}{X^{\mathsf{t}}}
In sec. \ref{section:BuchiGames}, we mentioned that Prop.~\ref{prop:buechi cooperative winning region} follows from the correctness proof of \eqref{eq:EnvironmentBuchiFixpoint} (given in \cite{ChatterjeeHenzingerPiterman:AlgoForBuchi}) due to the following Lemma, which we prove here.
\begin{lemma}\label{lemma:tpre is accelerated pre}
	For $ U\subseteq V $, $ \mu X. ~U\cup \pre{}{X}=\mu X. ~U\cup \tpre{}{X}  $.
\end{lemma}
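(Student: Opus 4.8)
The plan is to view both sides as least fixpoints of monotone operators on the powerset lattice $(2^V,\subseteq)$ and to compare them through the Knaster--Tarski characterisation of a least fixpoint as the least \emph{pre}-fixpoint. Write $f(X)=U\cup\pre{}{X}$ and $g(X)=U\cup\tpre{}{X}$; both are monotone (as $\cup$, $\pre{}{\cdot}$ and $\tpre{}{\cdot}$ are), so $A:=\mu f$ and $B:=\mu g$ exist. I would prove $A\subseteq B$ by showing that $B$ is a pre-fixpoint of $f$, i.e.\ $f(B)\subseteq B$, which by Knaster--Tarski forces $A\subseteq B$; and symmetrically $B\subseteq A$ by showing $g(A)\subseteq A$.

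Both inclusions reduce to two elementary containments of the set transformers, which I would establish first for an arbitrary $X\subseteq V$: (i) $\pre{}{X}\subseteq\tpre{}{X}\cup X$, and (ii) $\tpre{}{X}\subseteq P_X$, where $P_X:=\mu Z.\,(X\cup\pre{}{Z})$ is the set of vertices having a directed path into $X$. For (i) I would take $v\in\pre{}{X}\setminus X$ and split on its owner: if $v\in\vertexz$ then $v$ reaches $X$ in exactly one, hence nonzero, step, so $v\in\cprez{}{X}\setminus X\subseteq\attrz{}{X}$; if $v\in\vertexo$ then $v$ has an edge into $X\subseteq\attrz{}{X}\cup X$, hence $v\in\cpreo{}{\attrz{}{X}\cup X}$; either way $v\in\tpre{}{X}$. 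For (ii) I would use that $P_X$ satisfies $X\subseteq P_X$ and $\pre{}{P_X}\subseteq P_X$, prove $\cpre{}{X}{0,i}\subseteq P_X$ for all $i\geq 1$ by induction (the base and step only use $\cprez{}{Y}\subseteq\pre{}{Y}$, which holds because every vertex has a successor), conclude $\attrz{}{X}\subseteq P_X$, and finally $\cpreo{}{\attrz{}{X}\cup X}\subseteq\pre{}{\attrz{}{X}\cup X}\subseteq\pre{}{P_X}\subseteq P_X$; together these give $\tpre{}{X}\subseteq P_X$.

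With (i) and (ii) in hand the argument closes quickly. Since $B=g(B)=U\cup\tpre{}{B}$ we have $U\subseteq B$ and $\tpre{}{B}\subseteq B$, so (i) gives $\pre{}{B}\subseteq\tpre{}{B}\cup B\subseteq B$, whence $f(B)=U\cup\pre{}{B}\subseteq B$ and therefore $A\subseteq B$. Conversely, since $A=f(A)=U\cup\pre{}{A}$ we have $\pre{}{A}\subseteq A$, i.e.\ $A\cup\pre{}{A}=A$; thus $A$ is a fixpoint of $Z\mapsto A\cup\pre{}{Z}$, so $\mu Z.\,(A\cup\pre{}{Z})\subseteq A$, while $A\subseteq\mu Z.\,(A\cup\pre{}{Z})$ trivially, giving $P_A=A$. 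Applying (ii) with $X:=A$ yields $\tpre{}{A}\subseteq P_A=A$, hence $g(A)=U\cup\tpre{}{A}\subseteq A$ and therefore $B\subseteq A$. The two inclusions give $A=B$, which is the claim.

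I expect the only real work to be the case analysis in (i) and (ii) rather than anything conceptual: one must keep in mind that $\attra{}{\cdot}$ is defined to \emph{exclude} its argument, respect the Player-$0$/Player-$1$ asymmetry hard-wired into $\cpreo{}{\cdot}$ and $\cprez{}{\cdot}$, and explicitly invoke the standing no-deadlock assumption (without it the containment $\cprez{}{Y}\subseteq\pre{}{Y}$, used in the induction for (ii), may fail). Everything after that is the routine ``least pre-fixpoint'' manipulation sketched above.
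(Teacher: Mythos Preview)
Your proof is correct and takes a genuinely different route from the paper. The paper argues via the finite iterates: writing $X_i$ and $X^{\mathsf t}_i$ for the $i$-th approximants of the two $\mu$-computations, it shows (a) $\pre{}{W}\subseteq\tpre{}{W}$, hence $X_i\subseteq X^{\mathsf t}_i$ by induction, and (b) for each $i$ there is some $j\geq i$ with $X^{\mathsf t}_i\subseteq X_j$, by unrolling $\mathsf{attr}^0$ into finitely many $\mathsf{pre}$ steps. You instead work abstractly in the Knaster--Tarski framework, proving that each fixpoint is a pre-fixpoint of the other operator; your key facts (i) $\pre{}{X}\subseteq\tpre{}{X}\cup X$ and (ii) $\tpre{}{X}\subseteq P_X$ correspond to the paper's (a) and (b), but are packaged so that no iteration bookkeeping is needed. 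Your statement of (i) is actually sharper than the paper's (a): the bare inclusion $\pre{}{W}\subseteq\tpre{}{W}$ fails when a $\pz$ vertex in $W$ has an edge out of $\attrz{}{W}\cup W$ (since $\attrz{}{W}$ excludes $W$), so the ``$\cup\,X$'' you add is not cosmetic. The paper's iterate-level argument is still fine because it only ever uses (a) together with a union by $X_i$, but your formulation isolates the correct pointwise inequality. What the paper's approach buys is an explicit quantitative statement---the $\mathsf{tpre}$ iteration reaches the fixpoint in no more steps than the $\mathsf{pre}$ iteration---which justifies the word ``accelerated''; your argument establishes only equality of the fixpoints and would need a separate remark to recover that.
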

\begin{proof}
	Let $ \mu X. ~U\cup \pre{}{X} $ gives the following computation of $ X $, which we refer to as the \emph{pre computation},
	\begin{table}[H]\centering
		\begin{tabular}{cccccccc}
			$ X_0$ & ${\color{gray!50}~\subsetneq}  $&$ X_1$ & ${\color{gray!50}~\subsetneq}  $&$ \quad\cdots\quad $&$ X_{{\color{red!50}k}}$ & $\color{red!50}{=} $&$ X_{k+1} $\\
			
			$ {\protect\rotatebox{90}{$=$}}  $&&$ {\protect\rotatebox{90}{$=$}}  $&&&$ {\protect\rotatebox{90}{$=$}} $&&{\protect\rotatebox{90}{$=$}}\\
			
			$ \emptyset  $&&$ U\cup \pre{}{X_0}  $&&$ \cdots $&$ X_{k-1}\cup \pre{}{X_{k-1}} $&&$ X_{k}\cup \pre{}{X_{k}} $\\
		\end{tabular}
	\end{table}

\noindent
	and similarly let $ \mu X. ~U\cup \tpre{}{X} $ gives the following computation of $ X $, which we refer to as the \emph{tpre computation},

\begin{table}[H]\centering
\begin{tabular}{cccccccc}
	$ \Xt_0$ & ${\color{gray!50}~\subsetneq}  $&$ \Xt_1$ & ${\color{gray!50}~\subsetneq}  $&$ \quad\cdots\quad $&$ \Xt_{{\color{red!50}l}}$ & $\color{red!50}{=} $&$ \Xt_{l+1} $\\
	
	$ {\protect\rotatebox{90}{$=$}}  $&&$ {\protect\rotatebox{90}{$=$}}  $&&&$ {\protect\rotatebox{90}{$=$}} $&&{\protect\rotatebox{90}{$=$}}\\
	
	$ \emptyset  $&&$ U\cup \tpre{}{\Xt_0}  $&&$ \cdots $&$ \Xt_{l-1}\cup \tpre{}{\Xt_{l-1}} $&&$ \Xt_{l}\cup \tpre{}{\Xt_{l}} $\\
\end{tabular}
\end{table}

We first observe that $ \mathsf{tpre} $ of a set contains the $ \mathsf{pre} $ of the set.
\begin{claim}
	For $ U\subset V $, $ \pre{}{U}\subseteq \tpre{}{U} $.
\end{claim}
\begin{proof}
	We know $ \cprez{}{U}\subseteq \attrz{}{U} $ by the definition of $ \mathsf{attr} $, and $ \cpreo{}{U}\subseteq \cpreo{}{\attrz{}{U}\cup U} $ by the monotonicity of $ \mathsf{cpre}^0 $. Then $ \pre{}{U}=\cprez{}{U}\cup \cpreo{}{U} \subseteq \attrz{}{U} \cup \cpreo{}{\attrz{}{U}\cup U} = \tpre{}{U} $. \tqed
\end{proof}

Now we show that every vertex that appears in the $ i $-th iteration of the pre computation, also appears in the $ i $-th iteration of tpre computation.

\begin{claim}
	$ \forall i\in \N, X_i\subseteq \Xt_i $.
\end{claim} 
\begin{proof}
	We prove the claim by induction on $ i $. For the base case, when $ i=0 $, the statement is trivially true. For induction hypothesis (IH), assume that the statement holds for some $ i\in \N $.
	\begin{align}
		\Xt_{i+1}&= \tpre{}{\Xt_i} \cup \Xt_i &\\
		&\supseteq\tpre{}{X_i}\cup X_i ,& \text{ by IH and monotonicity of }\mathsf{tpre}\\
		&\supseteq\pre{}{X_i}\cup X_i ,& \text{ by the claim above} \\
		&= X_{i+1},&
	\end{align}	
	Hence, by induction, the statement holds for any $ i\in \N $.\tqed
\end{proof}

The claim show one direction of the lemma, that is $  \mu X. ~U\cup \pre{}{X}\subseteq \mu X. ~U\cup \tpre{}{X} $. Both the claims above also give that $ l\leq k $, where $ l $ and $ k $ are the terminating step of tpre and pre computations respectively. For the other direction, we show that every vertex that appears in the $ i $-th iteration of tpre computation also eventually appears in the pre computation.

\begin{claim}
	$ \forall i\in \N, \exists j \geq i $ such that $ \Xt_i\subseteq X_j $.
\end{claim}
\begin{proof}
	We again prove the claim by induction on $ i $. For base case, again the statement holds trivially with $ j=0 $. Then for induction hypothesis (IH), assume that the statement holds for some $ i\in\N $, with some $ j\geq i $, that is $ \Xt_i\subseteq X_j $.
	
	Let $ v\in \Xt_{i+1}\setminus \Xt_i $ be an arbitrary vertex. Then $ v\in \attrz{}{\Xt_i} $ or $ v\in \cpreo{}{\attrz{}{\Xt_i}\cup \Xt_i} $.
	
	In the earlier case, when $ v\in \attrz{}{\Xt_i} $, we have
	\begin{align}
		v&\in \cpre{}{\Xt_i}{0,t} ,& \text{ for some }p\in \N\\
		&\subseteq \cpre{}{X_i}{0,p},& \text{ by IH}\\
		&\subseteq\mathsf{pre}^{0,p}{(X_i)},& \\
		&\subseteq X_{p+i}.&
	\end{align}	
	Then since $ \attrz{}{\Xt_i} $ terminates in at most $ |V|=n $ many iterations, that is $ \cup_{i\geq 1}^{n}\cpre{\gamegraph}{\Xt_i}{a,i}=\attrz{}{\Xt_i} $, we have that $ \attrz{}{\Xt_i}\subseteq X_{n+i} $.
	While in the later case, when $ v\in \cpreo{}{\attrz{}{\Xt_i}\cup \Xt_i} $, we have
	\begin{align}
		v&\in \cpreo{}{\attrz{}{\Xt_i}\cup \Xt_i} ,& \text{}\\
		&\subseteq \cpreo{}{X_{n+i}\cup X_j},& \text{ by IH and discussion above}\\
		&\subseteq\pre{}{X_{max\{n+i,j\}}},& \\
		&\subseteq X_{max\{n+i,j\}+1}. &
	\end{align}	
	Since $ i\leq p+i\leq max\{n+i,j\}+1 $, $ v\in X_{max\{n+i,j\}+1} $. Then by induction, the claim holds true.\tqed
\end{proof}
This claim shows that $ X_k=\Xt_k $, since $ l\leq k $ and $ \Xt_k=\Xt_l $. Hence the lemma is proved.
\end{proof}

\section{\aname assumptions for \buchi games}\label{proof:buechi assumption is correct}

For the convenience of the reader, we restate Thm. \ref{thm:Buechi assumptions} here.
\restatebuchi*

\begin{proof}
	
	Since $ \front(X^i)\subseteq V^1 $, we observe that every vertex in $ Z^*\cap V^0 $ is added to the least fixpoint computation of $ X $ in the $ \textsf{attr}^0 $ part of $ \textsf{tpre} $. Then $ V^0\cap Z^* $ can be partitioned into sets $ V_1, V_2\ldots, V_p $, where $ V_i= (X^i\setminus X^{i-1})\cap V^0 $. With this observation we prove sufficiency, implementability and permissiveness below and finally comment on the complexity of $ \computeSafe$.

	\begin{inparaitem}[$\blacktriangleright$]
		\noindent \item Implementability: 
		Since the source of live groups is a subset of $ \po $'s vertices, the assumption is easily implementable if $ \po $ plays one of these live group edges infinitely often, when the sources are visited infinitely often, and $ \pz $ can not falsify it.
	
		\noindent \item Sufficiency: 	
	Consider the strategy $ \stratz $ for $ \pz $: at a vertex $ v\in V_i $, she plays the $ \textsf{attr}^0 $ strategy to reach $ X^{i-1} $, and for other vertices, she plays arbitrarily. We show that $ \stratz $ is winning under assumption $ \assump $ for all vertices in the cooperative winning region.
	
	Let $ v_0\in Z^*=\team{0,1}\square\lozenge U $ (from Prop. \ref{prop:buechi cooperative winning region}). Let $ \strato $ be an arbitrary strategy of $ \po $ such that $ \lang(\strato)\subseteq \lang(\assump) $, and $ \rho=v_0v_1\ldots $ be an arbitrary $ \stratz\strato $-play. Then $ \rho\in \lang(\assump) $. It remains to show that 
	$ \rho \in \lang(\spec) $.
	
	Suppose $ \rho \not\in \lang(\spec) $, i.e. $ inf(\rho)\cap U=\emptyset $. Let $ Y $ saturate after $ k $ iterations, that is $ Y^{k-1}\supsetneq Y^k=Y^{k+1} $. 
	Note that by standard fixed-point computations (see \cite{ChatterjeeHenzingerPiterman:AlgoForBuchi} for more details),
	we have
	\begin{align}
		X^1&= \left( U \cap \tpre{}{Y^k} \right)= \left( U \cap \tpre{}{Z^*} \right). \label{eq:x1isU}
	\end{align}
	Then $ \exists ~i\in [2,p] $, such that $ inf(\rho)\cap X^i\not= \emptyset $ but $ inf(\rho)\cap X^j= \emptyset,~ \forall j<i $, since if there is no such $ i $, $ inf(\rho)\cap X^1\not= \emptyset $, and $ inf(\rho)\cap U\not=\emptyset $, by eq. \eqref{eq:x1isU}, contradicting our assumption.

	By the definition of $ \pi^0 $ and $ \textsf{attr}^0 $, every time $ \rho $ visits $ X^i\setminus \front(X^{i-1}) $, it also eventually visits $ X^{i-1} $. If $ \rho $ visits $ \front(X^{i-1}) $ infintely often, by the definition of $ \livegroup $, an edge $ (v,v')\in (\front(X^{i-1})\times (X^{i}\setminus \front(X^{i-1})))\cap E $ is taken infinitely often, where $ v'\in (X^{i}\setminus \front(X^{i-1})) = \attrz{}{X^{i-1}}\cup X^{i-1} $, and hence again $ X^{i-1} $ is visited infinitely often, contradicting our assumption. Hence, $ \rho\in\lang{(\spec)} $, and $ v_0\in\team{0}_\assump\spec $.

		\noindent \item Permissiveness: 
	Now for the permissiveness of the assumption, let $ \rho\in \lang{(\spec)} $. Suppose that $ \rho\not\in\lang{(\assump)} $. 
	
	Case 1: If $ \rho\not\in\lang{(\assumpsafe(\safegroup))} $, then some edge $ (v,v')\in \safegroup $ is taken in $ \rho $. Then after reaching $ v' $, $ \rho $ still satisfies the \buchi condition. Hence, by Prop. \ref{prop:buechi cooperative winning region}, $ v'\in Z^* $, but then $ (v,v')\not\in S $, which is a contradiction.
	
	Case 2: If $ \rho\not\in\lang(\assumpgrlive(\livegroup)) $, then $ \exists \livegroupSingle\in \livegroup, $ such that $ \rho $ visits $ src(\livegroupSingle)=\front(X^i) $ infinitely often, but no edge in $ \livegroupSingle $ is taken infinitely often. Then since the edges in $ \livegroupSingle $ lead to $  (X^{i+1}\setminus \front(X^i)) $, the play must  stay in either $ \front(X^i) $ or goes to $ X^{j}\setminus X^i $ for some $ j>i+1 $. In the first case, since $ U\cap Z^*\subseteq X^1 $, $ \rho\not\in\spec $, which would be a contradiction. On the other hand, in the second case, after going to $ X^j $, $ \rho $ has an edge going from some $ v\in Z^*\backslash X^{j-1} $ to some $ v'\in X^i $ (else $ U\cap Z^*\subseteq X^1\subseteq X^i $ can not be reached). But then $ v $ would be added to $ X^{i+1} $, which contradicts to the fact that $ j>i+1 $.
	In either case, we get a contradiction, so $ \rho\in\lang(\assump)$.

	\noindent \item \textit{Complexity analysis.}
	The computation of live groups takes $ \bigO(n^2) $ time to compute the inner least fixpoint variable $ X $ and there will be at most $ n $ such computations. While the inner fixpoint is being computed, in the last iteration of $ Y $, with additive overhead of $ \bigO(n^2) $, the live groups can be computed. Then the total computation time is $ \bigO(n^3) $.	
\end{inparaitem}
\end{proof}

\subsection{Faster algorithm for \buchi games}\label{sec:fasterbuchi}

\begin{algorithm}[H]
	\caption{\computeLive}
	\label{alg:compute buechi assumption}
		\begin{algorithmic}[1]
			\Require $ \gamegraph=\tup{V=\vertexz\cup \vertexo, E},$ \buchi objective $\spec=\square\lozenge I, $ for  $I\subseteq V$
			\Ensure Assumption $\assump$ on $ \p{1} $
			\State $Z^* \gets \solveBuchi_{0,1}(\gamegraph,I)$\label{algo:coop buechi computation}
			\State $(\safegroup^i, \safegroup^j)\gets  \computeSafe(\gamegraph,Z^*)$
			\State $\gamegraph\gets \gamegraph|_{Z^*}, I \gets I\cap Z^*$\label{algo:restricted buechi game}\Comment{All vertices are cooperatively \buchi winning}
			\State $ \livegroup \gets $\textsc{ComputeLiveGroups}$ ((\gamegraph,I),\emptyset) $
			\State \Return $(\safegroup,\livegroup)$ 
			
			\Statex
			\Procedure {ComputeLiveGroups}{$(\gamegraph,I),\livegroup$}
			\State $ U\gets I $
			\While{$ U\not=V $}
			\State $ W_{attr}\gets \attr{\gamegraph}{\gamegraph, U}{0} $\label{algo:buechi:compute attr}
			\State $ U\gets U\cup W_{attr} $\label{algo:buechi:add attr}
			\State $ C\gets \cpre{\gamegraph}{U}{1}  \setminus U $\label{algo:buechi:compute front}
			\State $ \livegroup\gets \livegroup \cup \{\{ (u,v)\in E\cap (C\times U) \}\} $\label{algo:buechi:add assump live group}
			\State $ U\gets U\cup C $\label{algo:buechi:add front}
			\EndWhile
			\State \Return $ (\livegroup) $
			\EndProcedure
		\end{algorithmic}
\end{algorithm}

\begin{theorem}
	Given a game graph $ \gamegraph=\tup{V=\vertexz\cup \vertexo, E} $, with \buchi winning condition $ \spec=\square\lozenge I $ for $ \p{0} $. Then Alg. \ref{alg:compute buechi assumption} terminates in time $ \bigO(m+n)$, and $ \assump = \assumpsafe (S) \wedge \assumpgrlive(\livegroup) $ is an adequately permissive assumption on $ \p{1} $ from $ Z^* $. Here, $n=|V|$ and $m=|E|$.
\end{theorem}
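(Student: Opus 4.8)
The plan is to show that Algorithm~\ref{alg:compute buechi assumption} is a worklist-based, linear-time unrolling of the accelerated fixpoint $\TsolveBuchi$ underlying Theorem~\ref{thm:Buechi assumptions}, restricted to the cooperative Büchi region $Z^\ast$, so that correctness reduces to Theorems~\ref{thm:Buechi assumptions} and~\ref{thm:safety assumption}, and then to give an amortized $\bigO(m+n)$ analysis of the iteration. I would start with two structural facts about $Z^\ast=\solveBuchi(\gamegraph,I)$: every vertex of $Z^\ast$ has a successor inside $Z^\ast$ (a one-step escape would witness that the escaped vertex is cooperatively winning), so $Z^\ast\subseteq\pre{}{Z^\ast}$; and inside $\gamegraph|_{Z^\ast}$ every vertex is cooperatively $\square\lozenge(I\cap Z^\ast)$-winning. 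By Proposition~\ref{prop:buechi cooperative winning region} the outer $\nu Y$ of $\TsolveBuchi(\gamegraph|_{Z^\ast},I\cap Z^\ast)$ therefore stabilizes after one iteration at $Z^\ast$, so only the inner fixpoint $\mu X.\,(I\cap Z^\ast)\cup\tpre{}{X}$ matters. Then I would prove, by induction on the number of while-loop iterations of \textsc{ComputeLiveGroups}, the invariant that after iteration $i$ the variable $U$ equals the $(i{+}1)$-st iterate $X^{i+1}$ of this fixpoint, that $W_{\mathit{attr}}$ realizes the $\attrz{}{\cdot}$-summand of $\tpre{}{X^i}$ and $C=\cpreo{}{U}\setminus U$ the corresponding frontier $\front(X^i)$ (up to vertices already in $X^i$), and that the live group added is $(C\times(X^i\cup\attrz{}{X^i}))\cap E$; together with $\safegroup=\computeSafe(\gamegraph,Z^\ast)$ this matches the construction of Theorem~\ref{thm:Buechi assumptions} applied to $(\gamegraph|_{Z^\ast},\square\lozenge(I\cap Z^\ast))$.

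From the invariant, adequacy of $\assump=\assumpsafe(\safegroup)\wedge\assumpgrlive(\livegroup)$ follows. Implementability is immediate, since both templates restrict only $\po$-moves (every live-group source is a $\po$-vertex). Permissiveness: any play of $\gamegraph$ satisfying $\square\lozenge I$ must stay in $Z^\ast$ forever, as an escaping $\safegroup$-edge leads to a vertex that is not even cooperatively winning; hence, viewed as a play of $\gamegraph|_{Z^\ast}$ satisfying $\square\lozenge(I\cap Z^\ast)$, it is accepted by $\assumpgrlive(\livegroup)$ by the permissiveness part of Theorem~\ref{thm:Buechi assumptions}. Sufficiency: from any $v\in Z^\ast=\team{0,1}\spec$, compose the $\pz$-strategy that keeps the play inside $Z^\ast$ (possible by the first structural fact) with the sufficient strategy for the live-group assumption inside $\gamegraph|_{Z^\ast}$ provided by Theorem~\ref{thm:Buechi assumptions}. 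The one delicate point is that $C$ contains only the frontier $\po$-vertices \emph{not} already in $X^i$, whereas \eqref{eq: buechi live groups} may also list frontier vertices inside $X^i$; I would argue this is immaterial (such a vertex lies in $I\cap Z^\ast$ or in an earlier attractor layer, so its live-group obligation is vacuous or already enforced) or, equivalently, re-run the three-part argument of Theorem~\ref{thm:Buechi assumptions} verbatim on the algorithm's output.

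For the $\bigO(m+n)$ bound: the while loop runs at most $n$ times, since by the invariant $U$ grows strictly while $U\neq V$ (the inner fixpoint equals $Z^\ast=V$ in the restricted game, so a stalled iteration would force $U=V$). The three ingredients are each linear overall: (i) $Z^\ast=\solveBuchi(\gamegraph,I)$ is a purely cooperative Büchi computation --- ``from which vertices is there an infinite path meeting $I$ infinitely often'' --- done by one Tarjan SCC decomposition (keep vertices whose SCC either is nontrivial and meets $I$, or is a singleton $\{u\}$ with $u\in I$ and $(u,u)\in E$) and one backward reachability sweep, hence $\bigO(m+n)$; (ii) $\computeSafe(\gamegraph,Z^\ast)$ costs $\bigO(m)$, as $Z^\ast\subseteq\pre{}{Z^\ast}$ makes $\nu Y.\,Z^\ast\cap\pre{}{Y}$ stabilize at $Z^\ast$ immediately, leaving one scan of the out-edges of $V^1\cap Z^\ast$; (iii) the whole loop costs $\bigO(m+n)$ amortized, by keeping for each vertex a counter of its successors not yet in $U$ plus a worklist, so that every edge is relaxed a constant number of times and the interleaved $\attrz{}{\cdot}$-closures and $\cpreo{}{\cdot}$-frontiers are computed incrementally across iterations rather than recomputed. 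The main obstacle is this amortized implementation --- one must ensure that re-entering the loop never re-examines already-settled vertices or edges, which is exactly what replaces a naive $\bigO(n(m+n))$ bound by $\bigO(m+n)$ --- together with the live-group reconciliation needed to invoke Theorem~\ref{thm:Buechi assumptions} cleanly.
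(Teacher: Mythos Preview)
Your overall plan is sound, but it takes a genuinely different route from the paper. The paper does \emph{not} reduce the faster algorithm to Theorem~\ref{thm:Buechi assumptions}; instead it reproves implementability, sufficiency, and permissiveness directly from the layers $X_l\coloneqq U_l\setminus U_{l-1}$ produced by \textsc{ComputeLiveGroups}, using essentially the same three-part argument you call the ``fallback''. Termination is shown by a simple path-to-$I$ argument (any $v\in Z^\ast\setminus U_m$ has a path into $U_m$, and the last vertex outside $U_m$ on that path would be added in iteration $m{+}1$). So your fallback is precisely the paper's proof, and your complexity discussion is in fact considerably more careful than the paper's, which asserts the $\bigO(m+n)$ bound for the loop and for the cooperative region without spelling out the SCC-plus-backward-reachability or the amortized edge-relaxation argument.

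Where your primary route is shaky is exactly the ``delicate point'' you flag. The live groups output by the algorithm and those of Theorem~\ref{thm:Buechi assumptions} genuinely differ: the algorithm's source set $C$ is $\front(X^i)\setminus X^i$, not $\front(X^i)$. Your claim that the missing sources have ``vacuous or already enforced'' obligations is not correct as stated: a vertex $v\in\front(X^i)\cap X^i\cap V^1$ need not have all successors in $X^i\cup\attrz{}{X^i}$ (membership in $\cpreo{}{\cdot}$ for a $V^1$-vertex only requires \emph{some} successor there), so the Theorem~\ref{thm:Buechi assumptions} obligation at $v$ is neither trivially satisfied nor subsumed by an earlier live group with the same target set. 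Consequently neither sufficiency nor permissiveness transfers from Theorem~\ref{thm:Buechi assumptions} to the algorithm's assumption by a language-inclusion argument; the two assumptions are incomparable in general. The reduction therefore does not buy you the correctness of $\assump$ for free---you must redo the three-part argument on the algorithm's own $X_l$-partition, which is what the paper does. Once you accept that, your invariant about $U$ versus $X^{i+1}$ becomes unnecessary (and avoiding it is wise, since the $\mathsf{tpre}$-based iteration is somewhat delicate to match literally because $\attrz{}{\cdot}$ excludes its argument).
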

\begin{proof}
	We first show that the algorithm terminates. We show that the procedure $ \textsc{ComputeLiveGroups} $ terminates. Since in step \ref{algo:restricted buechi game}, the game graph is restricted to cooperative \buchi winning region $ Z^* $, we need to show that in the procedure, $ U=V=Z^* $ eventually. Let $ U_l $ be the value of $ U $ after the $ l$-th iteration of $ \textsc{ComputeLiveGroups} ((\gamegraph,I),\emptyset,\emptyset)  $, with $ U^0=I $. Since vertices are only added to $ U $ (and never removed) and there are only finitely many vertices, 
    $ U_0\subset U_1\subset\ldots\subset U_{m}=U_{m+1} $
	for some $ m\in\N $. 
		
	Since the other direction is trivial, we show that $ Z^*\subseteq U_m $. Suppose this is not the case, i.e. $ v\in Z^*\backslash U_m $. Since $ v\in Z^* $, both players cooperately can visit $ I $ from $ v $. Then there is a finite path $ \playprefix = v_0v_1\cdots v_k $ for $ v_k\in I $ and $v_0=v$. But since $ I=U_0\subseteq U_m $, but $ v\not \in U_m $, $ \playprefix $ enters $ U_m $ eventually. Let $ l $ be the highest index such that $ v_l\not\in U_m $ but $ v_{l+1}\in U_m $.
	
	Then if $ v_l\in V^0 $, it would be added to $ U $ in step \ref{algo:buechi:add attr} of $ (m+1) $-th iteration, i.e. $ U_m\not=U_{m+1} $. Else if $ v_l\in V^1 $, it would be added to $ U $ in step \ref{algo:buechi:add front} of $ (m+1) $-th iteration since $ v_{l+1}\in U_m $, i.e. $ U_m\not=U_{m+1} $. In either case, we get a contradiction. Hence, $ v\in U_m $, implying $ Z^*=U_m $.  Hence the procedure \textsc{ComputeLiveGroups}, and hence the Algo. \ref{alg:compute buechi assumption}, terminates.
	
	We now show that the assumption obtained is adequately permissive. 
	
	\begin{inparaitem}[$\blacktriangleright$]
		\noindent \item Implementability: We note that in step \ref{algo:buechi:compute front}, $ C\subseteq V^1 $: since if $ v\in V^0\cap C $, then there is an edge from $ v $ to $ U $, and hence $ v\in U $ already by steps \ref{algo:buechi:compute attr} and \ref{algo:buechi:add attr}. Then since the source of live groups (which are only added in step \ref{algo:buechi:add assump live group}) is a subset of $ \p{1} $'s vertices, the assumption is easily implementable if $ \p{1} $ plays one of these live group edges infinitely often, when the sources are visited infinitely often, and $ \p{0} $ can not falsify it.
		
		\noindent \item Sufficiency: Again, let $ U_l $ and $ m $ be as defined earlier. Define $ X_l\coloneqq U_l\backslash U_{l-1} $ for $ 1\leq l\leq m $, and $ X_0=U_0=I $. Then every vertex $ v\in Z^* $ is in $ X_l $ for some $ l\in [0;m] $.
		
		Consider the strategy $ \stratz$ for $ \p{0} $: at a vertex $ v\in V_0\cap X_l $, she plays the $ \textsf{attr}^0 $ strategy to reach $ U_{l-1} $, and for other vertices, she plays arbitrarily. We show that $ \stratz $ is winning under assumption $ \assump $ for $ \p{0} $ from all vertices in the cooperative winning region $ Z^* $.
		
		Let $ v_0\in Z^*=\team{0,1}\square\lozenge I $ (from Step \ref{algo:coop buechi computation}). Let $ \strato $ be an arbitrary strategy of $ \p{1} $ such that $ \lang(\strato)\subseteq \lang(\assump) $, and $ \rho=v_0v_1\ldots $ be an arbitrary $ \stratz\strato $-play. Then $ \rho\in \lang(\assump) $. It remains to show that 
		$ \rho \in \lang(\spec) $.
		
		Suppose $ \rho \not\in \lang(\spec) $, i.e. $ inf(\play)\cap I=\emptyset $. Note that $ \play $ never leaves $ Z^* $ due to safety assumption template. Then consider the set $ R $ of vertices which occur infinitely often in $ \rho $. Let $ 0\leq k\leq m $ be the least index such that $ R\cap X_{k} \not=\emptyset$. From the assumption, $ k>0 $. Let $ v\in R\cap X_k $.
		
		If $ v\in V^0 $, by the definition of $ \stratz $, every time $ \play $ reaches $ v $, it must reach $ U_{k-1} $, contradicting the minimality of $ k $. Else if $ v\in V^1 $, then by the definition of $ \livegroup $, infinitely often reaching $ v $ implies infinitely often reaching $ \attr{0}{U_{k-1}}{} $. But again the play visits $ U_{k-1} $ by arguments above, giving the contradiction.
		
		In any case, we get a contradiction, implying that the assumption is wrong. Hence, $ \play \in \lang(\spec)$, and $ v_0\in\team{0}_\assump\spec  $.

		\noindent \item Permissiveness: 
		Now for the permissiveness of the assumption, let $ \rho\in \lang{(\spec)} $. Suppose that $ \rho\not\in\lang{(\assump)} $. 
		
		Case 1: If $ \rho\not\in\lang{(\assumpsafe(\safegroup))} $, then some edge $ (v,v')\in \safegroup $ is taken in $ \rho $. Then after reaching $ v' $, $ \rho $ still satisfies the \buchi condition. Hence, $ v'\in Z^*=\team{0,1}\square\lozenge I $, but then $ (v,v')\not\in S $, which is a contradiction.
		
		Case 2: If $ \rho\not\in\lang(\assumpgrlive(\livegroup)) $, then $ \exists \livegroupSingle\in \livegroup, $ such that $ \rho $ visits $ src(\livegroupSingle)=C_l $ (for the value of C after $ l $-th iteration) infinitely often, but no edge in $ \livegroupSingle $ is taken infinitely often. Then since the edges in $ \livegroupSingle $ lead to $  \attr{}{U_{l-1}}{0} $, the play must  stay in either $ C_l $ or goes to $ U^{k}\setminus U^l $ for some $ k>l+1 $. In the first case, since $ I\cap Z^*\subseteq U^0 $, $ \rho\not\in\spec $, which would be a contradiction. On the other hand, in the second case, after going to $ U^k\setminus U^l $, $ \rho $ has an edge going from some $ v\in Z^*\backslash U^{k-1} $ to some $ v'\in U^l $ (else $ I\cap Z^*\subseteq U^0\subseteq U^l $ can not be reached). But then $ v $ would be added to $ U^{k+1} $, which contradicts to the fact that $ k>l+1 $.
		In either case, we get a contradiction, so $ \rho\in\lang(\assump)$. 

		\noindent \item \textit{Complexity analysis.}
		The computation of cooperative winning region can be done in time linear in size of the game graph, i.e. $ \bigO(m+n) $. The procedure \textsc{ComputeLiveGroups} takes $ \bigO(m+n) $ time. Hence, resulting in time linear in number of edges in the game graph.
	\end{inparaitem}
	
\end{proof}

\section{\aname Assumptions for \cobuchi games}\label{proof:cobuechi assumption is correct}

For the convenience of the reader, we restate Thm. \ref{thm:coBuechi assumptions} here.
\restatecobuchi*

\begin{proof}
	We note that $ V^0\cap Z^* $ can be partitioned into sets $ V_1, V_2\ldots, V_p $, where $ V_i= (X^i\setminus X^{i-1})\cap V^0 $. We say $ v\in Z^* $ has \emph{rank} $ i $ if $ v\in X^i\setminus X^{i-1} $.

	\begin{inparaitem}[$\blacktriangleright$]
		\noindent \item Implementability: 
		We again observe that the sources of the co-live edges are $ \po $'s vertices and by construction, each source has at least one alternative edge that is neither co-live nor unsafe. Hence, they can be easily implemented by $ \po $, by taking the co-live edges only finitely often.  We again prove sufficiency, implementability and permissiveness next and finally comment on the complexity of $\computeCoLive$.

	\noindent \item Sufficiency: 	
	Consider the following strategy $ \pi^0 $ for $ \pz $: at a vertex $ v\in V_1 $, she takes edge $ (v,v')\in E $ such that $ v'\in X^1  $, at a vertex $ v\in V_i $, for $ i\in [2;p] $, she plays the existing edge to reach $ X^{i-1} $, and for all other vertices, she plays arbitrarily.
	
	Let $ v_0\in Z^*=\team{0,1}\lozenge\square U $, by correctness of the fixpoint (see \cite[chapter 2]{AutomataLogicsandInfiniteGames}). Let $ \strato $ be an arbitrary strategy of $ \po $ such that $ \lang(\strato)\subseteq \lang(\assump) $, and $ \rho=v_0v_1\ldots $ be an arbitrary $ \stratz\strato $-play. Then $ \rho\in \lang(\assump) $.
	
	Since $ \rho\in \lang(\assumpsafe(\safegroup)) $, $ v_i\in Z^* $ for all $ i $. Now suppose $ \rho\not\in \lang(\spec) $, i.e. $ \inf(\rho)\cap (Z^*\setminus U)\not=\emptyset $. Let $ u\in Z^*\setminus U $. Then to reach $ u $ infinitely often some edge from $ \colivegroup$ must be taken infinitely often in $ \rho $, which contradicts the fact that $ \rho\in \lang(\assump) $. Hence, $ \rho\in\lang(\spec) $.

	\noindent \item Permissiveness: 
	 Let $ \rho= v_0v_1\ldots $ such that $ v_0\in Z^* $ and $ \rho\in\lang(\spec) $. Suppose that $ \rho\not\in \lang(\assump) $.
	
	Case 1: If $ \rho\not\in\assumpsafe(\safegroup) $. Then the same argument as in the \buchi case gives a contradiction.
	
	Case 2: If $ \rho\not\in\assumpdep(\colivegroup) $, that is $ \exists (u,v)\in \colivegroup, $ such that $ \rho $ takes $ (u,v) $ infinitely often. By the definition of $ \colivegroup$, $ v\in Z^*\setminus X^1 $, implying $ v\not\in U $, since if $ v\in U $ then it would have been in $ X^1 $ (see \cite{AutomataLogicsandInfiniteGames}). Hence, $ \rho\not\in\lang(\spec) $, giving a contradiction. So $ \rho\in\lang(\assump)$.

	\noindent \item \textit{Complexity analysis.}
		Very similar to that for live group assumptions and therefore omitted.
	\end{inparaitem}
	
\end{proof}

\subsection{Accelerated fix-point algorithm for \cobuchi}\label{sec:accelcobuchi}
We now prove the correctness of the accelerated fix-point algorithm.
\Trestatecobuchi*

\begin{proof}
	Analogous to the \buchi case, every vertex in $ Z^*\cap V^0 $ is added to the least fixpoint computation of $ X $ in the $ \textsf{attr}^0 $ part of $ \textsf{tpre} $, and $ V^0\cap Z^* $ can be partitioned into sets $ V_1, V_2\ldots, V_p $, where $ V_i= (X^i\setminus X^{i-1})\cap V^0 $. We say $ v\in Z^* $ has \emph{rank} $ i $ if $ v\in X^i\setminus X^{i-1} $.
	
	Consider the strategy $ \pi^0 $ for $ \pz $: at a vertex $ v\in V_1 $, takes edge $ (v,v')\in E $ such that $ v'\in X^1  $, at a vertex $ v\in V_i $, for $ i\in [2;p] $, plays the $ \textsf{attr}^0 $ strategy to reach $ X^{i-1} $, and for other vertices, plays arbitrarily.
	
	Let $ v_0\in Z^*=\team{0,1}U $, by application of Prop. \ref{prop:cobuechi_tpre}. Let $ \strato $ be an arbitrary strategy of $ \po $ such that $ \lang(\strato)\subseteq \lang(\assump) $, and $ \rho=v_0v_1\ldots $ be an arbitrary $ \stratz\strato $-play. Then $ \rho\in \lang(\assump) $.
	
	Since $ \rho\in \lang(\assumpsafe(\safegroup)) $, $ v_i\in Z^* $ for all $ i $. Now suppose $ \rho\not\in \lang(\spec) $, i.e. $ \inf(\rho)\cap (Z^*\setminus U)\not=\emptyset $. Let $ u\in Z^*\setminus U $. Then to reach $ u $ infinitely often some edge from $ \colivegroup$ must be taken infinitely often in $ \rho $, which contradicts the fact that $ \rho\in \lang(\assump) $. Hence, $ \rho\in\lang(\spec) $.
	
	Now we show the permissiveness of the assumption. Let $ \rho= v_0v_1\ldots $ such that $ v_0\in Z^* $ and $ \rho\in\lang(\spec) $. Suppose that $ \rho\not\in \lang(\assump) $.
	
	Case 1: If $ \rho\not\in\assumpsafe(\safegroup) $. Then the same argument as in the \buchi case gives a contradiction.
	
	Case 2: If $ \rho\not\in\assumpdep(\colivegroup) $, that is $ \exists (u,v)\in \colivegroup, $ such that $ \rho $ takes $ (u,v) $ infinitely often. By the definition of $ \colivegroup$, $ v\in Z^*\setminus X^1 $, implying $ v\not\in U $, since if $ v\in U $ then it would have been in $ X^1 $ (see \cite{AutomataLogicsandInfiniteGames}). Hence, $ \rho\not\in\lang(\spec) $, giving a contradiction. So $ \rho\in\lang(\assump)$.

	We again observe that the sources of the co-live edges are $ \po $'s vertices and by construction, each source has at least one alternative edge that is neither co-live nor unsafe. Hence, they can be easily implemented by $ \po $, by taking those edges only finitely often.
	
	The complexity analysis is similar to that for live group assumptions.
	
\end{proof}

\subsection{Faster algorithm for \cobuchi games}\label{sec:fastercobuchi}

\begin{algorithm}[H]
	\caption{\computeCoLive}
	\label{alg:compute cobuechi assumption}
		\begin{algorithmic}[1]
			\Require $ \gamegraph=\tup{V=\vertexz\cup \vertexo, E}, I\subseteq V$
			\Ensure Assumption $\assump$ on $ \p{1} $ 
			\State $Z^* \gets \solveCobuchi_{0,1}(\gamegraph,I)$\label{algo:coop cobuechi computation}
			\State $\safegroup\gets  \computeSafe(\gamegraph,Z^*)$
			\State $\gamegraph\gets \gamegraph|_{Z^*}, I \gets I\cap Z^*$\label{algo:restricted cobuechi game}\Comment{All vertices are cooperatively \cobuchi winning}
			\State $ \colivegroup \gets $\textsc{ComputeCoLive}$ ((\gamegraph,I),\emptyset) $
			\State \Return $(\safegroup,\colivegroup)$ 
			
			\Statex
			\Procedure {ComputeCoLive}{$(\gamegraph,I),\colivegroup$}
			\State $ U\gets \solveSafety_{0,1}(\gamegraph, I) $\Comment{$ U\subseteq I $}
			\State $ D\gets (U\times V\backslash U)\cap E $
			
			\While{$ U\not=V $}
			\State $ W_{attr}\gets \attr{\gamegraph}{\gamegraph, U}{0} $\label{algo:cobuechi:compute attr}
			\State $ U\gets U\cup W_{attr} $\label{algo:cobuechi:add attr}
			\State $ C\gets \cpre{\gamegraph}{U}{1}  \setminus U $\label{algo:cobuechi:compute front}
			\State $ D\gets D \cup \{ (u,v)\in E\cap (C\times V\setminus U) \} $\label{algo:cobuechi:add assump live group}
			\State $ U\gets U\cup C $\label{algo:cobuechi:add front}
			\EndWhile
						
			\State \Return $ \colivegroup $
			\EndProcedure
		\end{algorithmic}
\end{algorithm}

\begin{theorem}
	Given a game graph $ \gamegraph=\tup{V=\vertexz\cup \vertexo, E} $, with \cobuchi objective $ \spec=\lozenge\square I $ for $ \p{i} $. Then Alg. \ref{alg:compute cobuechi assumption} terminates in time $\bigO(m+n)$, and $ \assump = \assumpsafe (S) \wedge \assumpdep(\colivegroup) $ is an adequately permissive assumption on $ \p{1} $ from $ Z^* $. Here, $n=|V|$ and $m=|E|$.
\end{theorem}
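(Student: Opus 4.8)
The plan is to mirror the proof of the faster B\"uchi algorithm (Appendix~\ref{sec:fasterbuchi}), adapting each step from ``making progress towards $I$'' to ``being kept from drifting away from the region where the play can stay in $I$''. Throughout I fix the stratification that \textsc{ComputeCoLive} produces: after the restriction to $\gamegraph|_{Z^*}$, set $U^{(0)}\coloneqq\solveSafety_{0,1}(\gamegraph,I)$ and let $U^{(l)}$ be the value of $U$ after the $l$-th iteration of the while loop; give each $v\in Z^*$ a \emph{rank} $\rho(v)$ equal to the least $l$ with $v\in U^{(l)}$ (so $\rho(v)=0$ iff $v\in U^{(0)}$), recording whether $v$ entered in the attractor step (line~\ref{algo:cobuechi:add attr}) or in the frontier step (line~\ref{algo:cobuechi:add front}). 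Termination, implementability, sufficiency and permissiveness are then all argued in terms of $\rho$.

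For termination and running time: $U$ only grows, so $U^{(0)}\subseteq U^{(1)}\subseteq\cdots$ stabilizes at some $U^{(M)}$, and I would show $U^{(M)}=Z^*$ by a path-cutting argument. If some $v\in Z^*\setminus U^{(M)}$ existed, then, since $v\in\team{0,1}\lozenge\square I$ in the restricted graph, there would be a finite play from $v$ into $U^{(0)}$; on it, let $v_l\notin U^{(M)}$ and $v_{l+1}\in U^{(M)}$ be the last crossing. If $v_l\in V^0$ it lies in the player-$0$ attractor of $U^{(M)}$ and would be added at line~\ref{algo:cobuechi:add attr} of iteration $M+1$; if $v_l\in V^1$ it lies in $\cpreo{}{U^{(M)}}\setminus U^{(M)}$ (witnessed by $v_{l+1}$) and would be added at line~\ref{algo:cobuechi:add front}; either way $U^{(M)}\neq U^{(M+1)}$, a contradiction, so $U^{(M)}=Z^*$. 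The bound $\bigO(m+n)$ then follows because the cooperative co-B\"uchi region and $\solveSafety_{0,1}$ are computable in linear time (the latter by iteratively discarding $I$-vertices with no successor in the current set), and over all iterations of the loop every vertex and edge is touched a constant number of times by the standard amortized attractor computation.

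For implementability and sufficiency: as in the B\"uchi proof, one first checks that $C\subseteq V^1$ in every iteration (a player-$0$ vertex all of whose successors already lie in $U$ would itself have entered $U$ in the preceding attractor step), so --- reading the initialization of $\colivegroup$ with the $V^1$-restriction on sources used in Thm.~\ref{thm:coBuechi assumptions} --- every co-live edge has an environment source, and each such source retains an outgoing edge that is neither co-live nor unsafe; hence $\po$ realizes $\assump$ against any $\pz$ strategy by never using unsafe edges and using each co-live edge only finitely often, giving $\team{1}\assump=V$. For sufficiency I would have $\pz$ play $\stratz$: inside $U^{(0)}$ a $\pz$ vertex plays an edge that stays in $U^{(0)}$ (one exists by definition of the cooperative-safety region); a $\pz$ vertex of rank $l\geq 1$ that entered in an attractor step plays the attractor strategy forcing a strictly smaller rank; elsewhere $\stratz$ is arbitrary. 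Given $v_0\in Z^*$, any $\po$ strategy $\strato$ with $\lang(\strato)\subseteq\lang(\assump)$, and any $\stratz\strato$-play $\rho$, the safety template keeps $\rho$ inside $Z^*$ and the co-liveness template yields a suffix on which no co-live edge is taken. Letting $R$ be the set of vertices visited infinitely often and $k=\min_{v\in R}\rho(v)$: if $k\geq 1$, a rank-$k$ vertex of $R$ that entered in an attractor step forces a visit to rank $<k$, and one that entered as a frontier vertex has, on the co-live-free suffix, only successors of rank $<k$; in both cases $R$ meets a smaller rank, a contradiction. Hence $k=0$, $\rho$ returns to $U^{(0)}$ infinitely often and, on the co-live-free suffix, cannot leave $U^{(0)}$ (all edges leaving $U^{(0)}$ are co-live, and $\stratz$ keeps $\pz$ vertices inside), so $\rho$ eventually stays in $U^{(0)}\subseteq I$ and satisfies $\lozenge\square I$.

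The step I expect to be the main obstacle is permissiveness. Take $\rho$ with $\rho\models\lozenge\square I$ and suppose $\rho\notin\lang(\assump)$. If the safety template fails, an edge leaving $Z^*$ is taken while the suffix still satisfies $\lozenge\square I$; but every vertex from which a $\lozenge\square I$-play starts lies in $\team{0,1}\lozenge\square I=Z^*$, a contradiction. If the co-liveness template fails, some co-live edge, whose target $w$ lies outside $U^{(0)}$ by construction, is taken infinitely often, so $w$ belongs to the set $R$ of infinitely-often-visited vertices; the crux is to show $R\subseteq U^{(0)}$, which contradicts $w\in R\setminus U^{(0)}$. This holds because the eventual tail of a play that stays inside $I$ is an infinite path within $I$ whose vertex set is closed under ``has a successor in it'', hence contained in the largest such subset of $I$ --- which is exactly $U^{(0)}=\solveSafety_{0,1}(\gamegraph,I)$, equivalently the set $X^1$ of the $\solveCobuchi$ computation. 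Pinning this identification down, and checking it is unaffected by the earlier restriction to $\gamegraph|_{Z^*}$, is the genuinely co-B\"uchi-specific part; everything else is bookkeeping parallel to Appendix~\ref{sec:fasterbuchi}.
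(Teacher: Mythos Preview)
Your proposal follows the same four-part outline as the paper's proof in Appendix~\ref{sec:fastercobuchi}, and the termination argument, the rank-based sufficiency strategy, and the complexity bookkeeping match it closely. One minor slip: a frontier vertex of rank $k$ does not have only non-co-live successors of rank $<k$---its non-co-live edges land in $U^{(k-1)}\cup W_{attr}$, and the attractor part has rank $k$---but your conclusion is unaffected, since any such rank-$k$ attractor vertex in $R$ then falls under your first case and still forces a visit to rank $<k$.

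Your permissiveness argument actually improves on the paper's. The paper asserts that the target $v$ of any co-live edge lies outside $I$ (``since if $v\in I$ then it would have been in $U_0$''), but this implication fails: a vertex in $I\cap Z^*$ whose only successors leave $I$ is not in $U_0$ (take $a\to b\to c\to c$ with $I=\{a,c\}$). Your route---the tail of any $\lozenge\square I$-play has a vertex set that is a post-fixpoint of $Y\mapsto I\cap\pre{}{Y}$ and is therefore contained in $U^{(0)}$---is the correct way to close the case, and your flag about reading the initialization of $\colivegroup$ with the $V^1$-restriction from Thm.~\ref{thm:coBuechi assumptions} is likewise on point; the paper simply asserts that all co-live sources are $\po$-vertices without treating the initial batch.
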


\begin{proof}
	We first show that the algorithm terminates. We show that the procedure $ \textsc{ComputeCoLive} $ terminates when all the vertices of the game graph are cooperatively winning for the \cobuchi objective $ \spec=\lozenge\square I $, since we restrict the graph to the cooperative winning region in step \ref{algo:restricted cobuechi game}. We claim that $ U=V=Z^* $, eventually.
	
	Let $ U_l $ be the value of the variable $ U $ after $ l $-th iteration of the while loop, with $ U_0= \solveSafety_{0,1}(\gamegraph, I) $. Since vertices are only added in $ U $, 
	 $ U_0\subset U_1\subset\ldots\subset U_{m}=U_{m+1} $
	 for some $ m\in \N $. Suppose $ V\not\subseteq U_m $, then there exists $ v\in V\backslash U_m $. Since $ v\in Z^* $, there is a $ \play = vv_1v_2\ldots $ from $ v $ to $ U_0 $ and stays there forever. Then consider the largest index $l $ such that $ v_l\not\in U_m $, but $ v_{l+1}\in U_m $. Note that this index exists because $ U_0\subseteq U_m$. 
	 
	 
	 Then if $ v_l\in V^0 $, it would be added to $ U $ in step \ref{algo:cobuechi:add attr} of $ (m+1) $-th iteration, i.e. $ U_m\not=U_{m+1} $. Else if $ v_l\in V^1 $, it would be added to $ U $ in step \ref{algo:cobuechi:add front} of $ (m+1) $-th iteration since $ v_{l+1}\in U_m $, i.e. $ U_m\not=U_{m+1} $. In either case, we get a contradiction. Hence, $ v\in U_m $, implying $ Z^*=U_m $.  Hence the procedure \textsc{ComputeCoLive}, and hence the Algo. \ref{alg:compute cobuechi assumption}, terminates.
	
	Now we show that $ \assump $ is an adequately permissive assumption. Again, let $ U_l $ and $ m $ be as defined earlier. Define $ X_l\coloneqq U_l\backslash U_{l-1} $ for $ 1\leq l\leq m $, and $ X_0=U_0=I $. Then every vertex $ v\in Z^* $ is in $ X_l $ for some $ l\in [0;m] $.
	
	We again prove sufficiency, implementability and permissiveness separately and finally comment on the complexity of $\computeCoLive$.

	\begin{inparaitem}[$\blacktriangleright$]
		\noindent \item Implementability: 
		We again observe that the sources of the co-live edges in $ \colivegroup $ are $ \p{1} $'s vertices and  by construction, each source has at least one alternative edge that is neither co-live nor unsafe. Hence, they can be easily implemented by $ \p{1} $, by taking those edges only finitely often.

		\noindent \item Sufficiency: 	
		Consider the following strategy $ \stratz $ for $ \p{0} $: at a vertex $ v\in X_0\cap V^0  $, she takes edge $ (v,v')\in E $ such that $ v'\in X_0  $, at a vertex $ v\in X_l\cap V^0 $, for $ l\in [2;m] $, she plays the $ \textsf{attr}^0 $ strategy to reach $ U_{l-1} $, and for all other vertices, she plays arbitrarily.
		
		Let $ v_0\in Z^*=\team{0,1}\lozenge\square I $ (from Step \ref{algo:coop cobuechi computation}). Let $ \strato $ be an arbitrary strategy of $ \p{1} $ such that $ \lang(\strato)\subseteq \lang(\assump) $, and $ \rho=v_0v_1\ldots $ be an arbitrary $ \strati\strato $-play. Then $ \rho\in \lang(\assump) $. It remains to show that $ \rho \in \lang(\spec) $.

		
		Since $ \rho\in \lang(\assumpsafe(\safegroup)) $ and by definition of $ \strati $, $ v_i\in Z^* $ for all $ i $. Now suppose $ \rho\not\in \lang(\spec) $, i.e. $ \inf(\rho)\cap (Z^*\setminus I)\not=\emptyset $. Let $ u\in Z^*\setminus I $. Then to reach $ u $ infinitely often some edge from $ \colivegroup$ must be taken infinitely often in $ \rho $, since $ \strati $ makes the play go towards $ I $. But this contradicts the fact that $ \rho\in \lang(\assump) $. Hence, $ \rho\in\lang(\spec) $.

		\noindent \item Permissiveness: 
		Let $ \rho= v_0v_1\ldots $ such that $ v_0\in Z^* $ and $ \rho\in\lang(\spec) $. Suppose that $ \rho\not\in \lang(\assump) $.
		
		Case 1: If $ \rho\not\in\assumpsafe(\safegroup) $. Then the same argument as in the \buchi case gives a contradiction.
		
		Case 2: If $ \rho\not\in\assumpdep(\colivegroup) $, that is $ \exists (u,v)\in \colivegroup, $ such that $ \rho $ takes $ (u,v) $ infinitely often. By the definition of $ \colivegroup$, $ v\in Z^*\setminus U_0 $, implying $ v\not\in I $, since if $ v\in I $ then it would have been in $ U_0 $. Hence, $ \rho\not\in\lang(\spec) $, giving a contradiction. So $ \rho\in\lang(\assump)$.

		\noindent \item \textit{Complexity analysis.}
		Very similar to that for \buchi objectives and therefore omitted.
	\end{inparaitem}
	
\end{proof}

\section{\aname ssumption for parity games}\label{proof:parity assumption is correct}
For the convenience of the reader, we restate Thm. \ref{thm:parity assumption} here.
\restateparity*

\begin{proof}
We prove sufficiency, implementability and permissiveness below and then analyze the complexity of Alg.~\ref{alg:compute parity assumption}.

\begin{inparaitem}[$\blacktriangleright$]
	\noindent \item Implementability: 
	We note that the assumption is implementable by the implementability of safety, liveness and co-liveness assumptions: if for a conditional live group, the corresponding vertex set is reached infinitely often, and also the sources of live groups are visited infinitely often, $ \po $ can choose the live group edges, since they are controlled by $ \po $. Moreover, there won't be any conflict due to the conditional live groups as there can be no unsafe or co-live edge that is included in a conditional live group by construction.
	
\noindent \item Sufficiency:
 We give a strategy for $ \pz $ depending on the parity of the highest priority $ d $ occurring in the game and show that it is winning under assumption $\assump$ for all vertices in the cooperative winning region $ Z^*=\solveParity(\gamegraph,C) $. The strategy uses finite memory and the winning strategies for $ \po $ in subgames with \buchi (Thm.~\ref{thm:Buechi assumptions}) and \cobuchi (Thm.~ \ref{thm:coBuechi assumptions}) objectives. 
 
 By $ \buchigame(\gamegraph, U) $, we denote the game $ (\gamegraph, \square\lozenge U) $, and by $ \cobuchigame(\gamegraph, U) $, we denote the game $ (\gamegraph, \lozenge\square U) $. We also use the definitions of $d$, $ W_d $ and $ W_{\neg d} $, as in the Algo.~\ref{alg:compute parity assumption}. Consider the following strategy $ \stratz $ of $ \pz $:
 
 \begin{inparaitem}[$\triangleright$]
 \item $ d $ is \textbf{odd}: If the play is in $ V\setminus W_{\neg d} $, then $ \pz $ plays the $ \cobuchigame(\gamegraph, W_{\neg d}) $ winning strategy to eventually end up in $ W_{\neg d} $. If the play is in $ W_{\neg d} \cap Z^* $, $ \pz $ plays the recursive winning strategy for $ (G|_{W_{\neg d}}, \paritygame(C)) $. Otherwise, she plays arbitrarily.
 
  \item $ d $ is \textbf{even}: If the play is in $ W_d $, $ \pz $ switches its strategy among $ \buchigame(\gamegraph, W_d) $, $\buchigame(\gamegraph, W_d \cup W_{d-2})$, $\ldots$, $\buchigame(\gamegraph, W_d \cup W_{d-2} \cup \cdots W_2)$ winning strategies, i.e., for each vertex, she first uses the first strategy in the above sequence, then when that vertex is repeated, she uses the second strategy for the next move, and keeps switching to the next strategies for every move from the same vertex. If the play is in $ V\setminus W_d  \cap Z^* $, then she plays the recursive winning strategy for $ (G|_{W_{\neg d}}, \paritygame(C)) $, where $ C $ is modified again as in line~\ref{algo:reduce game to few color}. Otherwise, she plays arbitrarily.
 \end{inparaitem}

	We prove by induction, on the highest occurring priority $ d $, that the above constructed strategy $ \stratz $ for $ \pz $ ensures satisfying the parity objective on the original game graph if the assumption $ \assump $ is satisfied. For the base case, when $ d=0 $, the constructed strategy is trivially winning, because the only existing color is even. Now let the strategy be winning for $ d-1\geq0 $.
	
	Let $ v_0\in Z^*= \solveParity(\gamegraph,C)$. Let $ \strato $ be an arbitrary strategy of $ \po $ such that $ \lang(\strato)\subseteq \lang(\assump) $, and $ \rho=v_0v_1\ldots $ be an arbitrary $ \stratz\strato $-play. Then $ \rho\in \lang(\assump) $. We need to show that $\rho$ is winning, i.e., $\rho\in\lang(\Phi)$. Note that by the safety assumption and by the construction of $\stratz$, $\rho$ stays in the vertex set $Z^*$. 
	
	Case 1: If $ d $ is odd, then since at vertices in $ V\setminus W_{\neg d} $, $ \pz $ plays to eventually stay in $ W_{\neg d} $, the play can not stay in $ W_{\neg d} $ without violating $ \assumpdep(\colivegroup) $. And if $ \rho $ eventually stays in $ W_{\neg d} $, then by the induction hypothesis, it is winning, since $ W_{\neg d}\cap C_d=\emptyset $.
	
	Case 2: If $ d $ is even, then if the play stays in $ W_d $ eventually, and if the play visits vertices of an odd priority $i$ infinitely often, then $ \strato $ satisfies $\computeLive(\gamegraph,C_{i+1}\cup C_{i+2}\cup \cdots \cup C_{d})$ by the conditional live group assumption. Note that $ \pz $ plays the $ \buchigame(\gamegraph, C_{i+1}\cup C_{i+2}\cup \cdots \cup C_{d})$ winning strategy for infinitely many moves from every vertex occurring in $ \rho $. Since  $ \strato $ satisfies $\computeLive(\gamegraph,C_{i+1}\cup C_{i+2}\cup \cdots \cup C_{d})$, after these moves as well, the play visits $ (C_{i+1}\cup C_{i+2}\cup \cdots \cup C_{d}) $. Hence the play will visit vertices of an even color $>i$ infinitely often, implying that $ \rho$ is winning. Else if $\rho$ stays in $ V\setminus W_{d} $ eventually, then it is winning by induction hypothesis. This gives the sufficiency of the assumptions computed by the algorithm.
	
	\noindent \item  Permissiveness:
	Now for the permissiveness of the assumption,  let $ \rho\in \lang(\spec)$. We prove the claim by contradiction and suppose that $ \rho\not\in\lang(\assump) $. 
	
	Case 1: If $ \rho\not\in\assumpsafe(\safegroup) $. Then some edge $ (v,v')\in \safegroup $ is taken in $ \rho $. Then after reaching $ v' $, $ \rho $ still satisfies the parity objective. Hence, $ v'\in Z^* $, but then $ (v,v')\not\in \safegroup $, which is a contradiction.
	
	Case 2: If $ \rho\not\in\assumpcondlive(\condlivegroup) $. Then for some even $ j $ and odd $i<j$, $\rho$ visits $W_j\cap C_i$ infinitely often but does not satisfy the live transition group assumption $\computeLive(\gamegraph',C_{i+1}\cup C_{i+2}\cdots \cup C_j)) $, where $\gamegraph' = \gamegraph|_{W_j}$. Due to the construction of the set $W_j$, it is easy to see that once $\rho$ visits $W_j$, it can never visit $V\setminus W_j$. Hence, eventually $\rho$ stays in the game $\game'$ and visits $C_i$ infinitely often. Since $\rho \in \lang(\spec)$, it also visits some vertices of some even priority $>i$ infinitely often, and hence, it satisfies $\square\lozenge(C_{i+1}\cup C_{i+3}\cdots \cup C_j)) $ in $ \gamegraph' $. Since $\computeLive(\gamegraph',C_{i+1}\cup C_{i+3}\cdots \cup C_j)) $ is a permissive assumption for $(\gamegraph', \square\lozenge(C_{i+1}\cup C_{i+3}\cdots \cup C_j))) $, the play $\rho$ must satisfy $\computeLive(\gamegraph',C_{i+1}\cup C_{i+3}\cdots \cup C_j)) $, which contradicts the assumption.
	
	Case 3: If $ \rho\not\in\assumpdep(\colivegroup) $. Then for some odd $ i $ an edge $ (u,v)\in \computeCoLive(\gamegraph,W_{\neg i}) $ is taken infinitely often. Then the vertex $ v\in V\setminus W_{\neg i} $ is visited infinitely often. Note that $ \rho $ can not be winning by visiting an even $ j>i $, since otherwise $ v $ would have been in $ \solveBuchi(\gamegraph, C_j) $ as from $ v $ we can infinitely often see $ j $, and hence would have been removed from $ \game $ for the next recursive step. Hence, $ \rho $ visits some even $ j<i $ infinitely often, i.e. $ i $ is not visited infinitely often. Then $ v $ would be in $ W_{\neg d} $, which is a contradiction.

    \noindent \item \textit{Complexity analysis.} We note that the cooperative parity game can be solved in time $ \bigO((n+m)\log d) $, where $ n, m $ and $ d $ are the number of of vertices, edges and priorities respectively: consider the graph where $ pz $ owns all the vertices, find the strongly connected components in time $ \bigO(n+m) $, check which of these components have a cycle with highest priority even by reduction to even-cycle problem \cite{valerie2001Paritywordproblem}. Then \textsc{ComputeSets} takes time $ \bigO(n^2) $ for the even case, but is dominated by $ \bigO(n^3) $ time for the odd case. For every priority, \textsc{ComputeSets} is called once, that is at most $ 2n $ calls in total. Then the total running time of the algorithm is $ \bigO((n+m)\log d+2n.(n^3)) = \bigO(n^4) $.
\end{inparaitem}
\end{proof}

\section{Equivalence of Def.~\ref{def:winundera} and Def.~\ref{def:winundera:alt}}\label{appendix:equivOfDefWinUnder}
We prove the following result stating the equivalence of between Def.~\ref{def:winundera} and Def.~\ref{def:winundera:alt} for the class of assumptions we consider.
\begin{proposition}
Given a parity game $\game$, let $\assump$ be an \aname computed by Thm.~\ref{thm:parity assumption}. Then, a $\pz$ strategy is winning under $\assump$ by Def.~\ref{def:winundera} if and only  if it is winning under $\assump$ by Def.~\ref{def:winundera:alt}.
\end{proposition}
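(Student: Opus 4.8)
The forward direction (Def.~\ref{def:winundera:alt} $\Rightarrow$ Def.~\ref{def:winundera}) is already observed in the paper to hold for arbitrary assumptions, so the real content is the converse: if $\stratz$ is winning under $\assump$ by Def.~\ref{def:winundera}, then every play $\rho \in \lang(\stratz)$ that satisfies $\assump$ also satisfies $\spec$. The plan is to argue the contrapositive: suppose $\rho = v_0 v_1 \cdots \in \lang(\stratz)$ with $\rho \models \assump$ but $\rho \not\models \spec$. I will construct a $\po$ strategy $\strato$ with $\lang(\strato) \subseteq \lang(\assump)$ such that $\rho$ is a $\stratz\strato$-play, which immediately contradicts Def.~\ref{def:winundera}. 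The key structural fact — which is exactly the intuition stated informally after Ex.~\ref{example:def1_def6} — is that the assumption $\assump = \assumpsafe(\safegroup)\wedge\assumpdep(\colivegroup)\wedge\assumpcondlive(\condlivegroup)$ is realized by \emph{local} templates restricting only $\po$'s edges, and is implementable by $\po$ from every vertex (Thm.~\ref{thm:parity assumption}).

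The construction of $\strato$ proceeds as follows. First, use the implementability proof of Thm.~\ref{thm:parity assumption} to fix a strategy $\widehat{\pi}^1$ of $\po$ with $\lang(\widehat{\pi}^1) \subseteq \lang(\assump)$; concretely this strategy never takes an unsafe edge, takes each co-live edge only finitely often, and, whenever a conditional-live-group condition is triggered infinitely often, takes a matching live-group edge infinitely often. Now define $\strato$ to ``follow'' $\rho$ as long as the actual play agrees with $\rho$, and to fall back on $\widehat{\pi}^1$ once the play deviates: for a prefix $\playprefix v$ with $v \in V^1$, if $\playprefix v$ is a prefix of $\rho$, say $\playprefix v = v_0\cdots v_k$ with $v_k \in V^1$, set $\strato(\playprefix v) := v_{k+1}$; otherwise set $\strato(\playprefix v) := \widehat{\pi}^1(\playprefix v)$. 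Since $\rho$ is compliant with $\stratz$, the unique $\stratz\strato$-play starting at $v_0$ is exactly $\rho$. The remaining task is to check $\lang(\strato) \subseteq \lang(\assump)$: any $\strato$-play either equals $\rho$ (which satisfies $\assump$ by hypothesis) or agrees with $\rho$ on a finite prefix and then coincides with a $\widehat{\pi}^1$-play from some vertex. For the latter, I must verify that the finite ``$\rho$-prefix'' does not cause a violation of any template — and this is where I use that all three templates are prefix-independent in the relevant sense: $\assumpdep(\colivegroup)$ and $\assumpcondlive(\condlivegroup)$ depend only on the infinite suffix, so gluing a finite prefix is harmless, while $\assumpsafe(\safegroup)$ could in principle be violated \emph{on} the prefix, so I need the additional observation that $\strato$ only copies edges of $\rho$, and $\rho \models \assumpsafe(\safegroup)$, hence no unsafe $\po$-edge is ever copied. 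This yields $\lang(\strato) \subseteq \lang(\assump)$, completing the contradiction.

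The main obstacle I anticipate is the treatment of the conditional live group template $\assumpcondlive(\condlivegroup)$ along the glued play: when the play follows $\rho$ for a finite prefix and then switches to $\widehat{\pi}^1$, the set of vertices visited infinitely often is determined entirely by the $\widehat{\pi}^1$-phase, so the conditional-live obligations are met by $\widehat{\pi}^1$; but one must make sure that the hand-off happens at a $\po$-vertex and that $\widehat{\pi}^1$, being defined on all histories, can genuinely be invoked mid-play regardless of the history so far (it can, since $\widehat{\pi}^1$ from the implementability proof is in fact a finite-memory or even positional strategy that does not depend on how a vertex was reached). A clean way to sidestep subtleties is to choose $\widehat{\pi}^1$ positional on $Z^*$ (which the implementability arguments in Thms.~\ref{thm:safety assumption}, \ref{thm:coBuechi assumptions}, \ref{thm:parity assumption} support) and define $\strato$ to use $\rho$'s edges while on the prefix and $\widehat{\pi}^1$ thereafter; then every $\strato$-play is $\rho$ or is of the form $w \cdot \sigma$ where $w$ is a finite prefix of $\rho$ and $\sigma$ is a $\widehat{\pi}^1$-play, and the template analysis above applies verbatim. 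I would also note explicitly, as the paper hints, that this argument crucially breaks for assumptions like $\assump_1 = \neg e_0\,\mathcal{U}\,X e_1$ of Ex.~\ref{example:def1_def6} precisely because such an assumption is \emph{not} implementable by $\po$ in the required local sense — there is no $\po$-strategy realizing it that produces the offending play $(abc)^\omega$ — so no analogue of $\widehat{\pi}^1$ exists, which is why the two definitions diverge there but coincide for our class.
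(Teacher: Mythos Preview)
Your proposal is correct and follows essentially the same approach as the paper's own proof: both exploit implementability to fix a $\po$ strategy $\widehat{\pi}^1$ (the paper calls it $\strato_*$) with $\lang(\widehat{\pi}^1)\subseteq\lang(\assump)$, define $\strato$ to follow $\rho$ on prefixes and fall back to $\widehat{\pi}^1$ off $\rho$, and then argue that every $\strato$-play satisfies $\assump$ because the co-live and conditional-live templates are tail properties while the safety template is already respected on the $\rho$-prefix. Your explicit remark that $\widehat{\pi}^1$ should be chosen with positional/finite-memory structure so that invoking it after an arbitrary prefix still yields the required behavior is in fact a welcome clarification of a point the paper leaves implicit.
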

\begin{proof}
Suppose a $\pz$ strategy $\stratz$ is winning under assumption $\assump$ by Def.~\ref{def:winundera:alt}. Then every play $\play\in \lang(\stratz)$ either fails to satisfy the assumption $\assump$ or satisfies the specification $\spec$. Hence, $\lang(\stratz) \subseteq \lang(\spec) \cup \lang(\neg\assump)$.
Now, let $\strato$ be a $\po$ strategy s.t. $\lang(\strato) \subseteq \lang(\assump)$. Then,
\begin{align*}
\lang(\stratz\strato) &= \lang(\stratz) \cap \lang(\strato)\\
&\subseteq \big( \lang(\spec) \cup \lang(\neg\assump)\big) \cap \lang(\assump)\\
&= \big(\lang(\spec)\cap\lang(\assump)\big) \cup \big(\lang(\neg\assump)\cap \lang(\assump)\big)\\
&=\lang(\spec)\cap\lang(\assump)\\
&\subseteq \lang(\spec).
\end{align*}
Hence, $\stratz$ is also winning under assumption $\assump$ by Def.~\ref{def:winundera}.

Now, for the other direction, suppose $\stratz$ is winning under assumption $\assump$ by Def.~\ref{def:winundera}. Let $\play\in\lang(\stratz)$. If $\play\not\in\lang(\assump)$, then we are done. Suppose $\play\in\lang(\assump)$, then we have to show that $\play\in\lang(\spec)$. We claim that there exists a $\po$ strategy $\strato$ such that $\strato\subseteq \lang(\assump)$ and $\play$ is compliant with it. Then, by Def.~\ref{def:winundera}, $\lang(\stratz\strato)\subseteq\lang(\spec)$. As $\play$ is compliant with both $\stratz$ and $\strato$, $\play\in\lang(\spec)$, and hence, we are done.

Now we only need to prove the claim. As $\assump$ is an implementable assumption, there exists a $\po$ strategy $\strato_*$ such that $\lang(\strato_*)\subseteq\lang(\assump)$ and by definition, the strategy is defined on all vertices. Let $\play = v_0v_1\cdots$. Now, let $\strato$ be another $\po$ strategy such that for every play prefix $\playprefix\in V^*V^1$, it is defined as follows:
\[\strato(\playprefix)=
\begin{cases}
v_k &\text{if $\playprefix = v_0v_1\cdots v_{k-1}$}\\
\strato_*(\playprefix) &\text{otherwise.}
\end{cases}\]
Then, clearly, $\play$ is compliant with $\strato$. Now, let $\play'\in \lang(\strato)$, then it is enough to show that $\play'\in\lang(\assump)$. If $\play'=\play \in \lang(\assump)$, then we are done. Suppose not and let $\playprefix'$ be the maximal prefix of $\play'$ that is also a prefix of $\play$ (which can also be empty). By construction, the moves taken after the prefix $\playprefix'$ in $\play'$ are compliant with $\strato_*$. As the conditional live group templates and co-liveness templates are tail properties and are independent of prefixes, $\play'$ satisfies those templates of assumption $\assump$. Furthermore, as $\playprefix'$ is a prefix of $\play\in\lang(\assump)$ and the moves taken after $\playprefix'$ in $\play'$ are compliant with $\strato_*$, the play $\play'$ can not contain any unsafe edges marked by assumption $\assump$. Therefore, $\play'\in\lang(\assump)$.

\end{proof}

	
\section{Complete table for experimental results}\label{appendix:extended_table}
\scriptsize
\centering
	\begin{longtable}{
			|>{\centering\arraybackslash}m{3.5cm}|
			>{\centering\arraybackslash}m{1.5cm}|
			>{\centering\arraybackslash}m{1.5cm}|
			>{\centering\arraybackslash}m{1.3cm}|
			>{\centering\arraybackslash}m{2cm}|
			>{\centering\arraybackslash}m{2cm}|}
		\hline
		Name & Number of Vertices	&	Number of Edges	& Number of Priorities &	Computation time of $\tool$ (in~seconds) & Computation time of $\krishtool$ (in~seconds) 	\\ 
		\hline
		\name{abcg arbiter} & \sepcomma{417}	&	\sepcomma{638} & 3 &  \secformat{0.016}	&	\secformat{9.79}\\ 
		\name{ActionConverter} & \sepcomma{134}	&	\sepcomma{200} & 3 &  \secformat{0.003}	&	\secformat{0.005}\\ 
		\name{amba decomposed arbiter 2} & \sepcomma{141}	&	\sepcomma{212} & 4 &  \secformat{0.005}	&	\secformat{0.21}\hspace*{0.2cm}*\\ 
		\name{amba decomposed arbiter 3} & \sepcomma{741}	&	\sepcomma{1176} & 4 &  \secformat{0.064}	&	\secformat{3.798}\hspace*{0.2cm}*\\ 
		\name{amba decomposed arbiter 4} & \sepcomma{2468}	&	\sepcomma{3950} & 4 &  \secformat{0.703}	&	\secformat{35.911}\hspace*{0.2cm}*\\ 
		\name{amba decomposed arbiter 5} & \sepcomma{7273}	&	\sepcomma{11576} & 4 &  \secformat{7.095}	&	\secformat{297.975}\hspace*{0.2cm}*\\ 
		\name{amba decomposed arbiter} & \sepcomma{36824}	&	\sepcomma{67018} & 4 &  \secformat{203.104}	&		\timeout\\ 
		\name{amba decomposed decode} & \sepcomma{19}	&	\sepcomma{26} & 3 &  \secformat{0.002}	&	\secformat{0.004}\\ 
		\name{amba decomposed encode 10} & \sepcomma{78313}	&	\sepcomma{117676} & 3 &  \secformat{11.882}	&		\timeout\\ 
		\name{amba decomposed encode 11} & \sepcomma{156184}	&	\sepcomma{234506} & 3 &  \secformat{45.455}	&		\timeout\\ 
		\name{amba decomposed encode 12} & \sepcomma{311879}	&	\sepcomma{468072} & 3 &  \secformat{181.623}	&		\timeout\\ 
		\name{amba decomposed encode 13} & \sepcomma{623222}	&	\sepcomma{935110} & 3 &  \secformat{732.713}	&		\timeout\\ 
		\name{amba decomposed encode 14} & \sepcomma{1245861}	&	\sepcomma{1869092} & 3 &  \secformat{2960.11}	&		\timeout\\ 
		\name{amba decomposed encode 2} & \sepcomma{95}	&	\sepcomma{140} & 3 &  \secformat{0.003}	&	\secformat{0.562}\\ 
		\name{amba decomposed encode 3} & \sepcomma{264}	&	\sepcomma{402} & 3 &  \secformat{0.005}	&	\secformat{14.5}\\ 
		\name{amba decomposed encode 4} & \sepcomma{499}	&	\sepcomma{760} & 3 &  \secformat{0.008}	&	\secformat{97.901}\\ 
		\name{amba decomposed encode 5} & \sepcomma{1534}	&	\sepcomma{2342} & 3 &  \secformat{0.025}	&	\secformat{2616.04}\\ 
		\name{amba decomposed encode 6} & \sepcomma{2965}	&	\sepcomma{4500} & 3 &  \secformat{0.06}	&		\timeout\\ 
		\name{amba decomposed encode 7} & \sepcomma{5804}	&	\sepcomma{8770} & 3 &  \secformat{0.168}	&		\timeout\\ 
		\name{amba decomposed encode 8} & \sepcomma{11459}	&	\sepcomma{17264} & 3 &  \secformat{0.534}	&		\timeout\\ 
		\name{amba decomposed encode 9} & \sepcomma{39354}	&	\sepcomma{59214} & 3 &  \secformat{3.296}	&		\timeout\\ 
		\name{amba decomposed encode} & \sepcomma{292}	&	\sepcomma{476} & 3 &  \secformat{0.006}	&	\secformat{4.23}\\ 
		\name{amba decomposed lock 2} & \sepcomma{398}	&	\sepcomma{592} & 3 &  \secformat{0.04}	&	\secformat{50.282}\\ 
		\name{amba decomposed lock 3} & \sepcomma{1558}	&	\sepcomma{2336} & 3 &  \secformat{0.074}	&	\secformat{2999.65}\\ 
		\name{amba decomposed lock 4} & \sepcomma{6182}	&	\sepcomma{9280} & 3 &  \secformat{0.608}	&		\timeout\\ 
		\name{amba decomposed lock 5} & \sepcomma{24646}	&	\sepcomma{36992} & 3 &  \secformat{9.563}	&		\timeout\\ 
		\name{amba decomposed lock 6} & \sepcomma{98438}	&	\sepcomma{147712} & 3 &  \secformat{158.854}	&		\timeout\\ 
		\name{amba decomposed lock 7} & \sepcomma{393478}	&	\sepcomma{590336} & 3 &  \secformat{2801.93}	&		\timeout\\ 
		\name{amba decomposed lock} & \sepcomma{180}	&	\sepcomma{288} & 3 &  \secformat{0.009}	&	\secformat{1.288}\\ 
		\name{amba decomposed shift} & \sepcomma{44}	&	\sepcomma{64} & 3 &  \secformat{0.006}	&	\secformat{0.021}\\ 
		\name{amba decomposed tburst4} & \sepcomma{1061}	&	\sepcomma{1618} & 4 &  \secformat{0.112}	&	\secformat{685.687}\\ 
		\name{amba decomposed tincr} & \sepcomma{2082}	&	\sepcomma{3280} & 3 &  \secformat{0.227}	&		\timeout\\ 
		\name{amba decomposed tsingle} & \sepcomma{950}	&	\sepcomma{1450} & 4 &  \secformat{0.094}	&	\secformat{464.176}\\ 
		\name{arbiter} & \sepcomma{39}	&	\sepcomma{54} & 3 &  \secformat{0.008}	&	\secformat{0.024}\\ 
		\name{arbiter with buffer} & \sepcomma{84}	&	\sepcomma{132} & 3 &  \secformat{0.003}	&	\secformat{0.032}\\ 
		\name{arbiter with cancel} & \sepcomma{139}	&	\sepcomma{212} & 3 &  \secformat{0.004}	&	\secformat{1.592}\\ 
		\name{Button} & \sepcomma{12}	&	\sepcomma{16} & 3 &  \secformat{0.007}	&	\secformat{0.007}\\ 
		\name{detector} & \sepcomma{84}	&	\sepcomma{126} & 4 &  \secformat{0.008}	&	\secformat{0.457}\\ 
		\name{detector unreal} & \sepcomma{141}	&	\sepcomma{210} & 4 &  \secformat{0.005}	&	\secformat{0.301}\hspace*{0.2cm}*\\ 
		\name{EnemeyModule} & \sepcomma{20}	&	\sepcomma{28} & 3 &  \secformat{0.005}	&	\secformat{0.005}\\ 
		\name{EscalatorBidirectional} & \sepcomma{1622}	&	\sepcomma{2464} & 3 &  \secformat{0.18}	&	\secformat{4.123}\\ 
		\name{EscalatorCountingInit} & \sepcomma{99}	&	\sepcomma{148} & 3 &  \secformat{0.006}	&	\secformat{0.045}\\ 
		\name{EscalatorCounting} & \sepcomma{79}	&	\sepcomma{118} & 3 &  \secformat{0.003}	&	\secformat{0.012}\\ 
		\name{EscalatorNonCounting} & \sepcomma{21}	&	\sepcomma{30} & 3 &  \secformat{0.002}	&	\secformat{0.003}\\ 
		\name{EscalatorNonReactive} & \sepcomma{7}	&	\sepcomma{8} & 3 &  \secformat{0.002}	&	\secformat{0.002}\\ 
		\name{EscalatorSmart} & \sepcomma{1783}	&	\sepcomma{3008} & 5 &  \secformat{0.146}	&		\timeout\\ 
		\name{full arbiter 2} & \sepcomma{204}	&	\sepcomma{324} & 3 &  \secformat{0.007}	&	\secformat{4.939}\\ 
		\name{full arbiter 3} & \sepcomma{1403}	&	\sepcomma{2396} & 3 &  \secformat{0.082}	&	\secformat{538.772}\\ 
		\name{full arbiter 4} & \sepcomma{7444}	&	\sepcomma{12764} & 3 &  \secformat{3.171}	&		\timeout\\ 
		\name{full arbiter 5} & \sepcomma{44019}	&	\sepcomma{76920} & 3 &  \secformat{109.252}	&		\timeout\\ 
		\name{full arbiter} & \sepcomma{774}	&	\sepcomma{1218} & 3 &  \secformat{0.025}	&	\secformat{704.267}\\ 
		\name{full arbiter unreal1} & \sepcomma{774}	&	\sepcomma{1218} & 3 &  \secformat{0.041}	&	\secformat{697.743}\\ 
		\name{full arbiter unreal2} & \sepcomma{774}	&	\sepcomma{1218} & 3 &  \secformat{0.027}	&	\secformat{698.161}\\ 
		\name{Gamemodule} & \sepcomma{143}	&	\sepcomma{214} & 3 &  \secformat{0.008}	&	\secformat{0.032}\\ 
		\name{Increment} & \sepcomma{12}	&	\sepcomma{16} & 3 &  \secformat{0.002}	&	\secformat{0.002}\\ 
		\name{KitchenTimerV0} & \sepcomma{20}	&	\sepcomma{28} & 3 &  \secformat{0.002}	&	\secformat{0.002}\\ 
		\name{KitchenTimerV1} & \sepcomma{80}	&	\sepcomma{124} & 3 &  \secformat{0.003}	&	\secformat{0.03}\\ 
		\name{KitchenTimerV2} & \sepcomma{731}	&	\sepcomma{1138} & 3 &  \secformat{0.038}	&	\secformat{3.302}\\ 
		\name{KitchenTimerV3} & \sepcomma{1583}	&	\sepcomma{2482} & 3 &  \secformat{0.165}	&	\secformat{14.573}\\ 
		\name{KitchenTimerV4} & \sepcomma{2272}	&	\sepcomma{3554} & 3 &  \secformat{0.408}	&	\secformat{34.324}\\ 
		\name{KitchenTimerV5} & \sepcomma{4111}	&	\sepcomma{6584} & 3 &  \secformat{1.422}	&	\secformat{134.261}\\ 
		\name{KitchenTimerV6} & \sepcomma{4099}	&	\sepcomma{6560} & 3 &  \secformat{1.438}	&	\secformat{140.525}\\ 
		\name{lilydemo01} & \sepcomma{104}	&	\sepcomma{148} & 3 &  \secformat{0.003}	&	\secformat{0.045}\\ 
		\name{lilydemo02} & \sepcomma{104}	&	\sepcomma{148} & 3 &  \secformat{0.003}	&	\secformat{0.044}\\ 
		\name{lilydemo03} & \sepcomma{211}	&	\sepcomma{312} & 3 &  \secformat{0.005}	&	\secformat{24.546}\\ 
		\name{lilydemo04} & \sepcomma{304}	&	\sepcomma{448} & 3 &  \secformat{0.011}	&	\secformat{77.08}\\ 
		\name{lilydemo05} & \sepcomma{293}	&	\sepcomma{436} & 3 &  \secformat{0.011}	&	\secformat{59.884}\\ 
		\name{lilydemo06} & \sepcomma{369}	&	\sepcomma{548} & 3 &  \secformat{0.026}	&	\secformat{121.068}\\ 
		\name{lilydemo07} & \sepcomma{78}	&	\sepcomma{108} & 3 &  \secformat{0.004}	&	\secformat{1.26}\\ 
		\name{lilydemo08} & \sepcomma{26}	&	\sepcomma{36} & 4 &  \secformat{0.005}	&	\secformat{0.01}\\ 
		\name{lilydemo09} & \sepcomma{47}	&	\sepcomma{66} & 4 &  \secformat{0.006}	&	\secformat{0.023}\hspace*{0.2cm}*\\ 
		\name{lilydemo10} & \sepcomma{46}	&	\sepcomma{68} & 4 &  \secformat{0.002}	&	\secformat{0.11}\\ 
		\name{lilydemo11} & \sepcomma{65}	&	\sepcomma{100} & 3 &  \secformat{0.007}	&	\secformat{0.108}\\ 
		\name{lilydemo12} & \sepcomma{110}	&	\sepcomma{172} & 3 &  \secformat{0.003}	&	\secformat{0.23}\\ 
		\name{lilydemo13} & \sepcomma{11}	&	\sepcomma{14} & 3 &  \secformat{0.007}	&	\secformat{0.011}\\ 
		\name{lilydemo14} & \sepcomma{399}	&	\sepcomma{660} & 6 &  \secformat{0.03}	&	\secformat{166.612}\\ 
		\name{lilydemo15} & \sepcomma{133}	&	\sepcomma{206} & 3 &  \secformat{0.008}	&	\secformat{0.48}\\ 
		\name{lilydemo17} & \sepcomma{3102}	&	\sepcomma{5334} & 7 &  \secformat{0.22}	&		\timeout\\ 
		\name{lilydemo18} & \sepcomma{449}	&	\sepcomma{728} & 9 &  \secformat{0.017}	&	\secformat{1566.9}\\ 
		\name{lilydemo19} & \sepcomma{108}	&	\sepcomma{162} & 4 &  \secformat{0.009}	&	\secformat{0.167}\hspace*{0.2cm}*\\ 
		\name{lilydemo20} & \sepcomma{1989}	&	\sepcomma{3252} & 4 &  \secformat{0.4}	&		\timeout\\ 
		\name{lilydemo21} & \sepcomma{4087}	&	\sepcomma{6824} & 3 &  \secformat{0.428}	&		\timeout\\ 
		\name{lilydemo22} & \sepcomma{373}	&	\sepcomma{556} & 3 &  \secformat{0.02}	&	\secformat{109.443}\\ 
		\name{lilydemo23} & \sepcomma{38}	&	\sepcomma{48} & 3 &  \secformat{0.008}	&	\secformat{0.072}\\ 
		\name{lilydemo24} & \sepcomma{263}	&	\sepcomma{406} & 4 &  \secformat{0.008}	&	\secformat{54.354}\\ 
		\name{load balancer} & \sepcomma{255}	&	\sepcomma{390} & 4 &  \secformat{0.008}	&	\secformat{105.297}\\ 
		\name{load balancer unreal1} & \sepcomma{328}	&	\sepcomma{506} & 4 &  \secformat{0.01}	&	\secformat{202.229}\\ 
		\name{load balancer unreal2} & \sepcomma{326}	&	\sepcomma{502} & 4 &  \secformat{0.024}	&	\secformat{200.707}\\ 
		\name{loadcomp2} & \sepcomma{387}	&	\sepcomma{614} & 4 &  \secformat{0.024}	&	\secformat{2.974}\\ 
		\name{loadcomp3} & \sepcomma{1556}	&	\sepcomma{2734} & 4 &  \secformat{0.262}	&	\secformat{62.287}\\ 
		\name{loadcomp4} & \sepcomma{3943}	&	\sepcomma{6968} & 4 &  \secformat{1.94}	&	\secformat{979.018}\\ 
		\name{loadcomp5} & \sepcomma{10657}	&	\sepcomma{19136} & 4 &  \secformat{15.525}	&		\timeout\\ 
		\name{loadfull2} & \sepcomma{331}	&	\sepcomma{552} & 4 &  \secformat{0.043}	&	\secformat{1.187}\\ 
		\name{loadfull3} & \sepcomma{1159}	&	\sepcomma{2030} & 4 &  \secformat{0.143}	&	\secformat{28.051}\\ 
		\name{loadfull4} & \sepcomma{3399}	&	\sepcomma{6084} & 4 &  \secformat{1.421}	&	\secformat{496.464}\\ 
		\name{loadfull5} & \sepcomma{6483}	&	\sepcomma{11184} & 4 &  \secformat{5.595}	&		\timeout\\ 
		\name{ltl2dba01} & \sepcomma{101}	&	\sepcomma{152} & 4 &  \secformat{0.011}	&	\secformat{9.012}\\ 
		\name{ltl2dba02} & \sepcomma{489}	&	\sepcomma{758} & 4 &  \secformat{0.011}	&	\secformat{74.949}\\ 
		\name{ltl2dba03} & \sepcomma{135}	&	\sepcomma{200} & 4 &  \secformat{0.006}	&	\secformat{65.785}\\ 
		\name{ltl2dba04} & \sepcomma{165}	&	\sepcomma{260} & 4 &  \secformat{0.007}	&	\secformat{0.28}\\ 
		\name{ltl2dba05} & \sepcomma{361}	&	\sepcomma{560} & 4 &  \secformat{0.01}	&	\secformat{7.105}\hspace*{0.2cm}*\\ 
		\name{ltl2dba06} & \sepcomma{329}	&	\sepcomma{496} & 4 &  \secformat{0.009}	&	\secformat{703.712}\\ 
		\name{ltl2dba07} & \sepcomma{4368}	&	\sepcomma{6656} & 4 &  \secformat{0.399}	&	\secformat{2085.74}\hspace*{0.2cm}*\\ 
		\name{ltl2dba08} & \sepcomma{41071}	&	\sepcomma{67522} & 4 &  \secformat{5.515}	&		\timeout\\ 
		\name{ltl2dba09} & \sepcomma{75}	&	\sepcomma{120} & 4 &  \secformat{0.004}	&	\secformat{0.023}\\ 
		\name{ltl2dba10} & \sepcomma{77}	&	\sepcomma{118} & 4 &  \secformat{0.003}	&	\secformat{0.026}\\ 
		\name{ltl2dba11} & \sepcomma{21}	&	\sepcomma{30} & 4 &  \secformat{0.002}	&	\secformat{0.003}\\ 
		\name{ltl2dba12} & \sepcomma{81}	&	\sepcomma{120} & 4 &  \secformat{0.003}	&	\secformat{0.335}\\ 
		\name{ltl2dba13} & \sepcomma{151}	&	\sepcomma{222} & 4 &  \secformat{0.009}	&	\secformat{0.198}\hspace*{0.2cm}*\\ 
		\name{ltl2dba14} & \sepcomma{97}	&	\sepcomma{144} & 4 &  \secformat{0.004}	&	\secformat{0.065}\hspace*{0.2cm}*\\ 
		\name{ltl2dba15} & \sepcomma{39}	&	\sepcomma{60} & 4 &  \secformat{0.002}	&	\secformat{0.007}\\ 
		\name{ltl2dba16} & \sepcomma{75}	&	\sepcomma{114} & 4 &  \secformat{0.003}	&	\secformat{0.025}\\ 
		\name{ltl2dba17} & \sepcomma{663}	&	\sepcomma{1020} & 4 &  \secformat{0.009}	&	\secformat{213.754}\\ 
		\name{ltl2dba18} & \sepcomma{265}	&	\sepcomma{420} & 4 &  \secformat{0.009}	&	\secformat{11.288}\\ 
		\name{ltl2dba19} & \sepcomma{106}	&	\sepcomma{152} & 4 &  \secformat{0.004}	&	\secformat{0.652}\hspace*{0.2cm}*\\ 
		\name{ltl2dba20} & \sepcomma{1078}	&	\sepcomma{1760} & 4 &  \secformat{0.018}	&	\secformat{587.085}\\ 
		\name{ltl2dba21} & \sepcomma{2197}	&	\sepcomma{3380} & 4 &  \secformat{0.043}	&		\timeout\\ 
		\name{ltl2dba22} & \sepcomma{21}	&	\sepcomma{30} & 4 &  \secformat{0.008}	&	\secformat{0.025}\\ 
		\name{ltl2dba23} & \sepcomma{51}	&	\sepcomma{78} & 4 &  \secformat{0.008}	&	\secformat{0.073}\\ 
		\name{ltl2dba24} & \sepcomma{19}	&	\sepcomma{26} & 4 &  \secformat{0.002}	&	\secformat{0.011}\\ 
		\name{ltl2dba25} & \sepcomma{104}	&	\sepcomma{158} & 4 &  \secformat{0.004}	&	\secformat{0.074}\\ 
		\name{ltl2dba26} & \sepcomma{148}	&	\sepcomma{236} & 4 &  \secformat{0.003}	&	\secformat{0.19}\\ 
		\name{ltl2dba27} & \sepcomma{21}	&	\sepcomma{30} & 4 &  \secformat{0.005}	&	\secformat{0.059}\\ 
		\name{ltl2dba alpha} & \sepcomma{63}	&	\sepcomma{96} & 4 &  \secformat{0.003}	&	\secformat{0.128}\\ 
		\name{ltl2dba beta} & \sepcomma{907}	&	\sepcomma{1454} & 4 &  \secformat{0.022}	&	\secformat{341.418}\\ 
		\name{ltl2dba C2} & \sepcomma{19}	&	\sepcomma{26} & 4 &  \secformat{0.003}	&	\secformat{0.005}\\ 
		\name{ltl2dba E} & \sepcomma{19}	&	\sepcomma{26} & 4 &  \secformat{0.003}	&	\secformat{0.013}\\ 
		\name{ltl2dba Q} & \sepcomma{46}	&	\sepcomma{68} & 4 &  \secformat{0.003}	&	\secformat{0.077}\\ 
		\name{ltl2dba R} & \sepcomma{19}	&	\sepcomma{26} & 4 &  \secformat{0.002}	&	\secformat{0.003}\\ 
		\name{ltl2dba theta} & \sepcomma{275}	&	\sepcomma{440} & 5 &  \secformat{0.011}	&	\secformat{112.133}\\ 
		\name{ltl2dba U1} & \sepcomma{61}	&	\sepcomma{92} & 4 &  \secformat{0.009}	&	\secformat{0.142}\\ 
		\name{ltl2dpa01} & \sepcomma{102}	&	\sepcomma{164} & 5 &  \secformat{0.004}	&	\secformat{0.52}\\ 
		\name{ltl2dpa02} & \sepcomma{87}	&	\sepcomma{138} & 4 &  \secformat{0.007}	&	\secformat{0.073}\\ 
		\name{ltl2dpa03} & \sepcomma{8135}	&	\sepcomma{14280} & 6 &  \secformat{2.481}	&		\timeout\\ 
		\name{ltl2dpa04} & \sepcomma{40}	&	\sepcomma{56} & 4 &  \secformat{0.004}	&	\secformat{0.017}\\ 
		\name{ltl2dpa05} & \sepcomma{40}	&	\sepcomma{56} & 4 &  \secformat{0.003}	&	\secformat{0.008}\\ 
		\name{ltl2dpa06} & \sepcomma{70}	&	\sepcomma{100} & 4 &  \secformat{0.003}	&	\secformat{0.044}\hspace*{0.2cm}*\\ 
		\name{ltl2dpa07} & \sepcomma{137}	&	\sepcomma{204} & 4 &  \secformat{0.004}	&	\secformat{0.555}\\ 
		\name{ltl2dpa08} & \sepcomma{71}	&	\sepcomma{112} & 4 &  \secformat{0.003}	&	\secformat{0.089}\\ 
		\name{ltl2dpa09} & \sepcomma{137}	&	\sepcomma{204} & 4 &  \secformat{0.004}	&	\secformat{0.556}\\ 
		\name{ltl2dpa10} & \sepcomma{996}	&	\sepcomma{1704} & 6 &  \secformat{0.049}	&	\secformat{500.069}\\ 
		\name{ltl2dpa11} & \sepcomma{174}	&	\sepcomma{268} & 4 &  \secformat{0.015}	&	\secformat{0.612}\\ 
		\name{ltl2dpa12} & \sepcomma{2544}	&	\sepcomma{4422} & 7 &  \secformat{0.714}	&		\timeout\\ 
		\name{ltl2dpa13} & \sepcomma{661}	&	\sepcomma{1118} & 5 &  \secformat{0.066}	&	\secformat{317.768}\\ 
		\name{ltl2dpa14} & \sepcomma{114}	&	\sepcomma{184} & 5 &  \secformat{0.006}	&	\secformat{0.711}\\ 
		\name{ltl2dpa15} & \sepcomma{121}	&	\sepcomma{194} & 4 &  \secformat{0.004}	&	\secformat{0.696}\\ 
		\name{ltl2dpa16} & \sepcomma{118}	&	\sepcomma{188} & 4 &  \secformat{0.009}	&	\secformat{0.438}\\ 
		\name{ltl2dpa17} & \sepcomma{82}	&	\sepcomma{128} & 4 &  \secformat{0.008}	&	\secformat{0.115}\hspace*{0.2cm}*\\ 
		\name{ltl2dpa18} & \sepcomma{82}	&	\sepcomma{128} & 4 &  \secformat{0.008}	&	\secformat{0.111}\hspace*{0.2cm}*\\ 
		\name{ltl2dpa19} & \sepcomma{841}	&	\sepcomma{1392} & 5 &  \secformat{0.041}	&	\secformat{9.497}\hspace*{0.2cm}*\\ 
		\name{ltl2dpa20} & \sepcomma{340}	&	\sepcomma{560} & 4 &  \secformat{0.008}	&	\secformat{34.289}\\ 
		\name{ltl2dpa21} & \sepcomma{996}	&	\sepcomma{1704} & 6 &  \secformat{0.05}	&	\secformat{500.785}\\ 
		\name{ltl2dpa22} & \sepcomma{9351}	&	\sepcomma{16128} & 6 &  \secformat{4.603}	&		\timeout\\ 
		\name{ltl2dpa23} & \sepcomma{443}	&	\sepcomma{676} & 4 &  \secformat{0.019}	&	\secformat{22.212}\\ 
		\name{ltl2dpa24} & \sepcomma{443}	&	\sepcomma{676} & 4 &  \secformat{0.017}	&	\secformat{22.209}\\ 
		\name{MusicAppFeedback} & \sepcomma{1584}	&	\sepcomma{2590} & 3 &  \secformat{0.166}	&	\secformat{1159.62}\\ 
		\name{MusicAppMotivating} & \sepcomma{2501}	&	\sepcomma{4118} & 3 &  \secformat{0.261}	&	\secformat{3064.71}\\ 
		\name{MusicAppSimple} & \sepcomma{344}	&	\sepcomma{562} & 3 &  \secformat{0.018}	&	\secformat{11.77}\\ 
		\name{OneCounterGuiA0} & \sepcomma{12353}	&	\sepcomma{19052} & 3 &  \secformat{13.995}	&	\secformat{1720.65}\\ 
		\name{OneCounterGuiA1} & \sepcomma{21272}	&	\sepcomma{32778} & 3 &  \secformat{38.838}	&		\timeout\\ 
		\name{OneCounterGuiA2} & \sepcomma{35020}	&	\sepcomma{54106} & 3 &  \secformat{104.356}	&		\timeout\\ 
		\name{OneCounterGuiA3} & \sepcomma{62852}	&	\sepcomma{97434} & 3 &  \secformat{411.197}	&		\timeout\\ 
		\name{OneCounterGuiA4} & \sepcomma{63275}	&	\sepcomma{98280} & 3 &  \secformat{412.184}	&		\timeout\\ 
		\name{OneCounterGuiA5} & \sepcomma{63698}	&	\sepcomma{99126} & 3 &  \secformat{405.678}	&		\timeout\\ 
		\name{OneCounterGuiA6} & \sepcomma{69756}	&	\sepcomma{109186} & 3 &  \secformat{500.2}	&		\timeout\\ 
		\name{OneCounterGuiA7} & \sepcomma{71106}	&	\sepcomma{111886} & 3 &  \secformat{508.946}	&		\timeout\\ 
		\name{OneCounterGuiA8} & \sepcomma{84017}	&	\sepcomma{133596} & 3 &  \secformat{747.632}	&		\timeout\\ 
		\name{OneCounterGuiA9} & \sepcomma{87077}	&	\sepcomma{139716} & 3 &  \secformat{786.04}	&		\timeout\\ 
		\name{OneCounterGui} & \sepcomma{12353}	&	\sepcomma{19052} & 3 &  \secformat{13.862}	&	\secformat{1725.76}\\ 
		\name{OneCounterInRangeA0} & \sepcomma{429}	&	\sepcomma{688} & 3 &  \secformat{0.017}	&	\secformat{7.494}\\ 
		\name{OneCounterInRangeA1} & \sepcomma{518}	&	\sepcomma{832} & 3 &  \secformat{0.02}	&	\secformat{15.474}\\ 
		\name{OneCounterInRangeA2} & \sepcomma{534}	&	\sepcomma{864} & 3 &  \secformat{0.018}	&	\secformat{24.461}\\ 
		\name{OneCounterInRangeA3} & \sepcomma{447}	&	\sepcomma{724} & 3 &  \secformat{0.018}	&	\secformat{8.735}\\ 
		\name{OneCounterInRange} & \sepcomma{429}	&	\sepcomma{688} & 3 &  \secformat{0.033}	&	\secformat{7.41}\\ 
		\name{OneCounter} & \sepcomma{87077}	&	\sepcomma{139716} & 3 &  \secformat{805.637}	&		\timeout\\ 
		\name{prioritized arbiter} & \sepcomma{125}	&	\sepcomma{196} & 4 &  \secformat{0.009}	&	\secformat{0.38}\hspace*{0.2cm}*\\ 
		\name{prioritized arbiter unreal1} & \sepcomma{851}	&	\sepcomma{1412} & 4 &  \secformat{0.075}	&	\secformat{10.748}\hspace*{0.2cm}*\\ 
		\name{prioritized arbiter unreal2} & \sepcomma{851}	&	\sepcomma{1412} & 4 &  \secformat{0.085}	&	\secformat{10.739}\hspace*{0.2cm}*\\ 
		\name{prioritized arbiter unreal3} & \sepcomma{10588}	&	\sepcomma{18196} & 4 &  \secformat{14.389}	&	\secformat{2273.01}\hspace*{0.2cm}*\\ 
		\name{RegManager} & \sepcomma{20}	&	\sepcomma{28} & 3 &  \secformat{0.004}	&	\secformat{0.005}\\ 
		\name{robot grid} & \sepcomma{3136}	&	\sepcomma{5408} & 3 &  \secformat{0.091}	&		\timeout\\ 
		\name{RotationCalculator} & \sepcomma{560}	&	\sepcomma{856} & 3 &  \secformat{0.023}	&	\secformat{0.295}\\ 
		\name{round robin arbiter} & \sepcomma{125}	&	\sepcomma{190} & 3 &  \secformat{0.007}	&	\secformat{2.067}\\ 
		\name{round robin arbiter unreal1} & \sepcomma{157}	&	\sepcomma{242} & 3 &  \secformat{0.008}	&	\secformat{3.642}\\ 
		\name{round robin arbiter unreal2} & \sepcomma{157}	&	\sepcomma{242} & 3 &  \secformat{0.009}	&	\secformat{3.689}\\ 
		\name{round robin arbiter unreal3} & \sepcomma{2087}	&	\sepcomma{3388} & 3 &  \secformat{0.048}	&		\timeout\\ 
		\name{SensorRegister} & \sepcomma{12}	&	\sepcomma{16} & 3 &  \secformat{0.007}	&	\secformat{0.007}\\ 
		\name{Sensor} & \sepcomma{7905}	&	\sepcomma{13596} & 4 &  \secformat{8.325}	&		\timeout\\ 
		\name{simple arbiter} & \sepcomma{662}	&	\sepcomma{1184} & 3 &  \secformat{0.039}	&	\secformat{8.903}\\ 
		\name{simple arbiter unreal1} & \sepcomma{2178}	&	\sepcomma{3676} & 3 &  \secformat{0.17}	&	\secformat{1985.03}\\ 
		\name{simple arbiter unreal2} & \sepcomma{6703}	&	\sepcomma{11846} & 3 &  \secformat{1.408}	&		\timeout\\ 
		\name{simple arbiter unreal3} & \sepcomma{45260}	&	\sepcomma{80600} & 3 &  \secformat{201.107}	&	\secformat{3202.98}\\ 
		\name{SliderDefault} & \sepcomma{751}	&	\sepcomma{1250} & 3 &  \secformat{0.029}	&	\secformat{80.1}\\ 
		\name{SPIReadClk} & \sepcomma{37}	&	\sepcomma{54} & 3 &  \secformat{0.002}	&	\secformat{0.003}\\ 
		\name{SPIReadSdi} & \sepcomma{178}	&	\sepcomma{284} & 3 &  \secformat{0.008}	&	\secformat{0.477}\\ 
		\name{SPIWriteClk} & \sepcomma{144}	&	\sepcomma{220} & 3 &  \secformat{0.004}	&	\secformat{0.018}\\ 
		\name{SPIWriteSdi} & \sepcomma{547}	&	\sepcomma{830} & 3 &  \secformat{0.02}	&	\secformat{1.077}\\ 
		\name{starve.ehoa} & \sepcomma{13}	&	\sepcomma{18} & 3 &  \secformat{0.002}	&	\secformat{0.003}\\ 
		\name{starve-smart.ehoa} & \sepcomma{17}	&	\sepcomma{22} & 3 &  \secformat{0.002}	&	\secformat{0.02}\\ 
		\name{test2.ehoa} & \sepcomma{176}	&	\sepcomma{280} & 4 &  \secformat{0.008}	&	\secformat{7.772}\\ 
		\name{TorcsAccelerating} & \sepcomma{21}	&	\sepcomma{30} & 3 &  \secformat{0.007}	&	\secformat{0.008}\\ 
		\name{TorcsGearing} & \sepcomma{27}	&	\sepcomma{38} & 3 &  \secformat{0.002}	&	\secformat{0.006}\\ 
		\name{TorcsSimple} & \sepcomma{85}	&	\sepcomma{130} & 3 &  \secformat{0.003}	&	\secformat{0.017}\\ 
		\name{TorcsSteeringSimple} & \sepcomma{189}	&	\sepcomma{306} & 3 &  \secformat{0.005}	&	\secformat{0.263}\\ 
		\name{TorcsSteeringSmart} & \sepcomma{404}	&	\sepcomma{628} & 3 &  \secformat{0.013}	&	\secformat{2.859}\\ 
		\name{TwoCounters2} & \sepcomma{39747}	&	\sepcomma{61044} & 3 &  \secformat{140.334}	&		\timeout\\ 
		\name{TwoCounters3} & \sepcomma{30511}	&	\sepcomma{46672} & 3 &  \secformat{78.866}	&		\timeout\\ 
		\name{TwoCounters4} & \sepcomma{46702}	&	\sepcomma{72904} & 3 &  \secformat{195.105}	&		\timeout\\ 
		\name{TwoCountersInRangeA0} & \sepcomma{22231}	&	\sepcomma{34212} & 3 &  \secformat{42.399}	&		\timeout\\ 
		\name{TwoCountersInRangeA1} & \sepcomma{31271}	&	\sepcomma{48192} & 3 &  \secformat{84.137}	&		\timeout\\ 
		\name{TwoCountersInRangeA2} & \sepcomma{44750}	&	\sepcomma{69000} & 3 &  \secformat{179.063}	&		\timeout\\ 
		\name{TwoCountersInRangeA3} & \sepcomma{45122}	&	\sepcomma{69744} & 3 &  \secformat{185.008}	&		\timeout\\ 
		\name{TwoCountersInRangeA4} & \sepcomma{45494}	&	\sepcomma{70488} & 3 &  \secformat{183.387}	&		\timeout\\ 
		\name{TwoCountersInRangeA5} & \sepcomma{46098}	&	\sepcomma{71696} & 3 &  \secformat{189.761}	&		\timeout\\ 
		\name{TwoCountersInRangeA6} & \sepcomma{46702}	&	\sepcomma{72904} & 3 &  \secformat{192.715}	&		\timeout\\ 
		\name{TwoCountersInRangeM0} & \sepcomma{46702}	&	\sepcomma{72904} & 3 &  \secformat{196.336}	&		\timeout\\ 
		\name{TwoCountersInRangeM1} & \sepcomma{46702}	&	\sepcomma{72904} & 3 &  \secformat{194.149}	&		\timeout\\ 
		\name{TwoCountersInRangeM2} & \sepcomma{46098}	&	\sepcomma{71696} & 3 &  \secformat{189.046}	&		\timeout\\ 
		\name{TwoCountersInRangeM3} & \sepcomma{46098}	&	\sepcomma{71696} & 3 &  \secformat{186.365}	&		\timeout\\ 
		\name{TwoCountersInRangeM4} & \sepcomma{46098}	&	\sepcomma{71696} & 3 &  \secformat{187.458}	&		\timeout\\ 
		\name{TwoCountersInRangeM5} & \sepcomma{46098}	&	\sepcomma{71696} & 3 &  \secformat{189.278}	&		\timeout\\ 
		\name{TwoCountersInRange} & \sepcomma{22231}	&	\sepcomma{34212} & 3 &  \secformat{42.468}	&		\timeout\\ 
		\name{TwoCountersRefinedRefined} & \sepcomma{1933}	&	\sepcomma{3140} & 3 &  \secformat{0.209}	&	\secformat{43.752}\\ 
		\name{TwoCountersRefined} & \sepcomma{1961}	&	\sepcomma{3196} & 3 &  \secformat{0.217}	&	\secformat{45.308}\\ 
		\name{TwoCounters} & \sepcomma{1216}	&	\sepcomma{1970} & 3 &  \secformat{0.086}	&	\secformat{14.067}\\ 
		\name{UnderapproxDemo2} & \sepcomma{42}	&	\sepcomma{60} & 3 &  \secformat{0.003}	&	\secformat{0.012}\\ 
		\name{UnderapproxDemo} & \sepcomma{20}	&	\sepcomma{24} & 3 &  \secformat{0.003}	&	\secformat{0.007}\\ 
		\name{Zoo10} & \sepcomma{479}	&	\sepcomma{768} & 3 &  \secformat{0.02}	&	\secformat{2.868}\\ 
		\name{Zoo5} & \sepcomma{479}	&	\sepcomma{768} & 3 &  \secformat{0.021}	&	\secformat{2.872}\\ 
		\hline
	\end{longtable}

\end{document}